\DeclarePairedDelimiter\ceil{\lceil}{\rceil}
\DeclarePairedDelimiter\floor{\lfloor}{\rfloor}
\DeclarePairedDelimiter\bbracket{\llbracket}{\rrbracket}
\newcommand{\agree}{{\rm agree}}
\newcommand{\F}{\mathbb{F}}
\newcommand{\bA}{\mathbb A}
\newcommand{\bE}{\mathbb E}
\newcommand{\bI}{\mathbb I}
\newcommand{\bM}{\mathbb M}
\newcommand{\bP}{\mathbb P}
\newcommand{\bW}{\mathbb W}
\newcommand{\bV}{\mathbb V}
\DeclareMathAlphabet{\mathbfsl}{OT1}{ppl}{b}{it} %{OT1}{cmr}{bx}{it}
\newcommand{\ba}{\mathbfsl{a}}
\newcommand{\bb}{\mathbfsl{b}}
\newcommand{\bc}{\mathbfsl{c}}
\newcommand{\bs}{\mathbfsl{s}}
\newcommand{\bt}{\mathbfsl{t}}
\newcommand{\bu}{\mathbfsl{u}}
\newcommand{\bv}{\mathbfsl{v}}
\newcommand{\bw}{\mathbfsl{w}}
\newcommand{\bp}{\mathbfsl{p}}
\newcommand{\bsg}{\pmb{\sigma}}
\newcommand{\bcc}{\mathbfsl{C}}
\newcommand{\bcs}{\mathbfsl{S}}
\newcommand{\bct}{\mathbfsl{T}}
\newcommand{\bx}{\mathbfsl{x}}
\newcommand{\bai}{\bar i}
\newcommand{\baj}{\bar j}
\newcommand{\bam}{\bar m}
\newcommand{\hai}{\hat i}
\newcommand{\haj}{\hat j}
\newcommand{\cC}{{\mathcal C}}
\newcommand{\cG}{{\mathcal G}}
\newcommand{\cM}{{\mathcal M}}
\newcommand{\hav}{\hat \bv}
\newcommand{\bav}{\bar \bc}
\newcommand{\red}{{\rm red}}
\newcommand{\Crs}{{\mathcal C}_{\rm RS}}
\newcommand{\Diag}{{\rm Diag}}
\newcommand{\enc}{{\rm enc}_{\rm RS}}
\newcommand{\dec}{{\rm dec}_{\rm RS}}
\newcommand{\decG}{{\rm dec}_{\rm Gray}}
\newcommand{\ppmod}[1]{~({\rm mod~}#1)}
\crefname{hypothesis}{Hypothesis}{Hypotheses}
\title{Robust Positioning Patterns with Low Redundancy\thanks{Part of results in this paper were presented at SODA 2019.}}
\author{Yeow Meng Chee\thanks{Department of Industrial Systems Engineering and Management, National University of Singapore, Singapore
  (\email{pvocym@nus.edu.sg}).  This work was done while the author was with School of Physical and Mathematical Sciences,  Nanyang Technological University, Singapore. }
\and Duc Tu Dao\thanks{School of Physical and Mathematical Sciences,  Nanyang Technological University, Singapore  (\email{daoductu001@ntu.edu.sg, hmkiah@ntu.edu.sg, lingsan@ntu.edu.sg}).}
\and Han Mao Kiah\footnotemark[3]
\and San Ling\footnotemark[3]
\and Hengjia Wei\thanks{Department of Electrical and Computer Engineering, Ben-Gurion University of the Negev, Israel
  (\email{hjwei05@gmail.com}).  This work was done while the author was with School of Physical and Mathematical Sciences,  Nanyang Technological University, Singapore. }
}
\begin{document}

\maketitle

% REQUIRED
\begin{abstract}
   A robust positioning pattern is a large array that allows a mobile device to locate its position
by reading a possibly corrupted small window around it. In this paper, we provide constructions 
of binary positioning patterns, equipped with efficient locating algorithms, that are robust to a constant number of errors and have redundancy
within a constant factor of optimality. Furthermore, we modify our constructions to correct rank errors and obtain binary  positioning patterns robust to any errors of rank less than a constant number. Additionally,  we  construct $q$-ary robust positioning sequences robust to a  large  number of errors, some of which have length attaining the upper bound.

Our construction of binary positioning sequences  that are robust to a constant number of errors has the least known redundancy amongst those explicit constructions with efficient locating algorithms. On the other hand, for binary robust positioning arrays,
our construction is the first explicit construction whose redundancy is within a constant factor of optimality.
The locating algorithms accompanying both  constructions run in time cubic in sequence length or array dimension. 
\end{abstract}

% REQUIRED
\begin{keywords}
Robust positioning patterns, Gray codes, Reed-Solomon codes, maximum rank distance codes
\end{keywords}

% REQUIRED
\begin{AMS}
  05B30, 94C30
\end{AMS}

\section{Introduction}
Consider the problem of determining the {\em global} position of a mobile device in a wide environment by simply sensing a small {\em local} area around the device. 
This problem is fundamental in robotics and have practical applications in areas, such as robot localization \cite{Sheinerman:2001}, camera localization   \cite{Szentandrasi:2012}, 3D surface imaging by structured light \cite{Geng:2011}, projected touchscreens \cite{Daietal:2014} and smart styli \cite{Stylus}. 

A classic solution is via the use of positioning patterns.  A {\it positioning pattern} is a large array of dimension  $N_1 \times N_2$, in which 
all contiguous  subarrays of dimension $n_1\times n_2$ are distinct from each other. 
The dimension $n_1\times n_2$ is called the {\it strength} of the positioning pattern. In the special case where $N_1=n_1=1$, we refer to the one-dimensional positioning pattern as {\it positioning sequence}. In practical applications, the positioning pattern is embedded in the wide area, and 
the mobile device reads a small {\it window} of the pattern,  i.e., a subword of length $n$ or a subarray of dimension  $n_1\times n_2$.  Then due to the uniqueness of the window's subpattern, we are able to infer the position of the device.
Positioning patterns have been extensively studied \cite{macwilliams1976pseudo, KumarWei:1992, paterson1994perfect, Mitchelletal:1996, etzion1988constructions}  and 
classical examples include de Bruijn sequences, $m$-sequences, perfect maps (also known as de Bruijn tori) and pseudorandom arrays.

In reality, physical devices are prone to error and we want to locate a device  even when we read a small window erroneously. To this end, we study a class of positioning patterns, called {\it robust positioning patterns}, 
where the subpatterns in distinct windows are far apart from each other. 
In other words, the subpatterns in all windows of a robust positioning pattern form an error-correcting code. 

The study on robust positioning focuses on arrays or sequences in the Hamming metric and history can be traced back to the work of Kumar and Wei \cite{KumarWei:1992} on the minimum distance of partial periods of an $m$-sequence. Recently, Berkowitz and Kopparty \cite{BerkowitzKopparty:2016} presented explicit constructions of robust positioning patterns, along with  {\em efficient} locating algorithms. 
In particular, Berkowitz and Kopparty constructed high-rate $q$-ary robust positioning patterns (both one- and two-dimensional patterns) that locate a position even if a constant {\em fraction} of entries in a window are erroneous.
In the regime where a constant {\em number} of errors are present in a window, the authors provided constructions with redundancy within a constant factor of optimality when the alphabet size is sufficiently large.
When $q=2$, the authors provided one-dimensional positioning patterns robust to a constant number of errors, 
but the result relies on the existence of suitable Mersenne-like primes. 
Efficient positioning in binary two-dimensional patterns robust to a constant number of errors remains open.

In this paper, we study both  binary positioning patterns and $q$-ary positioning sequences.  For {\em binary} positioning patterns that are robust to a constant {\em number} of errors,
without relying on any unproven conjectures, we provide constructions for both one- and two-dimensional patterns 
whose redundancies are within a constant factor of optimality 
(in fact, we reduce the constant factor in the case for the one-dimensional pattern).
Along with these patterns, we propose efficient locating algorithms with complexity $O(n^3)$ or $O((n_1n_2)^3)$, 
where $n$ or $n_1\times n_2$ is the strength of the pattern.
Our construction is based on  $d$-auto-cyclic vectors, Reed-Solomon codes and Gray codes, and can be further modified to correct  errors of {\em rank} less than a constant {\em number}.  For {\it $q$-ary}  positioning sequences,  we modify  Berkowitz and Kopparty's construction to produce sequences robust to larger fraction of errors. We also determine the maximum length of some robust positioning sequences when the distance is large enough.

\section{Preliminaries and Contributions}

For integers $i,j$ with $i<j$, let $[i,j]$ denote  the set of integers $\{i,i+1,i+2,\ldots,j\}$. 
For an integer $N\geq 2$,  let $\bbracket{N}$ denote the set $[0,N-1]$. 
Let $\Sigma$ be an alphabet with $q$ symbols and we index an array of dimension $N_1\times N_2$ using the set $\bbracket{N_1}\times\bbracket{N_2}$.
In particular, for an array $\bA= (a_{ij}) \in \Sigma^{N_1\times N_2}$, we use $\bA[i,i+n_1-1][j,j+n_2-1]$ to denote the $n_1\times n_2$ cyclical contiguous subarray of $\bA$ whose top-left cell is $a_{ij}$; in the one-dimensional  case, for a sequence $\bs=s_0s_1s_2\cdots s_{N-1}  \in \Sigma^N$, we use $\bs[i,i+n-1]$ to denote the length-$n$ cyclical contiguous subword of $\bs$ starting at $s_{i}$.

Denote the Hamming weight of a matrix $\bV$ by $w_H(\bV)$.
For two matrices $\bV$ and $\bW$ of the same dimension $N_1\times N_2$, let $\agree(\bV,\bW)$ be the number of positions at which the corresponding entries  are the same and $d_H(\bV,\bW)$ be the Hamming distance between them.
In other words, $\agree(\bV,\bW)+d_H(\bV,\bW)=N_1N_2$. 

Without loss of generality, we assume that $n_1 \leq  n_2$.  For an $n_1\times n_2$  window, define its {\it area}  to be $n_1n_2$ and its {\it thickness} to be  $(\log_q n_1)/(\log_q n_2)$.

A $q$-ary {\it robust positioning array} (RPA) of strength $n_1\times n_2$ and distance $d$  
is an array $\bA$ over $\Sigma$ in which every pair of rectangular subarrays of dimension  $n_1\times n_2$ is of Hamming distance at least $d$ apart. 
In other words, $d_H(\bA[i,i+n_1-1][j,j+n_2-1],\bA[i',i'+n_1-1][j',j'+n_2-1])\ge d$ for all distinct $(i,j),(i',j')\in \bbracket{N_1-n_1+1}\times  \bbracket{N_2-n_2+1}$.
We denote such array as an $(n_1\times n_2, d)_q$-RPA. %where $q$ can be suppressed when $q=2$. 
For an $(n_1\times n_2,d)_q$-RPA of dimension  $N_1\times N_2$, 
define its {\it rate} to be $(\log_q N_1N_2)/(n_1n_2)$
 %$\frac{\log_q (N_1N_2)}{n_1n_2}$ 
 and define its {\it redundancy} to be $n_1n_2-\log_q (N_1N_2)$.
 Given $q$, $n_1$, $n_2$ and $d$, we are interested in the minimum redundancy of 
 an $(n_1\times n_2,d)_q$-RPA of dimension  {$N_1\times N_2$} and 
 denote this quantity by $\red_q(n_1\times n_2,d)$. When $q=2$, we suppress $q$ in the notation.
  
Since all the subarrays of dimension  $n_1\times n_2$ in an $(n_1\times n_2,d)_q$-RPA  form an error-correcting code of size $(N_1-n_1+1)(N_2-n_2+1)$ with minimum distance $d$,  we have the following bound on $\red_q(n_1\times n_2,d)$.

\begin{proposition}[Sphere-packing Bound]\label{sphere-packing-bound}
For all $q,n_1,n_2$ and $d$, we have that  $$\red_q(n_1\times n_2,d)\ge t\log_q (n_1n_2) + O(1),$$
where $t=\floor{(d-1)/2}$.
\end{proposition}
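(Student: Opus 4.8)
The plan is to reduce this to the classical sphere-packing (Hamming) bound for error-correcting codes over an alphabet of size $q$. The key observation, already noted in the paragraph preceding the statement, is that the set of all $n_1\times n_2$ contiguous subarrays of an $(n_1\times n_2,d)_q$-RPA of dimension $N_1\times N_2$ forms a code $\cC$ of length $n_1n_2$ over $\Sigma$ with minimum Hamming distance at least $d$ and with $|\cC| = (N_1-n_1+1)(N_2-n_2+1)$. (One must check these subarrays are genuinely distinct as codewords, which is immediate since $d\ge 1$.) Thus the redundancy $n_1n_2 - \log_q(N_1N_2)$ is controlled by $n_1n_2 - \log_q|\cC|$ up to an additive constant: indeed $\log_q(N_1N_2) - \log_q|\cC| = \log_q\big(N_1N_2/((N_1-n_1+1)(N_2-n_2+1))\big) = O(1)$ whenever $N_i \le c\,(N_i - n_i + 1)$ for a constant $c$, which holds in the nontrivial regime (a positioning array is only useful when $N_i$ is not dwarfed by $n_i$; more carefully, since $|\cC|\ge q^{d}$ is forced to be large while each $n_i$ is fixed, $N_i - n_i + 1$ is comparable to $N_i$). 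So it suffices to lower bound $n_1n_2 - \log_q|\cC|$ by $t\log_q(n_1n_2) + O(1)$ with $t = \lfloor (d-1)/2\rfloor$.

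Next I would invoke the sphere-packing bound for $\cC$: the Hamming balls of radius $t$ around the codewords are pairwise disjoint (since $d \ge 2t+1$), each such ball in $\Sigma^{n_1n_2}$ has size $\sum_{i=0}^{t}\binom{n_1n_2}{i}(q-1)^i$, and these balls fit inside $\Sigma^{n_1n_2}$, so
\begin{equation*}
|\cC|\cdot \sum_{i=0}^{t}\binom{n_1n_2}{i}(q-1)^i \le q^{n_1n_2}.
\end{equation*}
Taking $\log_q$ and rearranging gives
\begin{equation*}
n_1n_2 - \log_q|\cC| \ge \log_q\!\left(\sum_{i=0}^{t}\binom{n_1n_2}{i}(q-1)^i\right) \ge \log_q\!\left(\binom{n_1n_2}{t}(q-1)^t\right).
\end{equation*}
The final step is the routine estimate $\binom{n_1n_2}{t}\ge \big((n_1n_2)/t\big)^t$ (or the cruder $\binom{n_1n_2}{t}\ge (n_1n_2 - t + 1)^t/t!$), so that $\log_q\binom{n_1n_2}{t} = t\log_q(n_1n_2) - t\log_q t + O(1) = t\log_q(n_1n_2) + O(1)$ since $t$ and $q$ are treated as constants; the $(q-1)^t$ factor is a further nonnegative constant. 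Combining the two displayed inequalities and absorbing the $O(1)$ discrepancy between $\log_q(N_1N_2)$ and $\log_q|\cC|$ yields $\red_q(n_1\times n_2,d)\ge t\log_q(n_1n_2) + O(1)$, as claimed.

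The only mildly delicate point — and the one I would be most careful about — is the precise sense of the $O(1)$ term and which quantities it is allowed to depend on. Here $q$ and $d$ (hence $t$) are fixed constants and the asymptotics are in $n_1, n_2$ (equivalently in the area $n_1n_2\to\infty$); the hidden constant may depend on $q$ and $d$ but not on $n_1, n_2, N_1, N_2$. Under that reading, the gap $\log_q(N_1N_2) - \log_q\big((N_1-n_1+1)(N_2-n_2+1)\big)$ is genuinely $O(1)$: for the array to satisfy the RPA property with $|\cC| = (N_1-n_1+1)(N_2-n_2+1)\ge q^{d}$ one needs $N_i - n_i + 1 \ge 1$, and in any regime of interest $N_i$ grows while $n_i$ stays fixed, forcing $N_i/(N_i-n_i+1) = O(1)$. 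If one wanted to be scrupulous one could simply phrase the conclusion as a bound on $n_1n_2 - \log_q|\cC|$ and note that this equals $\red_q(n_1\times n_2,d)$ up to an additive constant; no real content is lost. Everything else is the textbook sphere-packing argument plus binomial-coefficient bookkeeping.
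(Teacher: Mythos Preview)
Your proposal is correct and matches the paper's approach exactly: the paper gives no proof beyond the sentence immediately preceding the proposition (the subarrays form a code of size $(N_1-n_1+1)(N_2-n_2+1)$ with minimum distance $d$), and you have simply fleshed out the standard sphere-packing calculation that this sentence is pointing to. One minor slip: in your final paragraph you write that ``each $n_i$ is fixed,'' but the asymptotics here are in $n_1n_2\to\infty$ with $q,d$ fixed; the cleaner way to handle the gap between $\log_q(N_1N_2)$ and $\log_q|\cC|$ is to note that $N_i\le n_i(N_i-n_i+1)$ always, so the discrepancy is at most $\log_q(n_1n_2)$, and when this term matters the redundancy is already $n_1n_2-O(\log_q(n_1n_2))\gg t\log_q(n_1n_2)$ anyway.
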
 
 
In the special where $N_1=n_1=1$, we refer to the one-dimensional  $q$-ary  robust positioning array of strength $1\times n$ and distance $d$  as {\it robust positioning sequence }(RPS) and denote it as $(n,d)_q$-RPS. The maximum length of an $(n,d)_q$-RPS is denoted by $P_q(n,d)$. So the minimum redundancy $\red_q(n,d)=  n-\log_q P_q(n,d)\geq t\log_q n+O(1)$, where $t=\floor{(d-1)/2}$.  
%{\color{blue} Berkowitz and Kopparty \cite{BerkowitzKopparty:2016} pointed out that a simple use of Lov\'asz Local Lemma demonstrates 
%that $\red_q(n,d)\leq C_d \log_q n$ for some constant dependent on $d$.}
%

\subsection{Previous Work}\label{sec:prev}
\subsubsection*{One-dimensional RPS} De Bruijn sequences and $m$-sequences are examples of positioning sequences. 
Decodable de Bruijn sequences can be found in Mitchell {\em et al.}~\cite{Mitchelletal:1996}. 
In 1992, Kumar and Wei \cite{KumarWei:1992} studied $m$-sequences with error-correcting ability.  
Using random irreducible linear feedback shift register sequences, they showed the existence of a binary sequence of length $2^n-1$ in which any pair of subwords of length approximately $n+d\log n$ has Hamming distance at least $d$ (and at most $2d$) for $d\leq \sqrt{n}$. Notably, this shows the existence of a sequence that achieves the GV bound whenever $d\leq \sqrt{n}$. 
In 2008, Hagita et al.\,\cite{Hagitaetal:2008} presented constructions for almost optimal $(n,3)$-RPSs. 
However their constructions are based on  a  conjecture on the existence of a certain type of primitive polynomials. 
In these constructions, no efficient locating algorithm was provided.

Recently, Berkowitz and Kopparty \cite{BerkowitzKopparty:2016} presented explicit constructions of robust positioning sequences with efficient locating algorithms. 
For $q=n+1$, they constructed positioning  sequences of length $q^{n-3d-O(1)}$.
%In other words, $\red_q(q+1,d)\ge $
%while both the sphere-packing bound and the Singleton bound show that $P_q(n,d)\leq q^{n-d+O(1)}$.
For binary robust positioning sequences, they  proposed an ``augmented" code concatenation scheme and constructed a class of  binary sequences with constant relative distance $\delta$.  
They also studied  binary sequences with constant distance $d$. 
However, their result relies on an open conjecture on the existence of suitable Mersenne-like primes.
Furthermore, assuming the correctness of the conjecture, the redundancy of the $(n,d)$-RPS in their construction
is at least $9d\log n$.
More recently, Wang {\em et al.} studied the problem under a probabilistic noise model and provided efficient
algorithms to locate the position with high probability.

%More recently, Wang {\em et al.}  \cite{Wangetal:2017} proved that the Gilbert-Varshamov (GV) bound can be attained. Specifically,  using the probabilistic method, they showed that $\red(n,d)\le d\log n+\log n +o(1)$. 
%%$$P(n,d) \geq \frac{2^n}{16n\sum_{i=0}^{d} {{n}\choose {i}}}+n-1.$$  
%In addition, Wang {\em et al.} studied the problem under a probabilistic noise model and provided efficient
%%{\color{red}efficient  (Yes, the complexity is $O(n\log n)$)}  
%algorithms to locate the position with high probability. 

%\vspace{3mm}

\subsubsection*{Two-dimensional RPA} 
When $d=1$, perfect maps and pseudorandom arrays have been studied extensively  
as the two-dimensional generalization of de Bruijn sequences and $m$-sequence \cite{etzion1988constructions,paterson1994perfect,macwilliams1976pseudo}. 
For large values of $d$, Bruckstein {\em et al.} \cite{bruckstein2012simple} constructed a class of binary RPAs
that correctly finds the location provided
less than a quarter of the bits in each row and less than half of the bits in each column are in error.  
Berkowitz and Kopparty \cite{BerkowitzKopparty:2016} provided efficient constructions of  high rate, constant relative distance RPAs over large $q$-ary alphabets. 
In the same paper, they mentioned that these $q$-ary arrays can be used to 
construct binary RPAs of high rate and constant relative distance. 
They also remarked that their methods  {were} unable to construct RPAs with optimal redundancy 
when the distance is constant.

\subsection{Our Contributions}

We provide explicit constructions for binary  RPSs and RPAs with efficient locating algorithms for fixed $d$.
The locating algorithms run in time cubic in window length or window area,
independent of the distance $d$. We also construct RPSs of high rate and  asymptotically optimal RPSs  when $d$ is large. 
Our contributions are as follow.
\begin{enumerate}
\item[(A)] %For $q=2$ and $d=O(1)$, 
In Section~\ref{Sect:biRPSconstantd}, we provide an explicit construction of $(n,d)$-RPSs  with redundancy at most $3 d\log n +6.5 \log n+O(1)$, along with an efficient locating algorithm of complexity $O(n^3)$,
for  fixed $d$.
This improves on Berkowitz and Kopparty's \cite{BerkowitzKopparty:2016} construction 
that requires $9d\log n$ redundancy. 
Note that the  sphere-packing bound suggests that the redundancy is 
$\lfloor \frac{d-1}{2}\rfloor \log n-O(1)$.
%Our explicit construction and efficient locating algorithm only  sacrifice a bit of redundancy.  }
\item[(B)] %For $q=2$, $d=O(1)$ 
Let $\bW$ be a  window of area $A$ and  thickness bounded by a constant.
In Section~\ref{Sect:biRPAconstantd}, we provide  an explicit construction of   binary RPAs for $\bW$ with redundancy at most $4.21d\log A +{36.89}\log A+o(1)$, along with  an efficient locating algorithm of complexity $O(A^3)$, for fixed distance $d$.  This is the first infinite family of RPAs with efficient locating algorithms whose redundancy is within a constant factor of the optimality, i.e.,  $\lfloor \frac{d-1}{2}\rfloor \log A -O(1)$.  In Section~\ref{Sect:biRPAconstantrd}, this construction is modified to produce positioning arrays which are robust to any errors of rank no more than a constant number. 
\item[(C)]
In Section~\ref{Sect:qaryRPS}, we modify the construction of Berkowitz and Kopparty for  $(n,\delta n)_q$-RPSs by doubling the size of the alphabet. The relative distance $\delta$  is improved from $\max\left\{\frac{1-R}{3}, 1-3R\right\}$ to $\max\left\{\frac{1-R}{2}, 1-2R\right\}$, where $R$ is the rate. In contrast, the upper bound on the relative distance is $1-R+o(1)$.
\item[(D)]
We determine the exact value of $P(n,d)$  for $d \geq \floor*{2n/3}$ in Section~\ref{Sect:optimalbinaryRPS} and construct a class of asymptotically optimal $(n,n-1)_q$-RPS for $q=\Omega(n^{2+\epsilon})$ in Section~\ref{Sect:optimalqaryRPS}.
\end{enumerate}

\subsection{Our Approach for Fixed $d$}
We describe the high-level ideas behind our construction of  $(n,d)$-RPSs for fixed $d$. 
Following Berkowitz and Kopparty \cite{BerkowitzKopparty:2016}, 
we pick a $q$-ary code $\mathcal C$ whose block length corresponds to the window length
and we concatenate the codewords of ${\mathcal C}$ in some ordering to obtain our RPS. 
Hence, whenever the window coincides with a possibly erroneous codeword, 
we simply leverage on the error-correcting capability of ${\mathcal C}$ to locate the window.
The main challenge comes when the window does {\em not} coincide with a codeword.
To overcome this, we borrow the following tools.
\begin{enumerate}[(i)]
\item {\em Gray codes}. We use Gray codes to order the codewords of ${\mathcal C}$ so that certain windows of the sequence are of high Hamming distance apart. In fact, this method was used by Berkowitz and Kopparty  to construct $q$-ary RPSs of high rates.
\item {\em Markers}. To construct binary RPS, Berkowitz and Kopparty mapped the $q$-ary symbols of ${\mathcal C}$ to binary strings. 
Then they inserted short binary strings called markers into the binary sequence. 
These markers then allows one to locate the window's position relative to the codewords in ${\mathcal C}$. To further reduce redundancy, our construction utilizes a {\em $d$-auto-cyclic vector} as the marker. 

We remark that $d$-auto-cyclic vectors were introduced by Levy and Yaakobi \cite{LeviYaakobi:2017} in the context of DNA-based data storage. 
In the latter application, one objective is to design a set of primer sequences 
whose prefixes and suffixes satisfy certain distance property (see Yazdi et al. \cite{Yazdietal:2018} for more details).
Not surprisingly, $d$-auto-cyclic vectors, which are useful in the primer sequence design, turn out to be a crucial ingredient of our construction.
\end{enumerate}

\section{Binary Robust Positioning Sequences with Constant Distance}\label{Sect:biRPSconstantd}
In this section, for fixed values of $d$, we propose an explicit construction for an $(n,d)$-RPS 
whose redundancy is $3 d\log n +6.5 \log n+O(1)$.
As our construction is rather intricate, we first present the general ingredients required for constructing an RPS, and 
later provide the specific parameters to achieve the desired redundancy.

First, we review Berkowitz and Kopparty's construction \cite{BerkowitzKopparty:2016}.
Let $\bv^i$ denote the concatenation of $i$ copies of the vector $\bv$, 
and $\bv\bw$ denote the concatenation of two vectors $\bv$ and $\bw$. 
Let $\cC$ be an error-correcting code of length $n$ and minimum distance $d$.
Berkowitz and Kopparty picked {\em certain} words $\bs_0,\bs_1,\ldots,\bs_{M-1}$ from $\cC$
and concatenated them in some order to form a long sequence $\bcs=\bs_0\bs_1\cdots\bs_{M-1}$ of length $N=Mn$.
Notice that from the choice of $\cC$, we have that $d_H(\bs_i,\bs_j)\ge d$ for $0\le i<j\le M-1$.
In fact, via a careful choice of subwords and ordering, Berkowitz and Kopparty \cite{BerkowitzKopparty:2016}
are able to guarantee a certain distance property for {\em all} pairs of subwords in $\bcs$.
We modify this technique to obtain a sequence with weaker property, 
where we guarantee the distance property for {\em some} pairs of subwords in $\bcs$.
Formally, we have the following definitions.

%Now, given a window or subword $\bcs[i,i+n-1]$ for $0\le i\le N-n$, our locating task is to determine $i$.
%To do so, we write $i=an+b$, where $b=i\ppmod{n}$, and we determine $a$ and $b$ separately.
%We have the following formal definitions.

\begin{definition}
Let $\bcs$ be a sequence.  
For two subwords $\bcs[i,i+n-1]$ and $\bcs[j, j+n-1]$ of length $n$ in $\bcs$, 
we say that they \textit{start at the same modular position} if $i\equiv j \ppmod{n}$; 
otherwise, they \textit{start in different modular positions}.

A sequence $\bcs$ is called a \textit{$q$-ary modular robust positioning sequence of  strength $n$ and distance $d$}, 
or \textit{$(n,d)_q$-MRPS} for short, if %if every pair of distinct subwords of length $n$ in the same modular position is of Hamming distance at least $d$ apart.
\[ d_H(\bu,\bv)\ge d \mbox{ for $\bu,\bv$  in the same modular position.}\]
\end{definition}

Next, to construct binary RPS, Berkowitz and Kopparty \cite{BerkowitzKopparty:2016} used a short binary string called {\em marker} and a special mapping to transform symbols from a large alphabet to binary strings.
Their construction then required at least $9d \log n$ bits of redundancy and is reliant on an open conjecture about Mersenne primes.
To reduce the redundancy, we utilise	another marker sequence and introduce the notion of $d$-auto-cyclic vectors. 
%the technique in the work of {Levi and Yaakobi \cite{LeviYaakobi:2017}}

\begin{definition}[Levy and Yaakobi \cite{LeviYaakobi:2017}]
A vector $\bu\in\Sigma^\ell$ is a \textit{$d$-auto-cyclic vector} if $$d_H(\bu,0^i\bu[0,{\ell-i-1}])\geq d$$ for all $1\leq i \leq d$.
\end{definition}

Levy and Yaakobi  provided the following construction of $d$-auto-cyclic vectors.

\begin{proposition}[Levy and Yaakobi  \cite{LeviYaakobi:2017}]\label{eg-autocyclic}
Let  $\ell=d \lceil \log d \rceil+2d$. Set $\bu$ to be the vector
\begin{equation}
\label{dautoc}
\begin{split}
& \bu  =1^d\bu_0\cdots \bu_{\lceil \log d \rceil}, \mbox{ where}\\
 & \bu_i  =((1^{2^i}0^{2^i})^d)[0,d-1].
\end{split}
\end{equation}
Then $\bu$ is a $d$-auto-cyclic vector.
\end{proposition}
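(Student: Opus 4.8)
The plan is to verify directly that the vector $\bu = 1^d \bu_0 \cdots \bu_{\lceil \log d\rceil}$ defined in \eqref{dautoc} satisfies $d_H(\bu, 0^i \bu[0,\ell-i-1]) \ge d$ for every shift $i$ with $1 \le i \le d$. Fix such an $i$. The shifted vector $0^i\bu[0,\ell-i-1]$ agrees with $\bu$ in a disagreement count we must bound below by $d$; equivalently, writing $\bu = u_0 u_1 \cdots u_{\ell-1}$, we must show that the number of positions $k \in [0,\ell-1]$ with $u_k \ne (0^i\bu[0,\ell-i-1])_k$ is at least $d$. For positions $k < i$ this entry is $0$, and for $k \ge i$ it equals $u_{k-i}$; so the count is $|\{k < i : u_k = 1\}| + |\{k \ge i : u_k \ne u_{k-i}\}|$. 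Since $\bu$ begins with the block $1^d$ and $i \le d$, all of the first $i$ positions of $\bu$ are $1$, contributing exactly $i$ to the count. Hence it suffices to show that $u_k \ne u_{k-i}$ for at least $d - i$ values of $k \ge i$.

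The key observation is the choice of the blocks $\bu_j = \big((1^{2^j}0^{2^j})^d\big)[0,d-1]$, whose lengths are all $d$, together with the leading run $1^d$; the total length is $d + d(\lceil \log d\rceil + 1) = d\lceil\log d\rceil + 2d = \ell$ as claimed. The shift by $i$ with $1 \le i \le d$ has a binary expansion $i = \sum_{j} \epsilon_j 2^j$; let $j^\*$ be the highest bit set, so $2^{j^\*} \le i < 2^{j^\*+1}$ and $j^\* \le \lceil \log d\rceil$. Inside the block $\bu_{j^\*}$ the pattern alternates between runs of $1$'s and runs of $0$'s of length $2^{j^\*}$, and since $i \ge 2^{j^\*}$ but $i < 2\cdot 2^{j^\*}$, comparing $u_k$ with $u_{k-i}$ for $k$ ranging over (the tail of) this block forces a mismatch on a controlled fraction of the positions: a shift that is at least one half-period but less than a full period of an alternating $2^{j^\*}$-periodic string cannot fix it pointwise, and in fact disagrees on at least a $1/2$ fraction of any window of length $2\cdot 2^{j^\*}$ that lies inside the periodic region. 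I would make this precise by a short counting lemma about shifts of the periodic string $(1^{2^j}0^{2^j})^\infty$: for any $0 < i < 2^{j+1}$, the string and its $i$-shift disagree in at least $\min(i, 2^{j+1}-i)$ out of every $2^{j+1}$ consecutive symbols, hence in at least half the symbols of a full period. Applying this inside $\bu_{j^\*}$ (which has length $d \ge 2^{j^\*}$, enough room for the argument after accounting for the at-most-$i \le d$ positions lost to boundary effects between blocks) yields at least $d - i$ further mismatches, completing the bound.

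The one point requiring care — and the main obstacle — is the bookkeeping at the boundaries between the leading $1^d$ and the blocks $\bu_0, \bu_1, \ldots$, and between consecutive blocks: a shift by $i$ mixes entries of $\bu_{j^\*}$ with entries of $\bu_{j^\*-1}$ or of the run $1^d$, so the clean periodicity argument only applies on the sub-window of positions $k$ for which both $u_k$ and $u_{k-i}$ genuinely lie inside the same periodic block $\bu_{j^\*}$. I would argue that this sub-window still has length at least $d - i$ (since each block has length $d$ and we lose at most $i$ positions at each end), and that on it the periodicity lemma already gives the required $d-i$ mismatches, or alternatively absorb the lost positions into the $i$ mismatches already harvested from the prefix $1^d$. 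Combining the $i$ mismatches from the prefix with the $\ge d-i$ mismatches from the block $\bu_{j^\*}$ gives $d_H(\bu, 0^i\bu[0,\ell-i-1]) \ge i + (d-i) = d$, as required. Since $i$ was an arbitrary shift in $[1,d]$, this proves that $\bu$ is $d$-auto-cyclic.
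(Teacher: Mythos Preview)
The paper does not supply its own proof of this proposition; it is quoted from Levy and Yaakobi. So there is nothing in the paper to compare your argument against. That said, your outline contains a real error, not merely an omission of bookkeeping.

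Your overall plan is correct: harvest $i$ mismatches from the leading run $1^d$ (positions $0,\ldots,i-1$), then find $d-i$ further mismatches by comparing $\bu_j[m]$ with $\bu_j[m-i]$ for $m\in[i,d-1]$ inside a single block $\bu_j$. The mistake is the choice of $j$. You take $j^*$ to be the \emph{highest} set bit of $i$, so that $2^{j^*}\le i<2^{j^*+1}$, and assert that a shift of the period-$2^{j^*+1}$ string $(1^{2^{j^*}}0^{2^{j^*}})^\infty$ by such an $i$ disagrees on at least half the positions. That is false: the number of disagreements in one full period is exactly $2\min(i,\,2^{j^*+1}-i)$, which for $i$ near $2^{j^*+1}$ is only $2$, not $2^{j^*}$. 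Concretely, take $d=5$ and $i=3$, so $j^*=1$ and $\bu_1=11001$. On the sub-window $m\in\{3,4\}$ you get $\bu_1[3]=0\ne\bu_1[0]=1$ but $\bu_1[4]=1=\bu_1[1]$: only one mismatch where you needed $d-i=2$. Your ``hence at least half'' deduction from the counting lemma is a non sequitur, and the bookkeeping paragraph cannot repair this since the shortfall is genuine, not a boundary artefact.

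The fix is to pick the \emph{lowest} set bit $j_0$ of $i$. Then $i$ is an odd multiple of $2^{j_0}$, so modulo the period $2^{j_0+1}$ the shift by $i$ coincides with the shift by exactly half a period, and \emph{every} position of the periodic string flips: $\lfloor m/2^{j_0}\rfloor$ and $\lfloor(m-i)/2^{j_0}\rfloor$ differ by an odd integer, hence have opposite parity, so $\bu_{j_0}[m]\ne\bu_{j_0}[m-i]$ for all $m\in[i,d-1]$. This yields the full $d-i$ mismatches in that block, and together with the $i$ mismatches from the prefix gives at least $d$. (In the example above, $j_0=0$, $\bu_0=10101$, and both $m=3,4$ give mismatches.) Since $2^{j_0}\le i\le d$ we have $j_0\le\lceil\log d\rceil$, so the block $\bu_{j_0}$ indeed exists.
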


\begin{example}
For $d=3$, the sequence $\bu=111 \ 101 \ 110\ 111$ is a $3$-auto-cyclic vector.  
\end{example}

We also introduce the notion of window weight limited.

\begin{definition}[Levy and Yaakobi \cite{LeviYaakobi:2017}]
Let $N,k,d$ be positive integers such that $d<k<N$. We say a vector $\bv\in \F_2^N$ satisfies the $(d,k)$-\textit{window weight limited (WWL)} constraint, and is called a \textit{$(d,k)$-WWL vector}, 
if  $w_H(\bv[i,i+k-1])\geq d$ for any $0\le i\le N-k$.
\end{definition}

We are ready to present our construction.
\vspace{2mm}

\noindent{\bf Construction 1}.
%\subsection{General Construction}
%In this section we propose an explicit  construction for binary positioning sequences with constant distance. 
Given $n$ and $d$, choose $k$ such that $\ell<k$ and $k+\ell<n$, where $\ell=d \lceil \log d \rceil+2d$.
Let $\bu$ be a $d$-auto-cyclic vector of length $\ell$ (e.g., the vector from Proposition~\ref{eg-autocyclic}) and set {$\bp=0^{k}\bu$} to be a vector of length 
{$\ell_p=k+\ell$}. In addition, set $n'=n-\ell_p$.  
Our construction comprises the sequence $\bp$ and a list of length-$n'$  binary vectors $\bs_0,\bs_1,\ldots,\bs_{M-1}$ satisfying the following conditions:
\begin{enumerate}
\item[(P1)] $\bs_i$ is a $(d,k)$-WWL vector for $i\in\bbracket{M}$; 
\item[(P2)] $\bs_{i+1}[0,j-1]\bs_i[j,n'-1]$ is a $(d,k)$-WWL vector for $i\in\bbracket{M}$ and $j\in\bbracket{n'-1}$; and 
\item[(P3)] the concatenation $\bs_0\bs_1\bs_2\cdots\bs_{M-1}$ is an $(n',d)$-MRPS.
\end{enumerate}
Set $\bcs\triangleq \bp\bs_0\bp\bs_1\bp\bs_2\cdots\bp\bs_{M-1}$.

\vspace{2mm}

In the next subsection, we specify the values of $k$ and $\ell$ and provide an explicit method to construct $\bs_i$'s. Consequently, we obtain the sequence $\bcs$ and show that it has the desired redundancy. 
Prior to this, we prove that $\bcs$ is indeed an $(n,d)$-RPS.
Note that (P3) implies $\bcs$ is an $(n,d)$-MRPS. Hence, it remains to show that every two subwords in different modular positions have distance at least $d$. To do so, we have the following technical lemma.

\begin{lemma}\label{pcompare}
Consider the subword $\bw=\bcs[i_0,i_0+n-1]$ in $\bcs$. Pick $i\in \bbracket{n}$. Then the following hold.
\begin{enumerate}[(i)] 
\item If $i+i_0\equiv 0 \ppmod{n}$, then $\bw[i,i+\ell_p-1]=\bp$.
\item If $i+i_0\not \equiv 0 \ppmod{n}$, then $d_H(\bw[i,i+\ell_p-1],\bp)\geq d$.
\end{enumerate}
\end{lemma}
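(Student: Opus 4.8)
\emph{Part (i).} This is immediate from the block structure of $\bcs=\bp\bs_0\bp\bs_1\cdots\bp\bs_{M-1}$: reading indices modulo $N=Mn$, for every $m$ the marker $\bp$ occupies the coordinates $[mn,mn+\ell_p-1]$ of $\bcs$. Hence if $i+i_0\equiv 0\ppmod{n}$ then $i+i_0\equiv mn\ppmod{N}$ for some $m$, so $\bw[i,i+\ell_p-1]=\bcs[i+i_0,i+i_0+\ell_p-1]=\bp$.

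\emph{Part (ii).} Put $\bw'\triangleq\bw[i,i+\ell_p-1]=\bcs[j,j+\ell_p-1]$ with $j\triangleq i+i_0\not\equiv 0\ppmod{n}$, and let $r\in[1,n-1]$ satisfy $j\equiv mn+r\ppmod{N}$, so that $\bw'$ begins at the $r$-th coordinate of the block $\bp\bs_m$. Since $\ell_p=k+\ell<n$, the subword $\bw'$ meets at most two consecutive blocks; below we assume $n$ is large enough (as it will be for our parameters) that $\bw'$ contains at most one entire marker, the remaining boundary case being handled by the same ideas with slightly longer bookkeeping. We compare $\bw'$ with $\bp=0^k\bu$ coordinatewise, splitting on the value of $r$: \,(I)\, $r\in[\ell_p,\,n-\ell_p]$, so $\bw'$ lies inside $\bs_m$; \,(II)\, $r\in[1,\ell_p-1]$, so $\bw'$ starts inside the marker of block $m$; \,(III)\, $r\in[n-\ell_p+1,\,n-1]$, so $\bw'$ ends inside the marker of block $m+1$ (set $t\triangleq n-r\in[1,\ell_p-1]$). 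These three ranges exhaust $[1,n-1]$.

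Each case follows one of two patterns. \emph{Pattern A (large overlap).} One exhibits a set $S$ of consecutive coordinates on which one of $\bw',\bp$ vanishes while the other has weight $\ge d$, so that $d_H(\bw',\bp)\ge d$. In case (I), take $S=[0,k-1]$: then $\bp$ vanishes on $S$ while $\bw'[0,k-1]$ is a length-$k$ factor of $\bs_m$, of weight $\ge d$ by the WWL condition (P1). In case (II) with $r\ge d$, and in case (III) with $t\ge d$, such an $S$ is again found by combining (P1) with the explicit shape of the vector $\bu$ of Proposition~\ref{eg-autocyclic}: it begins and ends with $1^d$ and has $w_H(\bu)\ge d$. \emph{Pattern B (small overlap).} In case (II) with $r\le d$ one has $\bw'=0^{k-r}\bu\,\bs_m[0,r-1]$ while $\bp=0^{k-r}\,0^{r}\bu$, so after cancelling the common prefix $0^{k-r}$, writing $u_0,\dots,u_{\ell-1}$ for the entries of $\bu$,
\begin{equation*}
d_H(\bw',\bp)=d_H\bigl(\bu,\,0^{r}\bu[0,\ell-r-1]\bigr)+\bigl|\{a\in[\ell,\ell+r-1]:(\bs_m)_{a-\ell}\ne u_{a-r}\}\bigr|\ \ge\ d,
\end{equation*}
since the first summand is $\ge d$ because $\bu$ is $d$-auto-cyclic and the second is nonnegative. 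Case (III) with $t\le d$ is the mirror image: from $\bw'=\bs_m[n'-t,n'-1]\,0^k\,\bu[0,\ell-t-1]$, discarding the coordinates that depend on $\bs_m$ recovers $d_H\bigl(\bu,\,0^{t}\bu[0,\ell-t-1]\bigr)\ge d$.

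The step I expect to be the crux is Pattern B. There the WWL hypotheses contribute nothing --- the coordinates of $\bw'$ that land on $\bs_m$ are completely uncontrolled --- so all the leverage must come from the $d$-auto-cyclic property of the marker suffix $\bu$. The delicate point is to organize the coordinate bookkeeping so that exactly the right, fully determined, set of coordinates is isolated: one on which the comparison of $\bw'$ against $\bp$ collapses verbatim to that of $\bu$ against $0^{s}\bu[0,\ell-s-1]$, while every remaining coordinate can only increase the distance.
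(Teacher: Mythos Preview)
Your overall strategy matches the paper's---a case split governed by the offset $r$ of the window relative to the block structure, with the WWL hypothesis handling large offsets and the $d$-auto-cyclic property of $\bu$ handling small ones. However, there is a genuine error at the very first step of part~(ii), and it propagates through everything that follows.

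You write $\bw'\triangleq\bw[i,i+\ell_p-1]=\bcs[j,j+\ell_p-1]$ with $j=i+i_0$. This identity is false whenever $i>n-\ell_p$. In the paper's conventions $\bw[i,i+\ell_p-1]$ is the \emph{cyclic} subword of the length-$n$ word $\bw$, so when $i+\ell_p>n$ it wraps back to $\bw[0]=\bcs[i_0]$, not forward to $\bcs[i_0+n]$. Concretely, if $i_0=an+\bai$ with $\bai>\ell_p$, the tail of your $\bw'$ lives inside $\bs_{a+1}$, whereas the tail of the true cyclic subword lives inside $\bs_a$; these are different vectors. Your case split is indexed by $r=(i+i_0)\bmod n$, but whether the wrap occurs is governed by $i$ itself, not by $r$---for any fixed $r$ there are values of $i_0$ for which the corresponding $i$ forces a wrap. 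So none of your three cases is established for the object the lemma is actually about.

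The paper repairs this in one stroke: cyclically shift $\bw$ so that the marker sits at the front, obtaining $\bv=\bp\bx$ where $\bx$ is either $\bs_a$ or the splice $\bs_{a+1}[0,\bai-\ell_p-1]\bs_a[\bai-\ell_p,n'-1]$. The point of hypothesis~(P2)---which you never invoke---is precisely that this splice is again $(d,k)$-WWL, so one may treat $\bx$ uniformly. After this normalisation your case analysis (now reading $\bv[r,r+\ell_p-1]$ cyclically in $\bv$, with $\bx$ in place of $\bs_m$) goes through essentially as written; the paper carries out the same five-case split, and its wrap cases (their Cases~4 and~5) correspond to your Case~III. You should also fill in the ``Pattern~A'' sub-cases for Cases~II and~III rather than leaving them as a pointer: the argument there needs both that $\bu$ begins and ends with $1^d$ and, for $r\in[\ell_p-d+1,\ell_p-1]$, the WWL property of $\bx$, exactly as in the paper's Case~3.
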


\begin{proof}
Let $\hai$ be the unique integer of $\bbracket{n}$ such that $\hai+i_0\equiv 0\ppmod{n}$. We consider the vector $\bv$, which is obtained by shifting $\bw$ cyclically leftwards $\hai$ times.  Then it suffices to show that $\bv[0,\ell_p-1]=\bp$ and $d_H(\bv[i,i+\ell_p-1],\bp)\geq d$  for $i\in[1,n-1]$. 
Suppose  that $i_0=an+\bai$ where $\bai \in [n]$. From Construction~1 (see  Fig.~\ref{fig-SequenceShift}), 
we have that
\begin{equation*}
\bv = 
\begin{cases} 
\bp\bs_{a}, &\text{if $\bai \leq \ell_p$;} \\
\bp\bs_{a+1}[0,\bai-\ell_p-1]\bs_{a}[\bai-\ell_p,n'-1], & \textup{if $\bai > \ell_p$.}\\
\end{cases}
\end{equation*}
Hence $\bv[0,\ell_p-1]=\bp$.

\begin{figure*}[htbp]
  \centering
  \includegraphics[width=10cm, trim={0 0 0 0}, clip]{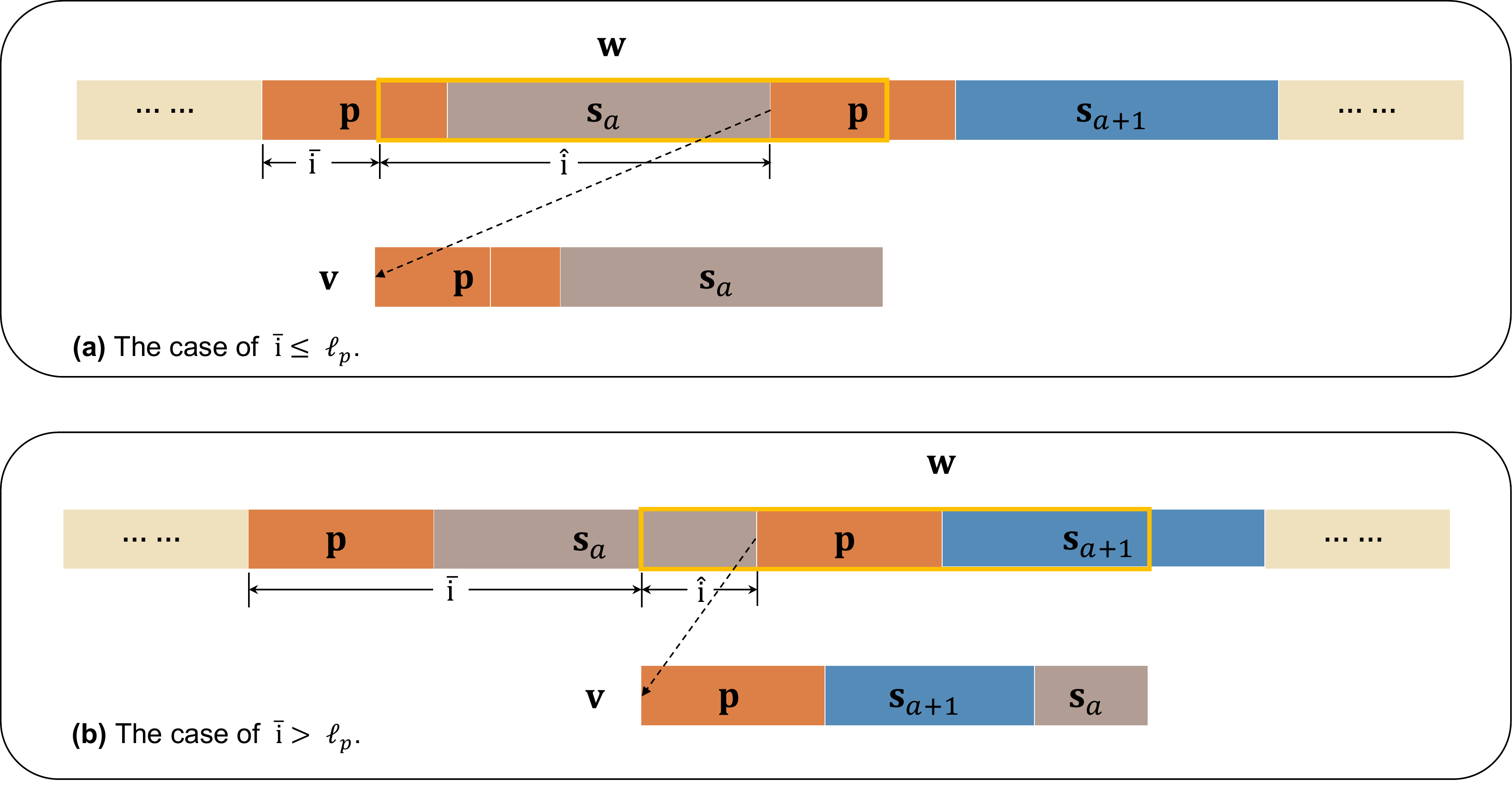}
  \vspace{-4mm}
  
  \caption{The vector $\bv$ obtained by shifting $\bw$ cyclically leftwards $\hai$ times.}
  \vspace{-3mm}
  
  \label{fig-SequenceShift}
\end{figure*}

Now we consider $\bv[i,i+\ell_p-1]$ with $i\not=0$. Since $\bs_a$ and $\bs_{a+1}$ satisfy the conditions (P1) and (P2), we can always assume that $\bv=\bp\bx$ for some $(k,d)$-WWL vector $\bx$ of length $n'$. We proceed by cases.

\vspace{2mm}
\noindent{\bf Case 1}: $i\in [1,d]$. Then
\begin{align*}
\bv[i+k-i,i+k-i+\ell-1]&=\bv[k,k+\ell-1]=\bu,\\
\bp[k-i,k-i+\ell-1]&=0^i\bu[0,\ell-i-1].
\end{align*}
Since $\bu$ is $d$-auto-cyclic, we have
\begin{align*}
  d_H(\bv[i,i+\ell_p-1],\bp)  & \geq  d_H(\bv[k,k+\ell-1], \bp[k-i,k-i+\ell-1])\\
   & = d_H(\bu,0^i\bu[0,\ell-i-1]) \geq d.
\end{align*}

\vspace{2mm}
\noindent{\bf Case 2}: $i\in[d+1,\ell_p-d]$.  Notice that $\bv=0^k\bu\bx$. Since $\ell <k$, the subword $\bv[i,i+k-1]$ should  contain either the length-$d$ prefix of $\bu$ or the length-$d$ suffix of $\bu$, both of which  are $1^d$. So the weight of $\bv[i,i+k-1]$ is at least $d$. It follows that
\begin{align*}
d_H(\bv[i,i+\ell&_p-1],\bp)  \geq d_H(\bv[i,i+k-1],\bp[0,k-1]) =d_H(\bv[i,i+k-1],0^k)\geq d.
\end{align*}

\vspace{2mm}
\noindent{\bf Case 3}: $i\in[\ell_p-d+1,n-k]$.  The subword $\bv[i,i+k-1]$ is contained in $1^d\bx$. Since $\bx$ is a $(d,k)$-WWL vector, the weight of $\bv[i,i+k-1]$ is at least $d$. Again, we have
\begin{align*}
d_H(\bv[i,i+\ell&_p-1],\bp)  \geq d_H(\bv[i,i+k-1],\bp[0,k-1])=d_H(\bv[i,i+k-1],0^k)\geq d.
\end{align*}

\vspace{2mm}
\noindent{\bf Case 4}: $i\in[n-k+1, n-d]$. Since  $i+k-n\geq 1$ and $i+k+d-1-n\leq k-1$, we have
 $\bv[i+k,i+k+d-1]=\bv[i+k-n,i+k+d-1-n]=0^{d}$. Note that $\bp[k,k+d-1]=\bu[0,d-1]=1^d$. It follows that $d_H(\bv[i,i+\ell_p-1],\bp)\geq d$.

\vspace{2mm}
\noindent {\bf Case 5}: $i \in [n-d+1,n-1]$.  Let $\delta=n-i$, then $\delta\in[1,d-1]$. We have
\begin{align*}
\bv[i+k,&i+k+\ell-1]  = \bv[n+k-\delta, n+k+\ell-\delta-1] \\ 
& =  \bv[k-\delta, k+\ell-\delta-1] =  0^\delta\bu[0,\ell-\delta-1].
\end{align*}
Since $\bp[k,k+\ell-1]=\bu$ and $\delta\in[1,d-1]$,  we have 
\begin{align*}
 d_H(&\bv[i,i+\ell_p-1],\bp) \geq d_H(\bv[i+k,i+k+\ell-1],\bp[k,k+\ell-1]) \geq d,
\end{align*}
which completes the proof. 
\end{proof}

Next, we prove that the construction is correct.

\begin{theorem}
Let $\bcs$ be the sequence constructed in Construction 1. Then $\bcs$ is an $(n,d)$-RPS.
\end{theorem}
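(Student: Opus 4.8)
The plan is to reduce the claim that $\bcs$ is an $(n,d)$-RPS to two facts: that $\bcs$ is an $(n,d)$-MRPS, and that any two length-$n$ subwords starting in different modular positions are at Hamming distance at least $d$. The first fact is immediate: by Lemma~\ref{pcompare}(i), every length-$n$ subword starting at a position $\equiv 0 \ppmod n$ contains $\bp$ as a prefix, after which it is followed by a (possibly spliced) word of the form guaranteed by (P3); hence among subwords starting at the same modular position, the distance-$\geq d$ property is inherited directly from (P3) together with the fact that $\bp$ contributes nothing (it is common to both). More carefully, after cyclically shifting both subwords to modular position $0$, both look like $\bp\bx$ and $\bp\by$ where $\bx,\by$ are length-$n'$ subwords of the $(n',d)$-MRPS $\bs_0\bs_1\cdots\bs_{M-1}$ sitting in the same modular position, so $d_H(\bp\bx,\bp\by)=d_H(\bx,\by)\geq d$ by (P3).

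For the second fact, I would take two subwords $\bw_1=\bcs[i_0,i_0+n-1]$ and $\bw_2=\bcs[j_0,j_0+n-1]$ with $i_0\not\equiv j_0\ppmod n$. Let $\hai\in\bbracket{n}$ be the shift that sends $\bw_1$ to modular position $0$; applying that same cyclic shift to $\bw_2$ takes it to a subword starting at position $i\triangleq(j_0-i_0)\bmod n$ relative to the shifted $\bw_1$, with $i\in[1,n-1]$ since the modular positions differ. Write the shifted version of $\bw_1$ as $\bv_1$; by Lemma~\ref{pcompare}(i), $\bv_1[0,\ell_p-1]=\bp$. Now I compare $\bw_2$ (shifted) to this occurrence of $\bp$: the window $\bw_2$ contains, as the subword occupying positions $[\,\cdot\,]$ aligned with $\bv_1[0,\ell_p-1]$, exactly some length-$\ell_p$ window of $\bcs$ starting at a position $\not\equiv 0\ppmod n$ (because $i\neq 0$). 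By Lemma~\ref{pcompare}(ii) applied to the subword $\bw_2$, that length-$\ell_p$ window is at distance at least $d$ from $\bp$. Since $\bv_1$ agrees with $\bp$ on exactly those $\ell_p$ coordinates, restricting attention to those coordinates gives $d_H(\bw_1,\bw_2)\geq d_H(\bp,\bw_2\text{-restricted})\geq d$.

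The one point that needs care — and which I expect to be the main obstacle — is the bookkeeping of indices: I must verify that when $\bw_1$ is cyclically shifted to modular position $0$, the coordinates of $\bw_1$ that carry the copy of $\bp$ line up with a genuine length-$\ell_p$ contiguous subword of $\bcs$ inside $\bw_2$ whose starting position is not $\equiv 0\ppmod n$, so that Lemma~\ref{pcompare}(ii) truly applies. Concretely, one should set things up so that the relevant window inside $\bw_2$ is $\bw_2[i',i'+\ell_p-1]$ for an appropriate $i'\in\bbracket n$ with $i'+j_0\not\equiv 0\ppmod n$, and then invoke Lemma~\ref{pcompare}(ii) with $\bw_2$ in the role of $\bw$ and $i'$ in the role of $i$. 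Once the alignment is pinned down, the argument is just: "on the $\ell_p$ coordinates where $\bw_1$ shows $\bp$, the word $\bw_2$ is $\geq d$-far from $\bp$, hence $\bw_1,\bw_2$ are $\geq d$-far," and combining this with the MRPS property from (P3) finishes the proof. I would also remark that the hypotheses $\ell<k$ and $k+\ell<n$ from Construction~1 are exactly what make Lemma~\ref{pcompare} available, so no extra assumptions are needed here.
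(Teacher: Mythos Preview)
Your proposal is correct and follows essentially the same approach as the paper's proof: reduce to the MRPS property for subwords in the same modular position (from (P3)), and for subwords in different modular positions pick the index $\hai$ with $i_0+\hai\equiv 0\ppmod n$, apply Lemma~\ref{pcompare}(i) to $\bw_1$ and Lemma~\ref{pcompare}(ii) to $\bw_2$ at that same index. The paper does exactly this in three lines, without the detour through cyclic shifts; your ``main obstacle'' of index bookkeeping dissolves once you note that the required $i'$ is simply $\hai$ itself, since $j_0+\hai\not\equiv 0\ppmod n$ follows immediately from $i_0\not\equiv j_0\ppmod n$.
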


\begin{proof}  Let $\bw_1$ and $\bw_2$ be two distinct subwords of length $n$ in $\bcs$.
Assume that $\bw_1=\bcs[i,i+n-1]$ and $\bw_2=\bcs[j,j+n-1]$, where $i\not = j$.  
Since $\bs_0\bs_1\cdots \bs_{M-1}$ is an $(n',d)$-MRPS, we have that $\bcs$ is an $(n,d)$-MRPS. 
Hence, $d_H(\bw_1,\bw_2)\geq d$  whenever $i\equiv j \pmod n$. 

It remains to consider the case where $i\not\equiv j \pmod n$. 
Let $\hai$ be the integer of $[n]$ such that $i+\hai\equiv 0 \ppmod{n}$. 
Hence, we have $j+\hai\not \equiv 0 \ppmod{n}$. 
Lemma~\ref{pcompare}  implies that   $\bw_1[\hai,\hai+\ell_p-1]=\bp$  and $d_H(\bw_2[\hai,\hai+\ell_p-1],\bp)\geq d$. 
Hence, $d_H(\bw_1,\bw_2)\geq d_H(\bw_1[\hai,\hai+\ell_p-1], \bw_2[\hai,\hai+\ell_p-1])\geq d$.
\end{proof}

\subsection{Sequence Construction}
Given $n$ and $d$, we provide the choice of $k$ and $\ell$ and 
construct the vectors $\bu, \bs_0, \bs_1,\ldots, \bs_{M-1}$ to satisfy conditions (P1), (P2), and (P3).

First, we set $\ell$ and $\bu$ as in \eqref{dautoc}.
Next, set {$m\triangleq ({3}/{2})\log n$, $k=3m$} and $q\triangleq2^m$.  
Let $\alpha_0,\alpha_1,\ldots,\alpha_{q-1}$ be the $q$ distinct elements in $\F_q$
and let $\phi$ be an arbitrary bijection from $\F_q$ to $\F_2^m$. 
Set $X\triangleq \{\alpha_j\in \F_q: w_H(\phi(\alpha_j))\geq d\}.$
Let $r=\sum_{i=d}^m \binom{m}{i}$, which is  the cardinality of $X$.
Recall that {$n'=n-(k+\ell)$} and our objective is to construct an $(n',d)$-MRPS.
To this end, we require the concepts of Gray codes and Reed-Solomon codes.

\begin{definition}[Gray codes]
Let $\Sigma$ be an alphabet with $q$ symbols and  $\cG =(\bsg_0,\bsg_1,\ldots,\bsg_{q^n-1})$ be a sequence of all the vectors in $\Sigma^n$.  Then $\cG$ is called an {\em $(n,q)$-Gray code} if any two adjacent vectors $\bsg_i$ and $\bsg_{i+1}$ in $\cG$ differ in only one position.
\end{definition}

\begin{theorem}[Decoding for Gray Codes~\cite{Guan:1998}]
Let $q$ and $n$ be two positive integers. There exists an $(n,q)$-Gray code $\cG =(\bsg_0,\bsg_1,\ldots,\bsg_{q^n-1})$ and 
a decoding function $\decG:\F_q^n\to \bbracket{q^n}$ such that $\decG(\bsg_i)=i$ for all $i\in\bbracket{q^n}$.
Furthermore, $\decG$ can be computed in $O(n\log^2 q)$ time.
\end{theorem}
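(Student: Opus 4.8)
The plan is to exhibit the $q$-ary \emph{reflected Gray code} together with an explicitly matched inverse, both defined recursively on $n$; throughout, identify the $q$-symbol alphabet with $\{0,1,\dots,q-1\}$. For $n=1$ take $\cG_1=(0,1,\dots,q-1)$; any two distinct length-one vectors differ in their unique coordinate, so this is a Gray code. Given $\cG_{n-1}=(\bsg_0,\bsg_1,\dots,\bsg_{q^{n-1}-1})$, build $\cG_n$ by concatenating blocks $B_0,B_1,\dots,B_{q-1}$, where $B_j$ prepends the symbol $j$ to every vector of $\cG_{n-1}$, listed in the original order when $j$ is even and in reverse order when $j$ is odd. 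Then $\cG_n$ enumerates all $q^n$ vectors of $\F_q^n$, with $B_j$ occupying the index interval $[\,jq^{n-1},(j+1)q^{n-1}-1\,]$.

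First I would verify the Gray property by induction on $n$, the case $n=1$ being immediate. For $n\ge 2$, two consecutive entries of $\cG_n$ either lie in a common block $B_j$ or straddle the boundary between $B_j$ and $B_{j+1}$. In the first case they share the leading coordinate $j$, and since reversing a list preserves adjacency, their last $n-1$ coordinates are consecutive entries of $\cG_{n-1}$, hence differ in exactly one position by the inductive hypothesis. In the second case a short case analysis on the parity of $j$ shows that the last $n-1$ coordinates of the final entry of $B_j$ and of the initial entry of $B_{j+1}$ agree (both equal $\bsg_{q^{n-1}-1}$ if $j$ is even, both equal $\bsg_0$ if $j$ is odd), while the leading coordinates are $j$ and $j+1$; so again they differ in exactly one position.

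Next I would define the decoder so that it mirrors the construction: $\decG(v_1)=v_1$ and, for $n\ge 2$,
\[
\decG(v_1,\dots,v_n)=
\begin{cases}
v_1 q^{\,n-1}+\decG(v_2,\dots,v_n), & v_1\ \text{even},\\
v_1 q^{\,n-1}+(q^{\,n-1}-1)-\decG(v_2,\dots,v_n), & v_1\ \text{odd}.
\end{cases}
\]
The identity $\decG(\bsg_i)=i$ for all $i\in\bbracket{q^n}$ is then immediate by induction: if $\bsg_i$ sits in block $B_j$ its leading coordinate is $j$, and its position within $B_j$ is $\decG$ applied to its last $n-1$ coordinates when $j$ is even, and $q^{n-1}-1$ minus that value when $j$ is odd. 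Unrolling the recursion gives the closed form
\[
\decG(v_1,\dots,v_n)=\sum_{i=1}^{n}c_i\,q^{\,n-i},
\qquad
c_i=
\begin{cases}
v_i, & v_1+\cdots+v_{i-1}\ \text{even},\\
q-1-v_i, & v_1+\cdots+v_{i-1}\ \text{odd},
\end{cases}
\]
which I would confirm by the same induction: each odd leading symbol complements (digitwise, $x\mapsto q-1-x$) all the lower-order digits, and since such complementations compose, $c_i$ is complemented precisely when an odd number of the coordinates $v_1,\dots,v_{i-1}$ are odd, i.e.\ when their sum is odd.

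For the running time, keep the index in its base-$q$ representation $(c_1,\dots,c_n)$: sweep $i=1,\dots,n$ while carrying the parity of $v_1+\cdots+v_{i-1}$ and producing each $c_i$ with at most one subtraction of $\lceil\log q\rceil$-bit integers. This amounts to $O(n)$ arithmetic operations on numbers of $O(\log q)$ bits, i.e.\ $O(n\log^2 q)$ time with schoolbook arithmetic, matching the bound of Guan~\cite{Guan:1998}. I expect the only real difficulty to be bookkeeping rather than ideas: one must be careful that the parity governing the digit $c_i$ is the partial sum of the \emph{original} coordinates $v_1,\dots,v_{i-1}$ and not of the decoded digits $c_1,\dots,c_{i-1}$, and that the forward/reverse orientation of each block is used consistently in the construction and in the decoder.
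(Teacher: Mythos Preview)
The paper does not give its own proof of this theorem: it is quoted as a known result from Guan~\cite{Guan:1998} and used as a black box in the complexity analysis of the locating algorithms. So there is no ``paper's proof'' to compare against.

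Your proposal is correct and is precisely the standard argument behind the cited result: you build the $q$-ary reflected Gray code recursively, verify the Gray property by the usual within-block/across-block case split, and invert the construction by a single left-to-right sweep that carries the parity of the prefix sum of the input digits to decide whether each digit is complemented. One cosmetic point: the step ``$O(n)$ arithmetic operations on numbers of $O(\log q)$ bits, i.e.\ $O(n\log^2 q)$ time with schoolbook arithmetic'' actually yields $O(n\log q)$ bit operations, since the only operations are parity updates and subtractions $q-1-v_i$; the stated $O(n\log^2 q)$ is of course still a valid upper bound and matches the paper's quoted complexity, so nothing is wrong, just slightly loose.
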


\begin{theorem}[Reed-Solomon code~\cite{WelchBerlekamp:1986}]\label{thm:rs}
Let $q$ be a prime power. Suppose that $k_R<n_R\le q$.
Then there exists a linear code $\Crs$ of length $n_R$, dimension $k_R$ and minimum distance $d_R\triangleq n_R-k_R+1$.
Furthermore, there exist encoding function $\enc^{(n_R,k_R)}:\F_q^{k_R}\to \Crs$ and decoding function $\dec^{(n_R,k_R)}:\F_q^{n_R}\to \Crs$ such that the following hold.
\begin{enumerate}[(i)]
\item For all $\bsg\in \F_q^{k_R}$, the $k_R$-prefix of $\enc^{(n_R,k_R)}(\bsg)$ is $\bsg$. In other words, $$\enc^{(n_R,k_R)}(\bsg)[0,k_R-1]=\bsg.$$
\item Choose $\bc\in\Crs$ and suppose that $d_H(\bav,\bc)\le (d_R-1)/2$. Then $\dec^{(n_R,k_R)}(\bav)=\bc$. Furthermore, $\dec^{(n_R,k_R)}$ can be computed in $O(n^3)$ time.
\end{enumerate}
\end{theorem}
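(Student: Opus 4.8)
The plan is to realise $\Crs$ as a systematically encoded Reed--Solomon evaluation code. Since $n_R\le q$, fix distinct elements $\beta_1,\ldots,\beta_{n_R}\in\F_q$ and let $\Crs$ be the image of the evaluation map $\mathrm{ev}\colon f\mapsto(f(\beta_1),\ldots,f(\beta_{n_R}))$ on the $\F_q$-space $\F_q[x]_{<k_R}$ of polynomials of degree $<k_R$. Since $\mathrm{ev}$ is $\F_q$-linear, $\Crs$ is a linear code. A nonzero polynomial of degree $<k_R$ has at most $k_R-1$ roots in $\F_q$, so $\mathrm{ev}$ is injective on $\F_q[x]_{<k_R}$; hence $\dim\Crs=k_R$. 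The same root count gives the distance: for $f\ne g$ in $\F_q[x]_{<k_R}$ the difference $f-g$ vanishes at at most $k_R-1$ of the $\beta_i$, so $\mathrm{ev}(f)$ and $\mathrm{ev}(g)$ agree in at most $k_R-1$ coordinates, i.e.\ $d_H(\mathrm{ev}(f),\mathrm{ev}(g))\ge n_R-k_R+1=d_R$; combined with the Singleton bound $d\le n_R-k_R+1$ this forces the minimum distance to equal $d_R$.

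For the systematic encoder of part (i): given $\bsg=(g_0,\ldots,g_{k_R-1})\in\F_q^{k_R}$, let $f_{\bsg}$ be the unique polynomial in $\F_q[x]_{<k_R}$ with $f_{\bsg}(\beta_j)=g_{j-1}$ for $1\le j\le k_R$ (Lagrange interpolation through the first $k_R$ evaluation points; uniqueness is the injectivity just established), and set $\enc^{(n_R,k_R)}(\bsg)=\mathrm{ev}(f_{\bsg})$. By construction its $k_R$-prefix is $(g_0,\ldots,g_{k_R-1})=\bsg$, and $\enc^{(n_R,k_R)}$ is a bijection onto $\Crs$; the interpolation costs $O(n_R^2)$ field operations.

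For the decoder of part (ii) I would use the Welch--Berlekamp key equation. Put $t=\lfloor(d_R-1)/2\rfloor$, so that $k_R+2t\le n_R$. Given $\bav=(y_1,\ldots,y_{n_R})$ with $d_H(\bav,\mathrm{ev}(f))\le t$ for some (a priori unknown) $f\in\F_q[x]_{<k_R}$, consider the homogeneous linear system in the coefficients of a pair $(N,E)$ with $\deg E\le t$, $\deg N<k_R+t$, and $N(\beta_i)=y_iE(\beta_i)$ for all $1\le i\le n_R$. A nonzero solution exists: take $E$ monic with roots exactly the error locations ($\deg E\le t$) and $N=fE$ ($\deg N<k_R+t$); at a non-error position $N(\beta_i)=f(\beta_i)E(\beta_i)=y_iE(\beta_i)$, and at an error position $E(\beta_i)=0$. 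Conversely, for \emph{any} nonzero solution $(N,E)$ one must have $E\ne0$ (else $N$ has $n_R$ roots and degree $<n_R$, so $N=0$), and $N-fE$ has degree $<k_R+t\le n_R-t$ yet vanishes at the $\ge n_R-t$ error-free positions $\beta_i$ (where $N(\beta_i)=y_iE(\beta_i)=f(\beta_i)E(\beta_i)$), hence $N=fE$ identically and $f=N/E$. So the decoder solves the linear system for a nonzero $(N,E)$ via Gaussian elimination, forms $N/E$ by polynomial division, and outputs $\mathrm{ev}(N/E)$; the displayed argument shows this equals $\mathrm{ev}(f)=\bc$ whenever $d_H(\bav,\bc)\le t$, which, since error counts are integers, is exactly the condition $d_H(\bav,\bc)\le(d_R-1)/2$. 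The Gaussian elimination on an $O(n_R)\times O(n_R)$ system costs $O(n_R^3)$ field operations and the division plus re-evaluation cost $O(n_R^2)$; as $n_R<n$ in every application, this is $O(n^3)$.

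The only genuinely nontrivial point — and hence the main obstacle — is the decoding correctness: one must argue that \emph{every} nonzero solution of the key equation, not merely the ``intended'' one, recovers $f$. The degree inequality $k_R+2t\le n_R$ is precisely what makes the vanishing argument for $N-fE$ go through, and it is worth checking the edge case in which the true number of errors is strictly below $t$ (then $E$ is no longer unique, but $N/E$ is still forced to equal $f$, so the output is unaffected).
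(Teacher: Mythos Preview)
Your proof is correct and follows the standard textbook route: realise $\Crs$ as an evaluation code, get the distance from the root bound, build the systematic encoder by Lagrange interpolation on the first $k_R$ evaluation points, and decode via the Welch--Berlekamp key equation solved by Gaussian elimination. The uniqueness argument for the decoder output (that \emph{any} nonzero solution $(N,E)$ of the key equation yields $N/E=f$, via the degree bound $\deg(N-fE)<k_R+t\le n_R-t$ and the $\ge n_R-t$ forced roots) is exactly the point that needs care, and you handle it cleanly, including the edge case of fewer than $t$ errors.

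There is nothing to compare against in the paper: this theorem is stated as a background result with a citation to Welch and Berlekamp and is not proved in the paper itself. Your write-up is precisely the argument that citation is pointing to. One cosmetic remark: the ``$O(n^3)$'' in the theorem statement is a slight abuse (only $n_R$ is in scope), and your parenthetical ``as $n_R<n$ in every application'' is the right way to read it; strictly the decoder runs in $O(n_R^3)$ field operations.
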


In their construction for $q$-ary RPSs, Berkowitz and Kopparty  \cite{BerkowitzKopparty:2016} used a Gray code to give an ordering to a subset of codewords in a Reed-Solomon code, and concatenated these codewords in this ordering to form the desired sequence. In our construction, we adapt the technique to obtain a family of $q$-ary vectors $\bc_i$ such that  $\bc_0\bc_1\cdots \bc_{M-1}$ is a $q$-ary MRPS. 
Then we apply the mapping $\phi$ to each $\bc_i$ and append short sequences $1^d$ in proper positions to obtain the WWL vector $\bs_i$.

Specifically, set $n_R\triangleq (n'-(2d+{2})d)/m$ and $k_R\triangleq n_R-2d-2$. Consider the Reed-Solomon code $\Crs$  from Theorem~\ref{thm:rs}.
%Then we have that $k_R <n_R< q$. It follows that for any vector $\bsg\in X^{k_R}$, there is a unique polynomial $f^{\bsg}(x)\in \F_q[x]$ of degree at most $k_R-1$ such that $f^{\bsg}(\alpha_j)=\bsg[j]$ for $0\leq j <k_R$. 
Now we provide our construction of $\bs_i$ for $i\in \bbracket{r^{k_R}}$, and consequently, the sequence $\bcs$.
\vspace{2mm}

\noindent{\bf Construction 1A}.
Let $M=r^{k_R}$ and $\cG=(\bsg_0,\bsg_1,\ldots, \bsg_{M-1})$ be a  $(k_R,r)$-Gray code over $X$. 
For $i \in \bbracket{M}$, set $\bc_i=\enc^{(n_R,k_R)}(\bsg_i)$.
%For each $i \in \bbracket{M}$, set
%\[\bc_i=(f_i(\alpha_0),f_i(\alpha_1),f_i(\alpha_2),\ldots,f_i(\alpha_{n_R-1})),\]
%where $f_i(x)\in \F_q[x]$ is the unique  interpolating polynomial of degree at most $k_R-1$ such that $f_i(\alpha_j)=\bsg_i[j]$ for all $0\leq j <k_R$.
Then for each  $\bc_i$,
construct a binary vector $\bs_i$ as
\begin{align*}
\bs_i=\phi(\bc_i&[0])\phi(\bc_i[1])\cdots\phi(\bc_i[k_R-1])1^d\phi(\bc_i[k_R])1^d\phi(\bc_i[k_R+1])\cdots1^d\phi(\bc_i[n_R-1]).
\end{align*}
Finally, let {$\bp=0^{k}\bu$}, where $\bu$ is the $d$-auto-cyclic vector in~\eqref{dautoc}. 
Construct the sequence $\bcs$ as
\[\bcs=\bp\bs_0\bp\bs_1 \cdots\bp\bs_{M-1}.\]

\begin{table*} 
\begin{center}
\footnotesize
%\caption{Notation involved in the construction of the binary $(n,d)$-RPS}\label{tab.notation}
  %\begin{tabular}{ |c | l ||c | l | c| c | c |c | c |}
  \begin{tabular}{ |c |p{10cm} |}
  \hline
     {\scriptsize Notation} &   Remark                \\ [3pt]   \hline
    $n$ & the strength of the RPS                              \\[3pt]
    $d$ & the distance of the RPS                             \\[3pt]
    $\ell$ & the length of the $d$-auto-cyclic vector    \\ [3pt]
    $m$ & $m\triangleq \frac{3}{2}\log n$                    \\ [3pt]
    $k$       &  $k\triangleq 3m$                                \\ [3pt]
    $\ell_p$ & $\ell_p\triangleq k+\ell$                                      \\ [3pt]
   $r$ & $r\triangleq\sum_{i=d}^m {m\choose i}$                \\   [3pt]
     $n'$ & $n'\triangleq n-\ell_p$                                     \\   [3pt]
     $q$ & $q\triangleq 2^m$                                                \\    [3pt]
     $n_R$ & $n_R\triangleq\frac{1}{m}(n'-(2d+2)d)$                   \\    [3pt]
      $k_R$ & $k_R\triangleq n_R-2d-2$                   \\   [3pt]
       $M$      &     $M\triangleq r^{k_R}$                 \\    [3pt]
	 $\F_q$       & the finite field with $q$ elements       \\    [3pt]
	 $\alpha_j$          &  the element in $\F_q$          \\    [3pt]
	 $\phi$           & a one-to-one map from $\F_q$ to $\F_2^m$                      \\    [3pt]
	$X$       & the set of $\alpha_j$ such that  $w_H(\phi(\alpha_j))\geq d$     \\    [3pt]
	 $\cG$          &  a $(k_R,r)$-Gray code                                        \\    [3pt]
	  $\bsg_i$  &  the $i$-th vector in $\cG$                                   \\    [3pt]
	  $\bc_i$     &  the codeword in a $q$-ary Reed-Solomon code of  length $n_R$     and  dimension $k_R$  such that  $\bc_i[0,k_R-1]=\bsg_i$  \\    [3pt]
	$\bs_i$        & the concatenation of the binary vectors $\phi[\bc_i[j]]$, $j \in [k_R]$,     as well as the vectors  $1^d\phi[\bc_i[j]]$, $k_R\leq j \leq n_R-1$    \\    [3pt]
	       $\bu$         &   the $d$-auto-cyclic vector of length $\ell$    \\    [3pt]
	     $\bp$        &    $\bp\triangleq 0^{k}\bu$     \\    [3pt]
   \hline
  \end{tabular}
  \caption{Notation Summary for Construction 1A}\label{tab.notation}
 \end{center}
\end{table*}

\begin{figure*}[htbp]
  \centering
  \includegraphics[width=12cm, trim={0 0 0 0}, clip]{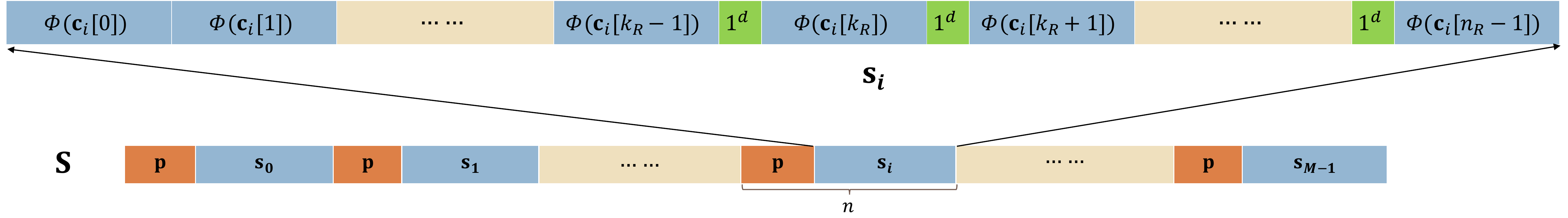}
  \caption{The sequence $\bcs$ in Construction~1A.}
  \label{fig-seqcon}
\end{figure*}

%The following is an  outline of our construction for $\bs_i$:
%\begin{enumerate}
%\item Choose a  subcode $\cC$ (not necessarily cyclic) from a Reed-Solomon code of length $n_R$ over $\F_{2^m}$ and give its codewords an ordering such that their projection on the first $k_R$ positions forms a Gray code with $\sum_{i=d}^m 2^i$ symbols. %We show that if we concatenate the codewords of $\cC$ in this order to form a sequence, then any two subwords of length $n_R$ in the same modular position have distance $\geq d+2$.
%\item Concatenate all the binary vectors of length $m$ using $\cC$ and append a small number of symbols `1' to obtain a binary code $\cS$, such that each $\bs_i$ satisfies the $(d,k)$-WWL constraint.
%\end{enumerate}

We summarise in Table~\ref{tab.notation} all the parameters  and notations involved in our construction (see also Fig.~  \ref{fig-seqcon}). 
To simplify our exposition and  analysis, we assume that all parameters are integers. 

Observe that each  $\bc_i$ is a codeword whose length-$k_R$ prefix is $\bsg_i$. 
Hence, when $j<k_R$, the  symbol $\bc_i[j]$ belongs to $X$ and  $w_H(\phi(\bc_i[j]))\geq d$. 
However, when $j\geq k_R$, the symbol $\bc_i[j]$ may not belong to $X$ and the weight of $\phi(\bc_i[j])$ may be less than $d$. So, we prepend a sequence $1^d$ at the head of  $\phi(\bc_i[j])$ for  each $j\geq k_R$. Since $k=3m\geq m+2d$, it is easy to check that the vectors  $\bs_0, \bs_1, \ldots, \bs_{M-1}$ satisfy the conditions (P1) and (P2). For  (P3), we have the following result on $\bcs$.

\begin{lemma}\label{RSwindow}
The concatenation $\bs_0\bs_1\cdots\bs_{M-1}$ is an $(n',d)$-MRPS.
\end{lemma}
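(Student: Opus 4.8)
The plan is to show that any two length-$n'$ subwords of $\bcs'\triangleq\bs_0\bs_1\cdots\bs_{M-1}$ that start at the same modular position (i.e. at positions congruent mod $n'$) are at Hamming distance at least $d$. A subword starting at a position $\equiv 0 \pmod{n'}$ is exactly some $\bs_i$, so the case of two subwords both starting at position $\equiv 0$ reduces to showing $d_H(\bs_i,\bs_j)\ge d$ for $i\ne j$; I would derive this from the fact that $\bc_i,\bc_j$ are distinct Reed--Solomon codewords, hence $d_H(\bc_i,\bc_j)\ge d_R=n_R-k_R+1=2d+3$, and then argue that applying $\phi$ coordinatewise and inserting the fixed blocks $1^d$ can only ``spread out'' disagreements: in each of the $\ge 2d+3$ coordinates where $\bc_i,\bc_j$ differ, the corresponding $m$-bit blocks $\phi(\bc_i[t])$ and $\phi(\bc_j[t])$ differ in at least one bit, and these bit-positions in $\bs_i,\bs_j$ are disjoint across distinct $t$, so $d_H(\bs_i,\bs_j)\ge 2d+3\ge d$. (Actually the much weaker bound $d_H(\bc_i,\bc_j)\ge 1$ already suffices for two full codewords; the slack matters for the general modular position.)

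For a general modular offset $t$ with $1\le t\le n'-1$, a length-$n'$ subword of $\bcs'$ starting at a position $\equiv t\pmod{n'}$ has, after cyclic rotation, the form $\bs_{a+1}[0,t-1]\,\bs_a[t,n'-1]$ for some $a$ (with indices read cyclically), i.e. a ``hybrid'' window straddling two consecutive blocks. So I must show: for any two hybrids $H=\bs_{a+1}[0,t-1]\bs_a[t,n'-1]$ and $H'=\bs_{b+1}[0,t-1]\bs_b[t,n'-1]$ with $(a,b)$ distinct as needed (the two subwords being distinct), we have $d_H(H,H')\ge d$. The key structural point is that within each $\bs_i$ the block boundaries (where $\phi$-images and the inserted $1^d$'s sit) occur at fixed, offset-independent positions, so a fixed offset $t$ splits each $\bs_i$ at the \emph{same} internal location; hence $H$ is obtained by taking a fixed-length prefix of $\enc(\bsg_{a+1})$-derived bits followed by a fixed-length suffix of $\enc(\bsg_a)$-derived bits, and I can reason coordinate-block by coordinate-block in $\F_q$. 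Since the Gray code guarantees $\bsg_{a}$ and $\bsg_{a+1}$ differ in exactly one coordinate, most of the first $k_R$ message symbols of $\bc_a$ and $\bc_{a+1}$ coincide; I would split into the cases (1) $a=b$, where $H\ne H'$ forces the differing-part comparison to involve two distinct RS codewords, (2) $a\ne b$ with $\{a,b\}$ not adjacent in the Gray ordering, and (3) $a,b$ adjacent. In each case I need to exhibit $d$ disagreeing bit positions; the cleanest route is: the hybrid window, viewed at the $\F_q$-symbol level, still agrees with an actual RS codeword (or differs from one in few symbols) in enough coordinates that the minimum-distance $2d+3$ of $\Crs$, combined with the fact that $\bsg$-coordinates lie in $X$ so their $\phi$-images have weight $\ge d$, forces enough binary disagreements — and crucially the $(d,k)$-WWL / auto-cyclic machinery is \emph{not} needed here (it was used in Lemma~\ref{pcompare} for the $\bp$-alignment, which is a separate concern already handled).

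The step I expect to be the main obstacle is case (3) / the bookkeeping in the hybrid-versus-hybrid comparison: when $a$ and $b$ are Gray-adjacent, the four codewords $\bc_a,\bc_{a+1},\bc_b,\bc_{b+1}$ involved are pairwise close (consecutive Gray codewords differ in one message symbol), so naively the disagreement set could be tiny, and I must carefully track which portion of the window comes from which codeword and use that a length-$t$ prefix plus length-$(n'-t)$ suffix together still ``see'' enough coordinates of a single RS codeword to invoke its distance $2d+3$. Concretely, I would argue that at the $\F_q$ level the hybrid $H$ differs from the genuine codeword $\bc_a$ only in the first $\lceil t/(\text{block size})\rceil$ or so symbols contributed by $\bc_{a+1}$, and since $\bc_{a+1}$ and $\bc_a$ themselves differ only in coordinates dictated by one message-symbol change plus the RS redundancy, a counting argument bounds the number of $\F_q$-coordinates where $H$ and $H'$ could possibly agree, leaving $\ge d$ coordinates of genuine disagreement; each such coordinate either contributes $\ge 1$ bit (if it lies in the $1^d$-padded tail, one compares $1^d\phi(\cdot)$ blocks) or $\ge 1$ bit from differing $\phi$-images, and I'll want to be slightly careful that a disagreement in $\F_q$ really does survive to a bit disagreement, which it does since $\phi$ is injective. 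Assembling these cases gives the $(n',d)$-MRPS property and completes the proof of Lemma~\ref{RSwindow}.
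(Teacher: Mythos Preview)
Your reduction to the $q$-ary concatenation $\bc_0\bc_1\cdots\bc_{M-1}$ is right and matches the paper: since each $\bs_i$ is obtained from $\bc_i$ by applying the injection $\phi$ coordinatewise and inserting the fixed strings $1^d$ at fixed positions, it suffices to show that the $q$-ary sequence is an $(n_R,d+1)_q$-MRPS. Your treatment of offset $0$, and implicitly of offsets $\bam\in[1,k_R]$ via ``the hybrid is within a symbol or two of a cyclic shift of $\bc_a$'', is also essentially the paper's argument for those cases.

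The gap is in your case decomposition and your proposed handling of the hard case. First, your case~(1) with $a=b$ is vacuous: two distinct length-$n'$ subwords at the same modular position necessarily have $a\ne b$. More importantly, the correct case split is not on whether $a,b$ are Gray-adjacent but on the \emph{offset} $\bam$, specifically whether $\bam\le k_R$ or $\bam>k_R$. When $\bam>k_R$, the portion of the hybrid contributed by $\bc_{a+1}$ includes some of its \emph{check} symbols, and although $\bsg_a,\bsg_{a+1}$ differ in only one coordinate, the check symbols of $\bc_a$ and $\bc_{a+1}$ may differ in all $n_R-k_R=2d+2$ positions. Your triangle-inequality route (``$H$ differs from $\bc_a$ in few symbols'') therefore loses up to $2\bigl(1+(\bam-k_R)\bigr)$ from $d_H(\bc_a,\bc_b)$, which for $\bam$ near $n_R-1$ drives the bound below zero; switching to $\bc_{a+1}$ as the anchor runs into the symmetric problem for $\bam$ just above $k_R$, and no single anchor works across the whole range.

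The paper handles $\bam\in[k_R+1,n_R-1]$ by a different mechanism: it partitions $[0,n_R-1]$ into three intervals $I_1,I_2,I_3$ and separately bounds the three pairwise sums $\agree_1+\agree_2$, $\agree_2+\agree_3$, $\agree_1+\agree_3$ (using the RS distance once and the Gray property to absorb a shift of at most $2$), then adds the three inequalities to obtain $\agree(\bw_1,\bw_2)\le (n_R+k_R)/2$, i.e.\ $d_H(\bw_1,\bw_2)\ge (n_R-k_R)/2=d+1$. Your sketch does not contain this averaging idea, and I do not see how to complete your approach without it.
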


\begin{proof} Since each $\bs_i$ is obtained by inserting the sequences $1^d$ at fixed positions in the concatenation of the binary strings $\phi[\bc_i[j]]$, it suffices to show that 
the concatenation $\bc_0\bc_1\cdots\bc_{M-1}$ is an $(n_R,d+{1})_q$-MRPS.

Assume that $\bw_1$ and $\bw_2$ start at position $i$ and position $j$ respectively. Since $\bw_1$ and $\bw_2$ are in the same modular position, we may assume that $\bam \equiv i \equiv j \pmod{n_R}$, where $\bam\in\bbracket{n_R}$. Further let $i=an_R+\bam$ and $j=bn_R+\bam$.  We proceed by cases.

\vspace{2mm}
\noindent{\bf Case 1}: $\bam=0$. Then $\bw_1=\bc_a$ and $\bw_2=\bc_b$. 
Since $\bc_a$ and $\bc_b$ belong to $\Crs$, %are codewords in a Reed-Solomon code, 
we have that $d_H(\bw_1,\bw_2)=d_H(\bc_a,\bc_b) \geq d_R=n_R-k_R+1>d+1$.

\vspace{2mm}
\noindent{\bf Case 2}: $ \bam \in [1,k_R]$.  So, $\bw_1=\bc_a[\bam,n_R-1]\bc_{a+1}[0,\bam-1]$. Since $\cG$ is a Gray code,  $d_H(\bc_{a}[0,\bam-1], \bc_{a+1}[0,\bam-1])=d_H(\bsg_a,\bsg_{a+1})\leq 1$. In other words, 
$$d_H(\bw_1, \bc_{a}[\bam,n_R-1]\bc_{a}[0,\bam-1])\leq 1.$$
Similarly, we have
$$d_H(\bw_2, \bc_{b}[\bam,n_R-1]\bc_{b}[0,\bam-1] )\leq 1.$$
It follows that  $$d_H(\bw_1,\bw_2)\geq  d_H(\bc_a,\bc_b)-2\geq n_R-k_R-1>d+1.$$

\vspace{2mm}
\noindent{\bf Case 3}: $\bam\in [k_R +1, n_r-1]$. We partition the interval $[0,n_R-1]$ into three pieces by setting
\begin{align*}
I_1&=[0, n_R-\bam-1],\\
I_2&= [n_R-\bam, n_R-\bam+k_R-1], \mbox{ and}\\
I_3&=[n_R-\bam+k_R, n_R-1].
\end{align*}
%\[
%I_1=[0, n_R-\bam-1],~
%I_2= [n_R-\bam, n_R-\bam+k_R-1], \mbox{ and }\\
%I_3=[n_R-\bam+k_R, n_R-1].
%\]
For $j\in\{1,2,3\},$ let $\agree_j=\agree(\bw_1[I_j], \bw_2[I_j])$. Then
\begin{align*}
\agree_2+\agree_3 &= \agree(\bc_{a+1}[0,\bam-1], \bc_{b+1}[0,\bam-1])\\
  & \leq  k_R-1\\
\agree_1+\agree_3 & \leq  |I_1|+|I_3|=n_R-|I_2|=n_R-k_R.
\end{align*}
Since $d_H(\bc_{a}[0,k_R-1]), \bc_{a+1}[0,k_R-1]))\leq 1$ and $d_H(\bc_{b}[0,k_R-1]), \bc_{b+1}[0,k_R-1]))\leq 1$,  we have
\begin{align*}
\agree_1+\agree_2 =&\agree( \bc_{a}[\bam,n_R-1]\bc_{a+1}[0,k_R-1],  \bc_{b}[\bam,n_R-1]\bc_{b+1}[0,k_R-1])\\
\leq &\agree( \bc_{a}[\bam,n_R-1]\bc_{a}[0,k_R-1], \bc_{b}[\bam,n_R-1]\bc_{b}[0,k_R-1])+2\\
\leq &\agree( (\bc_{a},\bc_b) +2\leq  k_R+1.
\end{align*}
Summing these three inequalities yields \[\agree_1+\agree_2+\agree_3\leq \frac{n_R+k_R}{2}.\]
Therefore, we have 
\begin{align*}
d_H(\bw_1,&\bw_2)=n_R-\agree(\bw_1,\bw_2) \geq \frac{n_R-k_R}{2}=d+1,
\end{align*}
which completes the proof. 
 \end{proof}
 
Therefore, Construction 1A yields an $(n,d)$-RPS as desired. 
We analyse the required redundancy in the following corollary.

\begin{corollary}\label{redundancy-binary}
For sufficient large $n$, there is an $(n,d)$-RPS with redundancy  at most $3 d\log n + 6.5\log n+O(1).$
\end{corollary}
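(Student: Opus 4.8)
The plan is to compute the redundancy of the sequence $\bcs$ produced by Construction 1A directly from the parameter table, by tracking the length $N$ of $\bcs$ against the number $M = r^{k_R}$ of distinct windows it contains, and then bounding $n - \log_2 M$ from above using the definitions $m = (3/2)\log n$, $k = 3m$, $\ell = d\lceil\log d\rceil + 2d$, $n_R = (n' - (2d+2)d)/m$, $k_R = n_R - 2d - 2$, and $r = \sum_{i=d}^m \binom{m}{i}$. The redundancy of an $(n,d)$-RPS of length $N$ is $n - \log_2 N$, but since $N = M \cdot n$ (each block $\bp\bs_i$ has length $\ell_p + n' = n$) and $\log_2 n = O(\log n)$ is absorbed into lower-order terms, the quantity to estimate is $n - \log_2 M = n - k_R \log_2 r$.

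**First I would** get a clean lower bound on $\log_2 r$. Since $r = \sum_{i=d}^{m}\binom{m}{i}$ and $d$ is a fixed constant while $m \to \infty$, we have $r = 2^m(1 - o(1))$; more precisely $r \geq 2^m - m^{d}\binom{m}{d}/\text{(something)}$, but all that matters is $\log_2 r = m - o(1)$ as $n \to \infty$ with $d$ fixed. **Then I would** substitute: $k_R \log_2 r = k_R(m - o(1)) = k_R m - o(n)$, and
\[
k_R m = (n_R - 2d - 2)m = n_R m - (2d+2)m = \big(n' - (2d+2)d\big) - (2d+2)m.
\]
Now $n' = n - \ell_p = n - k - \ell = n - 3m - \ell$, so
\[
k_R m = n - 3m - \ell - (2d+2)d - (2d+2)m = n - (2d+5)m - \ell - (2d+2)d.
\]
Therefore the redundancy (up to the $O(1)$ and $o(1)$ slack, including the $\log_2 n$ from $N = Mn$) is
\[
n - k_R\log_2 r \leq (2d+5)m + \ell + (2d+2)d + o(n/\log n)\cdot\text{correction} + O(\log n).
\]

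**The arithmetic then gives** $(2d+5)m = (2d+5)(3/2)\log n = 3d\log n + 7.5\log n$, while $\ell = d\lceil\log d\rceil + 2d = O(1)$ for fixed $d$ and $(2d+2)d = O(1)$. That yields $3d\log n + 7.5\log n + O(1)$, which is slightly weaker than the claimed $3d\log n + 6.5\log n + O(1)$; **the main obstacle**, and the place I expect the real work to be, is recovering the missing $\log n$ — this presumably comes from being more careful about the $+n$ versus $+\log_2 N$ bookkeeping, or from a sharper choice in the relation between $n_R$ and $n'$ (note $n_R = (n' - (2d+2)d)/m$ already has the $m$ in the denominator, so one extra unit of $k_R$ is worth $m = (3/2)\log n$ bits, and the definitions are evidently tuned so the slack is exactly half of that). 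Concretely, I would recheck whether $n' = n - \ell_p$ should contribute its full value to the count or whether the marker length $\ell_p = 3m + \ell$ can be partially amortized — the $3m$ from $k$ accounts for $4.5\log n$ of overhead, and shaving the estimate to $6.5\log n$ likely requires noticing that one of the $+m$ terms in the expansion of $k_R m$ is not actually lost because $\log_2 r$ can be taken to exceed $m - 1$ rather than just $m - o(1)$, or equivalently that $r > 2^{m-1}$ suffices and the $\lceil\cdot\rceil$ in $m$ can be split. I would therefore carry the constants symbolically through the substitution $n - k_R\log_2 r$, keep $\log_2 r \geq m - 1$ explicitly, collect the coefficient of $\log n$, and verify it lands at $6.5$; if it lands at $7.5$ I would look for the factor-of-$2$ saving in either the definition of $n_R$ (division by $m$) or in not double-counting the final block's trailing $\bp$.
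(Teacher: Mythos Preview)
Your approach is essentially the same as the paper's, and your arithmetic is correct up to one oversight that you actually flagged yourself but then talked past. You wrote that since $N = Mn$, the term $\log_2 n$ ``is absorbed into lower-order terms'' --- but $\log n$ is exactly the scale at which you are tracking constants. The redundancy is
\[
n - \log_2 N \;=\; n - \log_2(Mn) \;=\; n - k_R\log_2 r - \log_2 n,
\]
and subtracting that final $\log_2 n$ turns your $7.5\log n$ into $6.5\log n$. This is precisely what the paper does: it writes the redundancy as $n - k_R\log r - \log n$, obtains $k_R = n/m - 2d - 5 - O(1/m)$, and arrives at $(2d+5)m - \log n + O(1) = 3d\log n + 6.5\log n + O(1)$. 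Your speculations about amortising the marker, splitting a ceiling, or sharpening $\log_2 r$ beyond $m-1$ are unnecessary; the ``missing $\log n$'' is the one you threw away in the first paragraph.

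One smaller point worth tightening: to end with $+\,O(1)$ rather than something that could drift, you need a quantitative bound on $m - \log_2 r$, since it gets multiplied by $k_R = \Theta(n/\log n)$. For fixed $d$ you have $2^m - r = \sum_{i<d}\binom{m}{i} = O(m^{d-1})$, so $\log_2 r \ge m - O(m^{d-1}/2^m)$; with $2^m = n^{3/2}$ this gives $m - \log_2 r = O(n^{-3/2+o(1)})$ and hence $k_R(m - \log_2 r) = o(1)$. The paper reaches the same conclusion via Hoeffding's inequality, obtaining $r \ge 2^m(1 - 1/n)$ and thus $\log r \ge m - (\log e)/(n-1)$.
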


\begin{proof}
%{\color{red}Hengjia, please check that the parameters here and in Table 1 are correct. (I've checked and  corrected the parameters in Table 1)}
Recall that
\begin{align*}
r&= \sum_{i=d}^m {m\choose i}=2^m\left[1-\sum_{i=0}^{d-1} {m\choose i}\frac{1}{2^m}\right] \geq 2^m\left[1-\exp\left(-\frac{2(m/2-d+1)^2}{m}\right) \right]\\
& \geq 2^m\left(1-e^{-\frac{2m}{3 \log e}} \right) = 2^m\left(1-\frac{1}{n} \right).
\end{align*}
The first inequality comes from Hoeffding's inequality for the tail of  the binomial distribution while the second one holds as ${(m/2-d+1)^2}/{m}>{m}/{(3\log e)}$ when $m$ is large enough. Then we have
\begin{align*}
\log r & \geq m+\log\left(1-\frac{1}{n} \right)  = m+\ln\left(1-\frac{1}{n}\right)\log e  \geq m-\frac{\log e}{n-1}.
\end{align*}
Note that $k_R=\frac{1}{m}[n-k-\ell-(2d+2)(m+d)]=\frac{n}{m}-2d-5-O\left(\frac{1}{m}\right)$. Hence the redundancy of $\bcs$ is
\begin{align*}
 \ n-  \log(nM)  & =  \ n-k_R\log r-\log n \\
 & \leq  \  n- \left(\frac{n}{m}-2d-5-O\left(\frac{1}{m}\right)\right) \left(m - \frac{\log e}{n-1}\right)-\log n\\
 &= \ 2dm+5m-\log n+O(1) =  \ 3 d\log n +6.5 \log n+O(1). 
\end{align*}
\end{proof}

\noindent{\bf Remark.} In Construction 1A, we convert the $q$-ary vectors into binary vectors by mapping the elements of $\F_q$ to the binary vectors of length $m$, and  we append some short sequences $1^d$ so that the resulting sequences satisfy conditions (P1) and (P2).
Alternatively, one may choose a prime power $q$ that is at most $\sum_{i=d}^m {m\choose i}$ and map the elements in $\F_q$ to the binary length-$m$ vectors with weight at least $d$.
%As another approach, one may also use Theorem ?? to choose a prime $q$
%which is no more than $\sum_{i=d}^m {m\choose i}$ and map the elements in $\F_q$ to the binary length-$m$ vectors with weight at least $d$. 
This approach then results in an $(n,d)$-RPS with redundancy $4.21d \log n+9.53 \log n+O(1)$, which is larger than that in Corollary~\ref{redundancy-binary}. However, in next section, when we construct  2-D robust positioning arrays, we have to adopt this approach as it is difficult to tile some small patterns and large squares while keeping low redundancy.

\subsection{Locating Algorithm}
We present a locating algorithm for the subwords of the sequence $\bcs$ in Construction 1A.
In particular, the locating algorithm corrects up to $\floor{(d-1)/2}$ errors in $O(n^3)$ time, 
independent of parameter $d$.

Suppose that $\bw$ is a subword of $\bcs$ that is corrupted at $e$ positions with $e\le (d-1)/2$.
In other words, there is a unique index $i$ such that $d_H(\bcs[i,i+n-1],\bw)\le (d-1)/2$ and 
our task is to recover $i$.
Equivalently, if we write $i$ as $an+\bai$ with $\bai\equiv i\pmod{n}$, then our task is to recover both $a$ and $\bai$.
In what follows, we give a broad overview of the steps and the detailed implementation of the algorithm is provided in Algorithm~\ref{locatingalgo}.

\begin{enumerate}[(I)]
\item We determine $\bai$. To do so, we determine the unique index $\hai$ such that $d_H(\bw[\hai,\hai+\ell_p-1],\bp)\le (d-1)/2$. Set $\bai\in \bbracket{n}$ such that $\bai+\hai\equiv 0\pmod{n}$
\item Next, we cyclically rotate $\bw$ leftwards by $\hai$ positions to obtain $\bv$. Observe that $\bv$ is the binary image obtained from either a $q$-ary codeword $\bc_a$ or a concatenation $\bc_a[n_R-j+1,n_R]\bc_{a+1}[0,j]$ for some $j\in \bbracket{n_R}$. Since $\bv$ is obtained via the map $\phi$ and prepending the string $\bp$ and inserting $n_R-k_R$ strings $1^d$, we reverse this process to obtain the $q$-ary estimate $\bav$. 
\item Finally, depending on the value of $\bai$, we apply the Reed-Solomon decoding algorithm $\dec$ to find either $\bc_a$ or $\bc_{a+1}$ or some shortened versions of these words. Therefore, we determine $a$ and hence, obtain $i=an+\hai$.
\end{enumerate}

%~\ref{con-binary}.

%\begin{definition}
%Given a vector $\bw$, the operation $L^i(\bw)$ shifts the vector $\bw$ to the left cyclically $i$ positions.
%\end{definition}

\begin{algorithm} [ht]
\begin{algorithmic}
\caption{Locating algorithm for the sequence $\bcs$ in Construction~1A}%~\ref{con-binary}}
\label{locatingalgo}
\STATE {\bf Input}: a sequence $\bw$ of length $n$ %in the $(n,d)$-RPS $\bcs$ %$n$, $d$, $e$, and  
\STATE {\bf Output}: a position $i\triangleq an+\bai$ such that $d_H(\bcs[i,i+n-1],\bw)\le (d-1)/2$
%\FORALL {$i \in [n]$}
%	\IF {$d_H(\bw[i,i+ \ell_p -1],\bp) < e$}
%	\STATE $\hai \leftarrow i$
%	\ENDIF
%\ENDFOR
\STATE{}
\STATE $\hai \leftarrow$ unique index such that $d_H(\bw[\hai,\hai+ \ell_p -1],\bp)\le (d-1)/2$
\STATE Set $\bai \in \bbracket{n}$ such that $\bai+\hai\equiv 0\ppmod{n}$
\STATE{}
\STATE $\bv \leftarrow$ the vector obtained by rotating $\bw$ cyclically leftwards $\hai$ positions
\STATE $\hav \leftarrow $ the vector obtained from $\bv$ by deleting $\bv[0,\ell_p-1]$ and $\bv[\ell_p+mk_R+(m+d)j, \ell_p+mk_R+(m+d)j+d-1]$ for all $j \in [n_R - k_R]$
\STATE $\bav \leftarrow \phi^{-1}(\hav[0,m-1])\phi^{-1}(\hav[m,2m-1])\cdots\phi^{-1}(\hav[m(n_R-1),mn_R-1]) $\\
\STATE{}
%\STATE Let $\bai \in [n]$ such that $\bai+\hai\equiv 0\ppmod{n}$
\IF {$\bai \in [0, \ell_p + mk_R -1]$}
\STATE $\bc\gets\dec^{(n_R,k_R)}(\bav)$
\STATE $a\gets \decG(\bc[0,k_R-1])$
%\STATE Run the decoding algorithm for Reed-Solomon codes of  length $n_R$ and dimension $k_R$ on $\bav$ to get a vector $\bc$
%\STATE Run the encoding algorithm for Gray codes on $\bc[0,k_R-1]$ to get an index $a$
\ELSIF {$\bai \in [\ell_p + mk_R, \ell_p+mk_R+(d+1)(m+d)-1]$}
\STATE $\bav^s \leftarrow$ the shortened codeword $\bav[0,k_R-1]\bav[k_R+(d+1),n_R-1]$
\STATE $\bc^s\gets\dec^{(n_R-(d+1),k_R)}(\bav^s)$
\STATE $a~\gets \decG(\bc^s[0,k_R-1])$
%\STATE Run the decoding algorithm for Reed-Solomon codes of length $n_R-(d+1)$ and  dimension $k_R$ on $\bav^s$ to get a vector $\bc^s$
%\STATE Run the encoding algorithm for Gray codes on $\bc^s[0,k_R-1]$ to get an index $a$ 
\ELSE
\STATE $\bav^s \leftarrow$ the shortened codeword $\bav[0,k_R+(d+1)-1]$
\STATE $\bc^s\gets\dec^{(n_R-(d+1),k_R)}(\bav^s)$
\STATE ${a+1}~\gets \decG(\bc^s[0,k_R-1])$
%\STATE Run the decoding algorithm for Reed-Solomon codes of length $n_R-(d+1)$ and  dimension $k_R$ on $\bav^s$ to get a vector $\bc^s$
%\STATE Run the encoding algorithm for Gray codes on $\bc^s[0,k_R-1]$ to get an index $a+1$ 
\ENDIF
 \RETURN {$an + \bai$} 
\end{algorithmic}
\end{algorithm}

\begin{theorem}\label{thm:1d}
Suppose $\bw$ is a corrupted subword of the sequence $\bcs$ with exactly $e$ errors. 
If $2e < d$, then  Algorithm~\ref{locatingalgo} can determine the position of $\bw$ in the sequence $\bcs$ in $O(n^3)$ time.
\end{theorem}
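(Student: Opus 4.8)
The plan is to verify that Algorithm~\ref{locatingalgo} correctly recovers the pair $(a,\bai)$ and then to tally the running time. Correctness of the recovery of $\bai$ is immediate from Lemma~\ref{pcompare}: since $\bw$ differs from $\bcs[i,i+n-1]$ in at most $(d-1)/2$ positions, and $\bcs[i,i+n-1][\hai,\hai+\ell_p-1]=\bp$ with $\hai$ the index satisfying $\hai+i\equiv0\pmod n$, the triangle inequality gives $d_H(\bw[\hai,\hai+\ell_p-1],\bp)\le(d-1)/2$; conversely Lemma~\ref{pcompare}(ii) shows that for every other offset the distance to $\bp$ is at least $d-(d-1)/2>(d-1)/2$, so $\hai$ is the \emph{unique} such index and the algorithm identifies it, hence $\bai$, correctly.

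Next I would analyse the $q$-ary estimate. After rotating $\bw$ leftwards by $\hai$ to get $\bv$, the corresponding rotation of the true window $\bcs[i,i+n-1]$ is exactly the vector described in Lemma~\ref{pcompare}: it is $\bp$ followed by either $\bs_a$ (when $\bai\le\ell_p$) or a hybrid $\bs_{a+1}[0,\bai-\ell_p-1]\bs_a[\bai-\ell_p,n'-1]$ (when $\bai>\ell_p$). Stripping the known prefix $\bp$ and the $n_R-k_R$ inserted blocks $1^d$ and applying $\phi^{-1}$ block-by-block is a deterministic operation, so the at most $(d-1)/2$ bit errors in $\bw$ translate into at most $(d-1)/2$ symbol errors in $\bav$ relative to the true $q$-ary string, which is either $\bc_a$, or the hybrid $\bc_{a+1}[0,*]\,\bc_a[*,n_R-1]$. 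Here I split into the three cases of the algorithm according to where $\bai$ lies, mirroring the case split in Lemma~\ref{RSwindow}: (a) when $\bai$ is small enough that the window boundary falls inside the Gray-code prefix region, $\bav$ agrees with $\bc_a$ on the last $n_R-k_R$ coordinates up to at most $(d-1)/2$ errors (the hybrid only disturbs prefix symbols, which the Gray code makes differ in at most one place), so a single Reed--Solomon decoding recovers a codeword whose prefix Gray-decodes to $a$; (b)--(c) when the boundary lies in the tail, the uncorrupted half of the window is a \emph{shortened} Reed--Solomon codeword of length $n_R-(d+1)$, still of minimum distance $d_R-(d+1)=2d+2-(d+1)>d$, so decoding $\dec^{(n_R-(d+1),k_R)}$ on the appropriate shortened word succeeds and its prefix Gray-decodes to $a$ or $a+1$ as the algorithm claims. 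In each case $d_R$ (or the shortened minimum distance) exceeds $d>2e$, so Theorem~\ref{thm:rs}(ii) and the Gray-decoding theorem apply; the algorithm then outputs $an+\bai$, which equals $i$.

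For the running time: finding $\hai$ scans $O(n)$ offsets, each a Hamming-distance computation over $\ell_p=O(\log n)$ bits, so $O(n\log n)$; the rotation and the block-wise $\phi^{-1}$ cost $O(n)$ (recall $\phi$ acts on $m=O(\log n)$-bit blocks, and there are $n_R=O(n/\log n)$ of them); the Gray decoding costs $O(k_R\log^2 r)=O(n\log n)$; and the Reed--Solomon decoding costs $O(n^3)$ by Theorem~\ref{thm:rs}(ii), which dominates. Hence the total is $O(n^3)$, independent of $d$. The main obstacle I anticipate is the bookkeeping in the three-way case split — verifying precisely which coordinates of $\bav$ coincide with a genuine (possibly shortened) codeword when the window straddles two blocks $\bp\bs_a$ and $\bp\bs_{a+1}$, and confirming in cases (b)--(c) that the shortened Reed--Solomon code still has minimum distance strictly greater than $d$ so that $e<d/2$ errors are correctable; everything else is routine.
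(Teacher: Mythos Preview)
Your approach is essentially the same as the paper's: recover $\bai$ via Lemma~\ref{pcompare}, rotate, strip markers, pass to the $q$-ary estimate $\bav$, and then Reed--Solomon decode (possibly on a shortened code) with the Gray-code property absorbing the hybrid boundary. The running-time tally also matches.

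One bookkeeping point you undercount: when the window boundary falls strictly inside an $m$-bit block $\phi(\cdot)$ (your case~(a), paper's Case~2), the ``true $q$-ary string'' you obtain after $\phi^{-1}$ is \emph{not} literally the hybrid $\bc_{a+1}[0,j]\bc_a[j+1,n_R-1]$, because the splice block is a mix of bits from $\phi(\bc_{a+1}[j])$ and $\phi(\bc_a[j])$ and $\phi^{-1}$ of that mix need not equal either symbol. So the distance from $\bav$ to the genuine hybrid is $\le e+1$, and after the Gray-code adjustment the distance to $\bc_a$ is $\le e+2$, not $e+1$. The paper makes this explicit. Fortunately $d_R=2d+2$ was chosen with this in mind, so $e+2\le (d_R-1)/2$ still holds and your conclusion survives; just be sure to include the extra $+1$ when you write out the details. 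A similar care is needed in your cases (b)--(c): the shortened word you decode is again a hybrid before the Gray-code step, contributing one more symbol discrepancy, and the shortened minimum distance $d+1$ must absorb both this and the $e$ channel errors.
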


\begin{proof}
Suppose the corrupted subword $\bw$ starts at position ${\alpha}n+\beta$, where $\alpha\in \bbracket{M}$ and $\beta \in \bbracket{n}$.
Denote the original subword  $\bcs[\alpha n+\beta, (\alpha+1)n+\beta-1]$ as $\bw^\circ$, 
and so, $d_H(\bw,\bw^\circ)=e$. 
Lemma~\ref{pcompare} implies that $d_H(\bw[i,i+\ell_p-1],\bp)   \leq e$ when $i+\beta \equiv 0\ppmod{n}$, and $d_H(\bw[i,i+\ell_p-1],\bp) \geq d-e$ when $i+\beta \not\equiv 0 \ppmod{n}$. 
Since $d-e>e$, the value $\beta$ can be uniquely determined and we have $\bai=\beta$.  
In order to determine $\alpha$, we consider the following  cases.

\vspace{2mm}
\noindent{\bf Case 1}: $\bai \in \bbracket{\ell_p}$. 
By shifting the original subword  $\bw^\circ$ leftwards $\hai$ times, we obtain $\bp\bs_\alpha$. 
Since shifting both $\bw$ and $\bw^\circ$ simultaneously does not increase the Hamming distance, 
we have $d_H(\bv,\bp\bs_\alpha)=e$.
After removing the sequences $\bp$ and  $1^d$ from $\bp\bs_\alpha$ and the corresponding subwords from $\bv$,
we have that $d_H(\hav,\phi(\bc_\alpha[0])\phi(\bc_\alpha[1])\cdots\phi(\bc_\alpha[n_R-1])) \leq e$.
Next, we apply the inverse function $\phi^{-1}$ and observe that the Hamming distance does not increase.
Therefore, $d_H(\bav,\bc_\alpha) \leq e<d/2.$

Since $d_R=2d+2$, we have that $d_H(\bav,\bc_\alpha)\le (d_R-1)/2$, 
applying the decoding algorithm $\dec^{(n_R,k_R)}$ on $\bav$ recovers $\bc_\alpha$.
Finally, we apply $\decG$ to $\bc_\alpha[0,k_R-1]$ to recover $\alpha$ and hence, the output $a$ is indeed $\alpha$.
%For any $b \neq \alpha$, since $k_R = n_R - (2d+2)$, the distance between $\bc_{\alpha}$ and $\bc_b$ is at least $2d + 3$. It follows that
%$$d_H(\bav,\bc_b) \geq d_H(\bc_\alpha,\bc_b) - d_H(\bav,\bc_\alpha) \geq (2d+3)-e.$$
%Therefore, if $e< 2d+3-e$, or $2e < 2d+3$, we can apply the decoding algorithm for Reed-Solomon codes with $\bav$ as input to recover $\bc_\alpha$. Then the encoding algorithm for Gray codes is used to recover $\alpha$ and outputs $a=\alpha$.

\vspace{2mm}
\noindent{\bf Case 2}: $\bai \in [\ell_p, \ell_p+mk_R-1]$. By shifting $\bw^\circ$ leftwards $\hai$ positions, 
we obtain the sequence $\bp\bs_{\alpha+1}[0,\bai-\ell_p-1]\bs_\alpha[\bai-\ell_p,n'-1]$.
So, $d_H(\bv,\bp\bs_{\alpha+1}[0,\bai-\ell_p-1]\bs_\alpha[\bai-\ell_p,n'-1]) = e.$

Set $j = \lfloor {(\bai-\ell_p)}/{m} \rfloor$, and so, $j < k_R$.
Note that there may be a subword of length $m$ which covers both the tail of $\bs_{\alpha+1}[0,\bai-\ell_p-1]$ and the head of $\bs_\alpha[\bai-\ell_p,n'-1]$.
Hence, we have $d_H(\bav,\bc_{\alpha+1}[0,j]\bc_\alpha[j+1,n_R-1]) \leq e + 1.$ 
It follows that
\begin{align*}
d_H(&\bav,\bc_\alpha)  = d_H(\bav,\bc_\alpha[0,j]\bc_\alpha[j+1,n_R-1]) \\ 
& \leq d_H(\bav,\bc_{\alpha+1}[0,j]\bc_\alpha[j+1,n_R-1]) + 1 \leq e+2\le (d_R-1)/2.
\end{align*}
Similar to Case 1, we apply $\dec^{(n_R,k_R)}$ to $\bav$ to recover $\bc_\alpha$ and then apply $\decG$ to recover $a=\alpha$.
%The first inequality holds as $j< k_R$ and $d_H(\bc_{\alpha}[0,k_R-1],\bc_{\alpha+1}[0,k_R-1])\leq 1$.

%For any $b \neq \alpha$, $$d_H(\bav,\bc_b) \geq d_H(\bc_\alpha,\bc_b) - d_H(\bav,\bc_\alpha) \geq 2d-e+1.$$
%Therefore, if $e+2< 2d-e+1$, or $2e<2d-1$, then the same strategy as in Case 1 can be applied to determine $\alpha$.
\vspace{2mm}
\noindent{\bf Case 3}:  $\bai \in [\ell_p + mk_R, \ell_p+mk_R+(d+1)(m+d)-1]$. Similar to Case 2, we have 
$$d_H(\bv,\bp\bs_{\alpha+1}[0,\bai-\ell_p-1]\bs_\alpha[\bai-\ell_p,n'-1]) \leq e.$$
Due to the range of $\bai$, $\bs_{\alpha+1}[0,\bai-\ell_p-1]$ contains the subword $\bs_{\alpha+1}[0,mk_R-1]$ and $\bs_\alpha[\bai-\ell_p,n'-1]$ contains the subword $\bs_\alpha[mk_R+(d+1)(m+d),n'-1]$. It follows that the distance between the shortened vector $\bav^s=\bav[0,k_R-1]\bav[k_R+(d+1),n_R-1]$
 and the vector $\bc_{\alpha+1}[0,k_R-1]\bc_\alpha[k_R+(d+1),n_R-1]$ is  no more than $e.$
%Noting that in this case 
Since $d_H(\bc_{\alpha}[0,k_R-1], \bc_{\alpha+1}[0,k_R-1])\leq 1$, we have
\[d_H(\bav^s,\bc_{\alpha}[0,k_R-1]\bc_\alpha[k_R+(d+1),n_R-1]) \leq e+1.\]
The shortened vector $\bc_{\alpha}[0,k_R-1]\bc_\alpha[k_R+(d+1),n_R-1]$ can be treated as a codeword of a Reed-Solomon code of length $n_R-(d+1)$ and dimension $k_R$. Since $e+1\le (n_R-(d+1)-k_R)/2$, 
we apply the decoding algorithm $\dec^{(n_R-(d+1),k_R)}$ to recover $\bc_\alpha[0,k_R-1]$.
%of Reed-Solomon codes of length $n_R-(d+1)$ to recover $\bc_\alpha[0,k_R-1]$.
As before, we apply $\decG$ to recover $\alpha$.
%the encoding of Gray codes to recover $\alpha$.
% so for any  $b \neq \alpha$, 
%$$d_H(\bc_{\alpha}[0,k_R-1]\bc_\alpha[k_R+(d+1),n_R-1], \bc_{b}[0,k_R-1]\bc_b[k_R+(d+1),n_R-1])\geq n_R-(d+1)-k_R+1=d+2.$$
%Therefore, if $e+1< d+2-(e+1)$, or $2e<d$, we apply the decoding of Reed-Solomon codes of length $n_R-(d+1)$ to recover $\bc_\alpha[0,k_R-1]$ and then use the encoding of Gray codes to recover $\alpha$.

\vspace{2mm}
\noindent{\bf Case 4}: $\bai \in [\ell_p+mk_R+(d+1)(m+d),n-1]$. In this case, we consider the shortened vector 
$\bav^s=\bav[0,k_R+(d+1)-1]$. Similar to Case 3, we have 
\[d_H(\bav^s,\bc_{\alpha+1}[0,k_R+(d+1)-1]) \leq e.\]
%Therefore, if $e< d+2-e$, or $2e<d+2$, we can recover $\bc_{\alpha+1}[0,k_R-1]$ and so $\alpha+1$.
As before, we can $\dec^{(n_R-(d+1),k_R)}$ to recover $\bc_{\alpha+1}[0,k_R-1]$ and hence, recover $\alpha+1$.

\vspace{2mm}

We analyse the running time.
To determine $\hai$, we require $\ell_pn$ comparisons.
Next, the Reed-Solomon decoding  $\dec^{(n_R,k_R)}$ runs in $O(n_R^3)=O(n^3)$ time.
Finally, the decoding of Gray codes $\decG$ runs in $O(k_R\log^2 q)=O(nm^2)=O(n\log^2 n)$ time.
Therefore, Algorithm~\ref{locatingalgo} computes the location in $O(n^3)$ time.
%{\color{red}
%The  complexity of the decoding of Reed-Solomon codes is  $O(n_R^3)$ \cite{WelchBerlekamp:1986}. The complexity of  the encoding of Gray codes is $O(k_R)$ \cite{Guan:1998}. Therefore, in total, the running time of the algorithm is $O(n^3)$.}
%The time complexity of the algorithm mostly comes from the time to run the  {\bf for} loop and the time to decode Reed-Solomon codes and encode Gray codes. The {\bf for} loop requires $\ell_pn$ comparisons. The  complexity of the decoding of Reed-Solomon codes is  $O(n_R^3)$ \cite{WelchBerlekamp:1986}. The complexity of  the encoding of Gray codes is $O(k_R)$ \cite{Guan:1998}. Therefore, in total, the running time of the algorithm is $O(n^3)$.
\end{proof}

\section{Binary Robust Positioning Arrays with Constant Distance}
\label{Sect:biRPAconstantd}
%\subsection{Construction}
%In this subsection, 
Let $\bW$ be an $n_1\times n_2$ window of area $A$ and thickness bounded by a constant. We generalise Construction 1A %Construction~\ref{con-binary} 
to produce binary RPAs for $\bW$ with constant distance $d$.
To this end, we require the following number theoretic result.

\begin{lemma}[Baker et al. \cite{Bakeretal:2001}]\label{prime}
Let $\theta=0.525$. There exists $x_0$ such that for every $x\geq x_0$, the interval $[x-x^\theta, x]$ contains a prime.
\end{lemma}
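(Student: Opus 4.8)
The statement to be proved is Lemma~\ref{prime} (the Baker--Harman--Pintz prime gap result): for $\theta = 0.525$ there is an $x_0$ such that every interval $[x - x^\theta, x]$ with $x \ge x_0$ contains a prime. This is a deep analytic number theory theorem, not something one reproves from scratch inside a combinatorics paper; the honest plan is therefore to \emph{cite} it rather than to derive it, and to indicate where the $0.525$ comes from so the reader can trace the dependency. I would open by noting that this is exactly the main theorem of Baker, Harman, and Pintz, ``The difference between consecutive primes, II'' (Proc. London Math. Soc., 2001), which improves the earlier exponent $0.535$ of Baker--Harman and $7/12$ of Huxley.

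First I would recall the analytic formulation: writing $\psi(x) = \sum_{n \le x}\Lambda(n)$, it suffices to show $\psi(x) - \psi(x - x^\theta) \gg x^\theta$ for large $x$, since a lower bound of this order on the prime-power count in the short interval, together with the negligible contribution of proper prime powers (which is $O(x^{1/2})$, dwarfed by $x^\theta$), forces a genuine prime into $[x - x^\theta, x]$. So the real content is an asymptotic lower bound for primes in short intervals. Second, I would indicate the mechanism: one studies $\sum_{x - x^\theta < n \le x}\Lambda(n)$ via a combination of Dirichlet polynomial mean-value estimates, the Heath-Brown identity to decompose $\Lambda$ into sums of ``type I'' and ``type II'' bilinear pieces, and Harman's sieve method to handle the ranges of the summation variables where the arithmetic information is too weak for an asymptotic but still yields a positive-proportion lower bound. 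The exponent $\theta = 0.525$ emerges as the outcome of optimizing the sieve decomposition against the available large-values and moment bounds for Dirichlet polynomials (in particular Watt's mean-value theorem and zero-density estimates).

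The main obstacle, if one actually attempted a self-contained proof, is precisely this sieve-plus-Dirichlet-polynomial machinery: it is many pages of delicate estimation and is the entire subject of the cited paper. Since the present work only needs the \emph{existence} of a prime in $[x - x^{0.525}, x]$ for large $x$ — used downstream to tile windows of prescribed dimensions with controlled redundancy — the appropriate and expected treatment is:

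\begin{proof}
This is the main result of Baker, Harman, and Pintz~\cite{Bakeretal:2001}; see also the weaker but historically relevant bounds of Huxley (exponent $7/12$) and of Baker and Harman (exponent $0.535$). We only use the statement as a black box.
\end{proof}

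\noindent I would then immediately move on, since the subsequent constructions require only that such a prime exists (to choose field sizes or block lengths of the right order of magnitude); the exponent $0.525 < 1$ is all that matters, and any standard citation establishing a prime in every interval $[x - x^c, x]$ for some fixed $c < 1$ and all large $x$ would serve equally well.
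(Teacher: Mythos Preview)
Your proposal is correct and matches the paper's treatment: the paper simply states the lemma with a citation to Baker--Harman--Pintz and gives no proof. One small correction to your closing remark: the specific value $\theta=0.525$ is not irrelevant here, since the redundancy constants in Corollary~\ref{redundancy-array} are computed from it (e.g., $4.21\approx 2/(1-0.525)$), so a weaker exponent $c<1$ would still give a valid construction but with larger constants in the redundancy bound.
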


Fix $d$. 
Set $m= \frac{\log A}{1-\theta}$ and $r=\sum_{i=d}^m {m\choose i}$, where $\theta=0.525$. 
Lemma~\ref{prime} then provides a prime $q$ such that $r-r^{\theta}\leq q \leq r$. 
Take an arbitrary injective map $\psi$ from $\F_q$ to $\F_2^m$ such that $w_H{(\psi(x))}\geq d$ for all $x\in \F_q$.
In other words, $\psi$ maps symbols in $\F_q$ to binary sequences of length $m$ with weight at least $d$.
For a vector $\bx=x_0x_1\cdots x_{n-1}\in \F_q^n$, let $\psi(\bx)=\psi(x_0)\psi(x_1)\cdots \psi(x_{n-1})$.

Suppose that $n_2$ is divisible by $m$. 
Set $n_R\triangleq n_1({n_2}/{m})-4$ and $k_R\triangleq n_R-2(d+7)$. 
Then we have $k_R <n_R< q$ and set $M\triangleq q^{k_R/2}$. 
Now we provide our construction of robust positioning arrays.

%\begin{construction}\label{Construction-RPA}
\vspace{2mm}
\noindent{\bf Construction 2}. Let $\cG=(\bsg_0,\bsg_1,\ldots, \bsg_{M-1})$ be a $(k_R/2,q)$-Gray code and 
consider a Reed-Solomon code of length $n_R$ and dimension $k_R$  over $\F_q$.  
For each $0\leq i,j\leq M-1$, set $\bc_{ij}=\enc^{(n_R,k_R)}(\bsg_i\bsg_j)$. %be the unique codeword in $\cC$ such that $\bc_{ij}[0,k_R-1]=\bsg_i\bsg_j$.

Let $\bp=0^{k}\bu$ where $k=4m-\ell$ and $\bu$ is the $d$-auto-cyclic vector provided in \eqref{dautoc}.  
For each $\bc_{ij}$, the concatenation $\bp\psi(\bc_{ij})$ has length $n_1n_2$  as 
$n_R=n_1({n_2}/{m})-4$. Then let $\bA_{ij}$ be the  $n_1\times n_2$ array 
whose rows can be concatenated to form $\bp\psi(\bc_{ij})$ (see Fig.~\ref{fig-arraycon}).

Finally, construct a large array $\bA$ as
$$
\bA=
\left(
\begin{array}{cccc}
\bA_{00} & \bA_{01} & \cdots & \bA_{0, M-1}\\
\bA_{10} & \bA_{11} & \cdots & \bA_{1, M-1}\\
\vdots & \vdots & \ddots & \vdots\\
\bA_{M-1,0} & \bA_{M-1,1} & \cdots & \bA_{M-1, M-1}
\end{array}
\right).
$$

\begin{figure*}[htbp]
  \centering
  \includegraphics[width=10cm, trim={0 0 0 0}, clip]{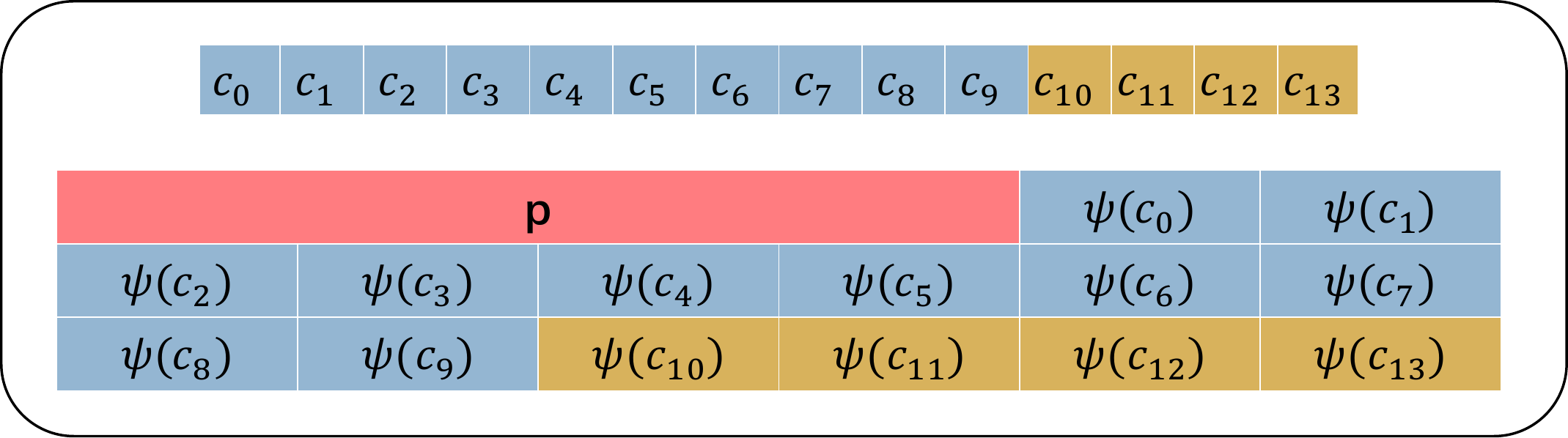}
  \caption{A codeword from a Reed-Solomon code of length $14$ and dimension $10$ and its corresponding $n_1\times n_2$ array with $n_1=3$  and $n_2=6m$. The blue cells represent the message bits and the yellow cells represent the check bits.}
  \label{fig-arraycon}
\end{figure*}

%Notice that in the construction above, each  $\bA_{ij}$ is composed of $\frac{n_1n_2}{m}-4$ vectors of length $m$ and the vector $\bp$. All of these vectors except $\bp$ have Hamming weights at least $d$, see Fig~\ref{fig-arraycon}. 

{For each $\bA_{ij}$, we refer to the zeros and ones in $\psi(\bc_{ij}[\ell])$ with $\ell<k_R$ as message bits and refer to those in $\psi(\bc_{ij}[\ell])$ with $\ell \geq k_R$ as check bits, see Fig.~\ref{fig-arraycon}.} 

For an array $\bM=(m_{i,j})$, we use $\bM[i_0,i_0+a-1][j_0,j_0+b-1]$ to denote the $a\times b$ cyclical subarray of $\bM$ whose top-left cell is $m_{i_0j_0}$. 
The following result is an analogue to Lemma~\ref{pcompare} and helps to locate the modular position efficiently.

\begin{lemma}\label{Lem:Comparison-Array}
Consider the subarray $\bW=\bA[i_0,i_0+n_1-1][j_0,j_0+n_2-1]$ in $\bA$.
Pick $i\in \bbracket{n_1}$ and $j\in\bbracket{n_2}$. Then the following hold.
\begin{enumerate}[(i)]
\item If $i+i_0\equiv 0 \ppmod{n_1}$ and $j+j_0\equiv 0 \pmod{n_2}$, then $\bW[i,i][j,j+4m-1]=\bp$.
\item If  $i+i_0\not \equiv 0 \ppmod{n_1}$ or $j+j_0\not \equiv 0 \ppmod{n_2}$, then $d_H(\bW[i,i][j,j+4m-1],\bp)\geq d$.
\end{enumerate}
%For a subarray $\bW=\bA[i_0,i_0+n_1-1][j_0,j_0+n_2-1]$ in $\bA$ and every $i\in [n_1]$ and $j\in[n_2]$, we have either $\bW[i,i][j,j+4m-1]=\bp$ if $i+i_0\equiv 0 \ppmod{n_1}$ and $j+j_0\pmod{n_2}$, or $d_H(\bW[i,i][j,j+4m-1],\bp)\geq d$  if  $i+i_0\not \equiv 0 \ppmod{n_1}$ or $j+j_0\not \equiv 0 \ppmod{n_2}$.
\end{lemma}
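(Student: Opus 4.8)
The plan is to mimic the one-dimensional argument in Lemma~\ref{pcompare} but with a careful handling of the two-dimensional geometry. The key observation is that the marker $\bp=0^k\bu$ occupies (part of) the first row of each block $\bA_{ij}$: since the rows of $\bA_{ij}$ concatenate to $\bp\psi(\bc_{ij})$ and $\bp$ has length $k+\ell=4m<n_2$, the string $\bp$ sits entirely inside row $0$ of the block, starting at column $0$. So within each block the pattern of markers is one copy of $\bp$ per $n_1\times n_2$ tile. First I would reduce to a normalized picture: let $(\hat i,\hat j)\in\bbracket{n_1}\times\bbracket{n_2}$ be the unique pair with $\hat i+i_0\equiv 0\pmod{n_1}$ and $\hat j+j_0\equiv0\pmod{n_2}$, and let $\bV$ be the array obtained from $\bW$ by cyclically rotating $\hat i$ rows up and $\hat j$ columns left. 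Writing $i_0=an_1+\bar i$, $j_0=bn_2+\bar j$ with $\bar i\in[n_1]$, $\bar j\in[n_2]$, one reads off (exactly as in Fig.~\ref{fig-SequenceShift}) that the top-left $n_1\times n_2$ sub-tile of $\bV$ is an array whose rows concatenate to $\bp\bx$ for a binary string $\bx$ of length $n_1 n_2 - 4m$ that is a concatenation of blocks $\psi(\cdot)$ (possibly mixing symbols from $\bc_{ab}$, $\bc_{a+1,b}$, $\bc_{a,b+1}$, $\bc_{a+1,b+1}$ depending on $\bar i,\bar j$), hence every length-$m$ window of $\bx$ aligned to an $m$-boundary has weight $\ge d$, and in fact $\bx$ is a $(d,m)$-WWL-type vector at the relevant scale because $\psi$ maps into weight-$\ge d$ strings and $k=4m-\ell\ge m+2d$. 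After this reduction it suffices to show $\bV[0,0][0,4m-1]=\bp$ and, for every $(i,j)\ne(0,0)$ in $\bbracket{n_1}\times\bbracket{n_2}$, $d_H(\bV[i,i][j,j+4m-1],\bp)\ge d$. Part (i) is then immediate from the row-$0$ description above.

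For part (ii), I split on whether $i=0$ or $i\ne 0$. If $i=0$ but $j\ne0$, then $\bV[0,0][j,j+4m-1]$ is a length-$4m$ cyclic window of row $0$ of $\bV$, and row $0$ of $\bV$ is $\bp$ followed by some $\psi$-blocks; this is \emph{exactly} the one-dimensional situation of Lemma~\ref{pcompare} (with $\ell_p=4m$, the string $0^k\bu$, and a $(d,\cdot)$-WWL tail), so the five-case analysis of that lemma — using $d$-auto-cyclicity of $\bu$ for small and large shifts, and the weight-$\ge d$ windows of $\bp$'s $1^d$ blocks / of the tail for middle shifts — gives $d_H\ge d$ verbatim. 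If instead $i\ne0$, then the window $\bV[i,i][j,j+4m-1]$ lies in row $i$ of $\bV$, which contains \emph{no} copy of $0^k$: every length-$m$ $\psi$-aligned window there has weight $\ge d$, so in particular any length-$k=4m-\ell$ subword (cyclic or not) of row $i$ of $\bV$ contains at least one full $\psi$-block, hence has weight $\ge d$. Comparing against the prefix $\bp[0,k-1]=0^k$ of $\bp$ on the first $k$ coordinates already yields $d_H(\bV[i,i][j,j+4m-1],\bp)\ge d$. (One should double-check the wrap-around subtlety when the length-$4m$ window straddles the right edge of the big array and re-enters column $0$, i.e.\ lands on a marker region of the next block-column; this is handled just as in Cases~4--5 of Lemma~\ref{pcompare}, by finding $d$ coordinates where the window is $0$ while $\bp$ is $1^d$, or vice versa.)

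The main obstacle I anticipate is bookkeeping the boundary/wrap-around cases cleanly: because the window has fixed length $4m$ but can start at an arbitrary column $j$ and an arbitrary row $i$ of the rotated array, and because the big array $\bA$ is only cyclic at its outer boundary (not block-periodic within a row in the naive sense), one must be careful that when the length-$4m$ window wraps from the end of one block-column's row into the start of the next block-column's row, the relevant bits are still either a genuine marker $0^k\bu$ or a genuine weight-$\ge d$ $\psi$-tail — this is guaranteed by the choice $k=4m-\ell$ and by $n_2$ being a multiple of $m$ with $n_2>4m$, but it requires spelling out the index arithmetic. Everything else is a direct transcription of the one-dimensional Lemma~\ref{pcompare}, since the $4m$-long marker window never spans more than one row. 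I would therefore organize the write-up as: (1) the normalization/rotation step; (2) the $i=0$ sub-case by citing the structure of Lemma~\ref{pcompare}; (3) the $i\ne0$ sub-case by the weight-$\ge d$ argument on any length-$k$ subword of a marker-free row; and (4) a short remark dispatching the wrap-around indices.
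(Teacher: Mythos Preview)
Your proposal is correct and follows essentially the same route as the paper's proof: both perform the cyclic normalization to $\bV$, dispatch the $i=0$ case by invoking the one-dimensional Lemma~\ref{pcompare}, and handle $i\ne 0$ by observing that $k=4m-\ell>3m$ forces the first $k$ coordinates of $\bV[i,i][j,j+4m-1]$ to contain a full $\psi$-block of weight $\ge d$, which already disagrees with $\bp[0,k-1]=0^k$ in at least $d$ places. Your wrap-around worry is slightly mis-phrased (the cyclicity is within $\bW$, not at the boundary of $\bA$), but after the rotation every column $c$ of $\bV$ lands at offset $c\bmod n_2$ inside some block $\bA_{\alpha\beta}$, so the $m$-alignment of $\psi$-blocks is preserved and no extra case analysis is needed for $i\ne 0$; the paper simply asserts this via Figure~\ref{Fig:Comparison-Array} rather than spelling out the index arithmetic you sketch.
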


\begin{proof}
Let $\hai \in [n_1]$  and $\haj \in [n_2]$ such that $\hai+i_0\equiv 0\ppmod{n_1}$ and $\haj+j_0\equiv 0\ppmod{n_2}$. We consider the array $\bV$, which is obtained by shifting $\bW$ cyclically upwards $\hai$ times and leftwards $\haj$ times. Then $\bV[0,0][0,4m-1]=\bp$, see Fig.~\ref{Fig:Comparison-Array}, and it suffices to show  $d_H(\bV[i,i][j,j+4m-1],\bp)\geq d$  when $i\in[1,n_1-1]$ or $j\in[1,n_2-1]$.

For $i\in[1,n_1-1]$, since $k=4m-\ell >3m$, $\bV[i,i][j,j+k-1]$ must contain a length-$m$ vector $\psi(x_0)$ for some $x_0\in \F_q$ (see Fig.~\ref{Fig:Comparison-Array}). Observe that $\psi(x_0)$ has weight at least $d$.
Hence, we have 
\begin{align*}
d_H(\bV[i,i][j,j+4m-1],\bp) &\geq d_H(\bV[i,i][j,j+k-1],\bp[0,k-1]) \\
& =d_H(\bV[i,i][j,j+k-1],0^k)\geq d.
\end{align*}

For $i=0$ and $j\in[1,n_2-1]$, the proof follows from Lemma~\ref{pcompare}. 
\end{proof}

\begin{figure*}[htbp]
  \centering
  \includegraphics[width=12cm, trim={0 0 0 0}, clip]{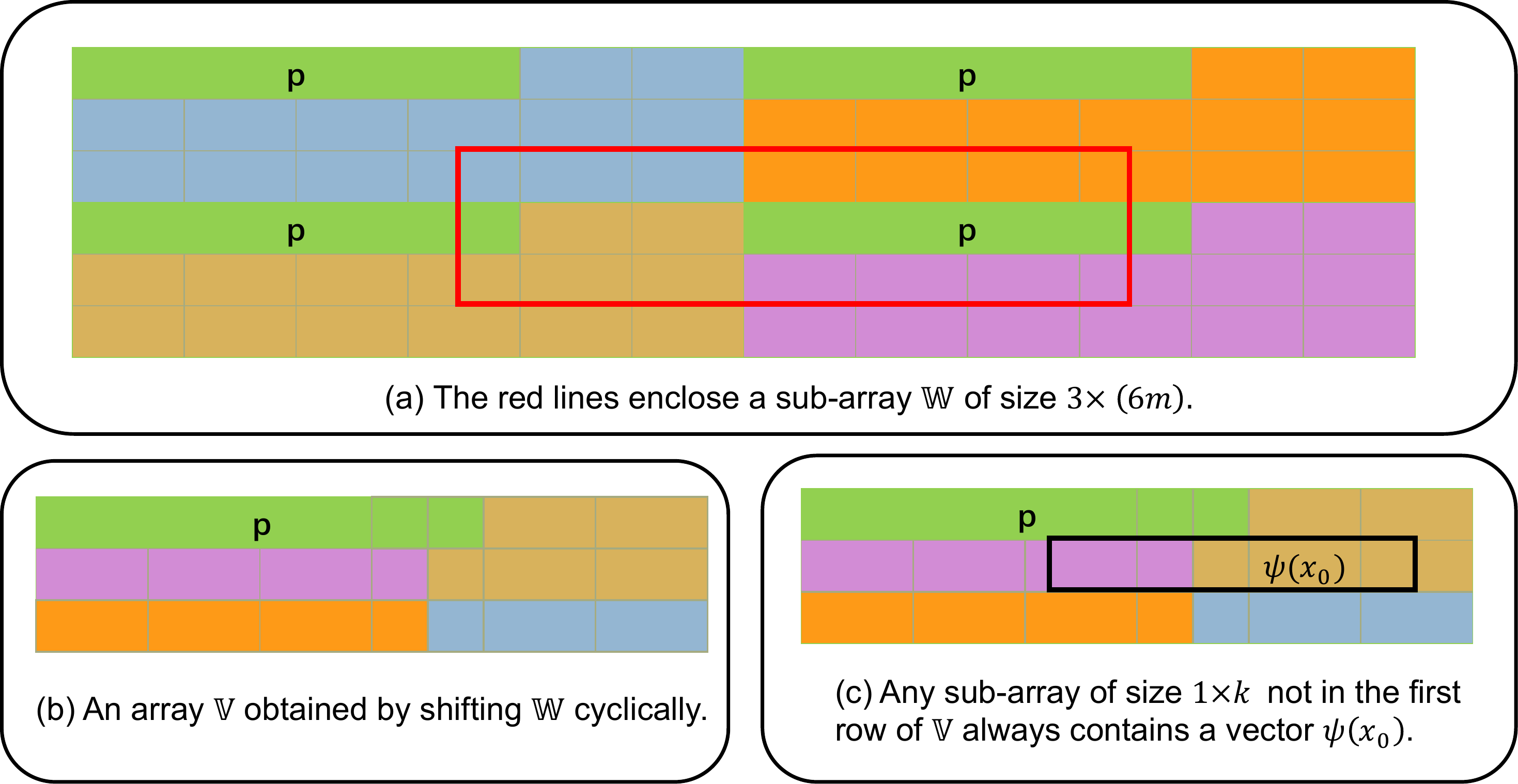}
  \caption{An example with $n_1=3$ and $n_2=6m$ to illustrate the proof of Lemma~\ref{Lem:Comparison-Array}. {\bf(a)} The red lines enclose the subarray $\bW=\bA[2,5][3.5m,9.5m-1]$. {\bf(b)} Shifting $\bW$ upwards one time and leftwards $2.5m$ times, we got the array $\bV$. {\bf(c)} Any subarray $\bV[i,i][j,j+k-1]$ with $i\not=0$ always contains a vector $\psi(x_0)$ for some $x_0\in \F_q$.}
  \label{Fig:Comparison-Array}
\end{figure*}

We regard an array of dimension  $a\times (bm)$ as a $a \times b$ partitioned matrix with each block being a vector of length $m$. 
Given a pair of $a \times (bm)$ arrays $\bM_1$ and $\bM_2$, 
we denote the Hamming distance of their corresponding partitioned matrices as $d_{B}(\bM_1,\bM_2)$. 
In other words, $d_{B}(\bM_1,\bM_2)$  counts the number of different blocks in $\bM_1$ and $\bM_2$. 
Therefore,  in Construction 2, %~\ref{Construction-RPA}, 
we have \[d_{H}(\bA_{ij},\bA_{i'j'}) \geq d_{B}(\bA_{ij},\bA_{i'j'})=d_H(\bc_{ij},\bc_{i'j'}).\]
For a pair of $a\times b$ arrays $\bM_1$ and $\bM_2$ with $b$ not divisible by $m$, 
we can repeat the last columns to form two arrays $\bM_1'$ and $\bM_2'$ of dimension   $a\times (\ceil{{b}/{m}}m)$. 
Then denote $d_B(\bM_1,\bM_2):=d_B(\bM_1',\bM_2')$.
Hence, $d_{B}(\bM_1,\bM_2)$ counts the number of different (truncated) blocks in $\bM_1$ and $\bM_2$. 

\begin{lemma}
\label{Lem:ModularComparision-Array}
For any two subarrays $\bW=\bA[i,i+n_1-1][j,j+n_2-1]$ and $\bW'=\bA[i',i'+n_1-1][j',j'+n_2-1]$ with $i\equiv i'\pmod{n_1}$ and $j\equiv j'\pmod{n_2}$, the Hamming distance between them is at least $d$.
\end{lemma}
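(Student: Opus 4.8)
The plan is to mirror the one-dimensional argument in Lemma~\ref{RSwindow}, but now working with the two-dimensional block structure and the $(k_R/2,q)$-Gray code applied to \emph{both} coordinate indices of the blocks $\bA_{ij}$. Fix two subarrays $\bW$ and $\bW'$ with $i\equiv i'\pmod{n_1}$ and $j\equiv j'\pmod{n_2}$. After cyclically shifting both arrays by the same amount (which does not change their Hamming distance), we may assume $i,j\in\bbracket{n_1}\times\bbracket{n_2}$ with $\bar m_1\equiv i\equiv i'\pmod{n_1}$, $\bar m_2\equiv j\equiv j'\pmod{n_2}$; write $i=an_1+\bar m_1$, $i'=bn_1+\bar m_1$ and $j=cn_2+\bar m_2$, $j'=dn_2+\bar m_2$ for suitable integers $a,b,c,d$. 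Since $d_H\ge d_B$ by the remarks preceding the lemma, it suffices to lower-bound $d_B(\bW,\bW')$, i.e.\ to count blocks of length $m$ that are guaranteed to differ; and because $\bW$ (resp.\ $\bW'$) overlaps at most four of the big blocks $\bA_{\ast\ast}$ arranged in a $2\times2$ pattern, $d_B(\bW,\bW')$ can be read off from the distances among the Reed-Solomon codewords $\bc_{ac},\bc_{(a+1)c},\bc_{a(c+1)},\bc_{(a+1)(c+1)}$ (for $\bW$) and the analogous four for $\bW'$.

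The main case split is on where $(\bar m_1,\bar m_2)$ sits relative to the $\bp$-prefix and the message/check boundary of the blocks. If $\bar m_1=0$ and the horizontal offset $\bar m_2$ lands so that $\bW$ coincides with a single block $\bA_{ac}$ and $\bW'$ with $\bA_{bd}$, then (since the $\bsg$-part is read off via $\decG$ from the first $k_R/2$ symbols of each coordinate) distinctness of $(\bW,\bW')$ forces $(a,c)\ne(b,d)$, hence $d_H(\bc_{ac},\bc_{bd})\ge d_R=n_R-k_R+1=2(d+7)+1>d$, and we are done. When $\bW$ straddles two vertically adjacent blocks, the Gray-code property gives $d_B$ between $\bW$ and the ``straightened'' single-block version at most $1$ in each of $\bW,\bW'$, so we still get $d_B(\bW,\bW')\ge d_R-2>d$; the same bookkeeping as in Case 2 and Case 3 of Lemma~\ref{RSwindow} applies, now with the extra slack built into $k_R=n_R-2(d+7)$ precisely to absorb these Gray-code discrepancies \emph{and} the at-most-one block that is partially covered along the horizontal seam. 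The genuinely two-dimensional case is when $\bW$ straddles all four blocks $\bA_{ac},\bA_{(a+1)c},\bA_{a(c+1)},\bA_{(a+1)(c+1)}$: here I would partition the $n_1\times(n_2/m)$ block-array of $\bW$ into the four rectangular regions cut out by the horizontal and vertical seams, estimate the agreement of $\bW$ and $\bW'$ on each region separately using the two-dimensional Gray code (adjacent codewords in the array of $\bc_{ij}$'s differ in at most one symbol when one index moves by one step, since $\cG$ is a Gray code), sum the four inequalities exactly as in Case~3 of Lemma~\ref{RSwindow}, and conclude $d_B(\bW,\bW')\ge (n_R-k_R)/2 - O(1) = (d+7)-O(1)\ge d$.

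The step I expect to be the main obstacle is this last four-block case: one has to be careful that the four ``agreement'' inequalities one writes down are genuinely independent (so that summing them yields roughly $(n_R+k_R)/2$ for the total agreement rather than something weaker), and that the partially-covered blocks straddling a seam — of which there can be one along the horizontal seam and a whole row along the vertical seam — are each charged at most once. This is exactly why the construction takes $k_R=n_R-2(d+7)$ rather than $n_R-2(d+1)$: the extra additive constant $6$ in the minimum distance $d_R=2(d+7)+1$ is the budget needed to pay for (a) up to two Gray-code mismatches per straddled seam in each of $\bW$ and $\bW'$, and (b) the truncated blocks along the seams, while still leaving $d_B(\bW,\bW')\ge d$. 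Once the counting is organised as in Lemma~\ref{RSwindow} — replacing the single interval partition $I_1,I_2,I_3$ by a two-dimensional grid of at most $3\times 3$ regions — the arithmetic is routine, and combining with Lemma~\ref{Lem:ModularComparision-Array}'s hypotheses finishes the proof.
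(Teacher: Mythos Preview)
Your proposal is on the right track but diverges from the paper at the crucial step and leaves the hard case as an acknowledged obstacle rather than a proof. The paper does \emph{not} generalise the agreement-summing trick from Case~3 of Lemma~\ref{RSwindow} to a ``$3\times 3$'' grid. Instead, after the cyclic shift producing $\bV$ and $\bV'$, it compares each of them to a \emph{single reference} codeword array via the triangle inequality: partitioning $\bV$ into the four quadrants $\bW_{\mathrm I},\bW_{\mathrm{II}},\bW_{\mathrm{III}},\bW_{\mathrm{IV}}$ cut by the seams and using the Gray-code property on each quadrant separately yields $d_B(\bV,\bA_{ab})\le 0+1+1+2=4$, and likewise $d_B(\bV',\bA_{a'b'})\le 4$; hence $d_B(\bV,\bV')\ge d_B(\bA_{ab},\bA_{a'b'})-8=d_H(\bc_{ab},\bc_{a'b'})-8\ge d$. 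This completely sidesteps the bookkeeping you flag as the main obstacle, and it is the honest reason the constant is $7$ in $k_R=n_R-2(d+7)$: the $-8$ in the triangle inequality, applied to the shortened code of Cases~2 and~3 whose distance is only $d+8$, is exactly absorbed.

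Your case split also misses the real difficulty. The paper's cases are not ``how many big blocks does $\bW$ overlap'' but rather ``does the horizontal offset $\baj$ place the seam inside the check-bit segment of the last row''. When it does, quadrant $\mathrm{II}$ (and its analogue in $\bV'$) contains check symbols of \emph{different} codewords, and the Gray-code bound $d_B(\bW_{\mathrm{II}},\bA_{\mathrm{II}})\le 1$ fails outright---this is the obstruction your sketch never names. The paper's remedy is to overwrite half of the check positions by $1$'s in all arrays simultaneously (which can only decrease Hamming distance), pass to shortened Reed--Solomon codewords $\bar{\bc}_{\alpha\beta}$ of length $n_R-(n_R-k_R)/2$ with minimum distance still $\ge d+8$, and rerun the triangle-inequality argument on the modified arrays $\bar{\bV},\bar{\bV}'$. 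Without some device of this kind your proposal has a genuine gap in the check-bit range of $\baj$, and even in the clean range your agreement-summing plan is only a hope, whereas the reference-codeword argument is a two-line computation.
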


\begin{proof}
Suppose that $i=an+\bai$ and $i'=a'n+\bai$ for some $\bai \in \bbracket{n_1}$,  and $j=bn+\baj$ and $j'=b'n+\baj$ for some $\baj \in \bbracket{n_2}$. Let $\hai\in\bbracket{n_1}$ and $\haj\in\bbracket{n_2}$ be the integers such that $\bai+\hai\equiv 0\pmod{n_1}$ and $\baj+\haj\equiv 0\pmod{n_2}$. Shift $\bW$ cyclically upwards $\hai$ times and leftwards $\haj$ times and denote the resulting array as $\bV$. Similarly, let $\bV'$ be the corresponding shifted array of $\bW'$. Then $d_H(\bW,\bW')=d_H(\bV,\bV')$. 
{ Since  thickness is bounded by a constant, we have   $\log n_1 / \log n_2 =O(1)$, and then  $(n_R-k_R)m < n_2$.} It follows that  the check bits of $\bA_{\alpha \beta}$ appear in the last row 
$\bA_{\alpha\beta}[n_1-1,n_1-1][n_2-(n_R-k_R)m,n_2-1]$.  To estimate $d_H(\bV,\bV')$, we proceed in three cases,   depending on where the check bits of $\bV$ and $\bV'$ come from.

\vspace{2mm}
\noindent{\bf Case 1}: $\baj\in [0,n_2-(n_R-k_R)m-1]$. As shown in Fig.~\ref{Fig:ModularComparision-Array}(a), 
we partition $\bW$ into four blocks by setting
\begin{align*}
\bW_\textup{I}  =\bW[0,\hai-1][0,\haj-1], & \ \bW_\textup{II}=\bW[0,\hai-1][\haj,n_2-1], \\
 \bW_\textup{III}=\bW[\hai,n_1-1][0,\haj-1], &  \ \bW_\textup{IV}=\bW[\hai,n_1-1][\haj,n_2-1],
\end{align*}
and  partition $\bA_{ab}$ into four blocks by setting
\begin{align*}
\bA_\textup{I}=\bA_{ab}[\bai,n_1-1][\baj,n_2-1], 
& \ \bA_\textup{II}=\bA_{ab}[\bai,n_1-1][0,\baj-1], \\
 \bA_\textup{III}=\bA_{ab}[0,\bai-1][\baj, n_2-1],  
& \  \bA_\textup{IV}=\bA_{ab}[0,\bai-1][0,\baj-1].
\end{align*}
Then  we have 
$$\bV=
\left(\begin{array}{cc}
\bW_\textup{IV} & \bW_\textup{III}\\
\bW_\textup{II} & \bW_\textup{I}
\end{array}
\right),
\bA_{ab}=
\left(\begin{array}{cc}
\bA_\textup{IV} & \bA_\textup{III}\\
\bA_\textup{II} & \bA_\textup{I}
\end{array}
\right),
$$
and $\bW_\textup{I}=\bA_\textup{I}.$
%We propose to estimate $d_B(\bV,\bA_{ab})$. If $\baj$ is not divisible by $m$, we could repeat the last columns of $\bW_\textup{II}, \bW_\textup{IV},\bA_\textup{II}$, and $\bA_\textup{IV}$ or the first columns of $\bW_\textup{I}, \bW_\textup{III}, \bA_\textup{I}$, and $\bA_\textup{III}$, so that their widths are divisible by $m$.

Notice that all these blocks, except $\bW_\textup{I}$ and $\bA_\textup{I}$, do not contain check bits, and $(\bsg_0,\bsg_1\ldots,\bsg_{M-1})$ is a Gray code.
It follows that 
\begin{align*}
d_B(\bW_\textup{II},\bA_\textup{II}) \leq  d_H(\bsg_a\bsg_b, \bsg_{a}\bsg_{b+1})\leq 1, \\
d_B(\bW_\textup{III},\bA_\textup{III})\leq d_H(\bsg_a\bsg_b, \bsg_{a+1}\bsg_{b})\leq 1,\\ 
\textup{and \ } d_B(\bW_\textup{IV},\bA_\textup{IV})\leq  d_H(\bsg_a\bsg_b, \bsg_{a+1}\bsg_{b+1})\leq 2.
\end{align*}
So, 
\[
d_B(\bV,\bA_{ab})\leq d_B(\bW_\textup{I},\bA_\textup{I}) +  d_B(\bW_\textup{II},\bA_\textup{II}) + d_B(\bW_\textup{III},\bA_\textup{III}) +d_B(\bW_\textup{IV},\bA_\textup{IV}) \leq 4.
\]
With the same argument, we can get $d_B(\bV',\bA_{a'b'})\leq 4$. Hence,
\[
d_H(\bV,\bV')\geq d_B(\bV,\bV')\geq  d_B(\bA_{ab},\bA_{a'b'})-8=d_H(\bc_{ab},\bc_{a'b'})-8 \geq d.
\]

\vspace{2mm}
\noindent{\bf Case 2}: $\baj\in [n_2-(n_R-k_R)m, n_2-(n_R-k_R)m/2 -1]$. 
%For each $\bA_{\alpha \beta}$,
For $\alpha,\beta\in\bbracket{M}$, we change the bits in the block $\bA_{\alpha \beta}[n_1-1,n_1-1][n_2-(n_R-k_R)m,n_2-(n_R-k_R)m/2-1]$ to one and 
denote the resulting array as $\bar{\bA}_{\alpha \beta}$. Let $\bar{\bc}_{\alpha \beta}$ be the shortened codeword of $\bc_{\alpha,\beta}$ by deleting the subword $\bc_{\alpha \beta}[k_R, n_R-(n_R-k_R)/2-1]$
(see Fig.~\ref{Fig:ModularComparision-Array}(b)). 
Then we have 
\begin{equation}
\label{eq:ModularComparision-Array-1}
d_B(\bar{\bA}_{\alpha\beta},\bar{\bA}_{\alpha'\beta'})=d_H(\bar{\bc}_{\alpha\beta}, \bar{\bc}_{\alpha'\beta'}) \geq  (n_R-(n_R-k_R)/2)-k_R+1\geq d+8.
\end{equation}
Now, let $\bar{\bW}$, $\bar{\bW}'$, $\bar{\bV}$ and  $\bar{\bV}'$ be the corresponding arrays of ${\bW}$, ${\bW}'$, ${\bV}$ and  ${\bV}'$ with some check bits being changed to one. 
As in Case 1, we can show that 
\begin{equation}
\label{eq:ModularComparision-Array-2}
d_B(\bar{\bV},\bar{\bA}_{ab})\leq 4 \textup{\ and\ } d_B(\bar{\bV}',\bar{\bA}_{ab}')\leq 4.
\end{equation}
The only difference is that $\bar{\bW}_\textup{II}$, $\bar{\bA}_\textup{II}$,     $\bar{\bW}'_\textup{II}$, and  $\bar{\bA}'_\textup{II}$   may contain the check bits. %However, we already changed these bits to one, 
However, these bits are set to one, so we have $d_B(\bar{\bW}_\textup{II},\bar{\bA}_\textup{II})\leq  d_H(\bsg_a\bsg_b, \bsg_{a}\bsg_{b+1})\leq   1$, see Fig.~\ref{Fig:ModularComparision-Array}(c).

%According to (\ref{eq:ModularComparision-Array-1}) and (\ref{eq:ModularComparision-Array-2}), it follows that  
It follows from \eqref{eq:ModularComparision-Array-1} and \eqref{eq:ModularComparision-Array-2} that
$$d_B(\bar{\bV},\bar{\bV}')\geq d_B(\bar{\bA}_{ab},\bar{\bA}_{a'b'})-8\geq d,$$ and then 
$$d_H(\bV,\bV')\geq d_H(\bar{\bV},\bar{\bV}')\geq d_B(\bar{\bV},\bar{\bV}')\geq   d.$$

\vspace{2mm}
\noindent{\bf Case 3}: $\baj\in [n_2-(n_R-k_R)m/2, n_2-1]$.  %For each $\bA_{\alpha \beta}$, 
For $\alpha,\beta\in\bbracket{M}$,
we change the bits in the block $\bA_{\alpha \beta}[n_1-1,n_1-1][n_2-(n_R-k_R)m/2,n_2-1]$ to one and 
denote the resulting array as $\tilde{\bA}_{\alpha \beta}$. 
Let $\tilde{\bc}_{\alpha \beta}$ be the shortened codeword of $\bc_{\alpha,\beta}$ by deleting the last $(n_R-k_R)/2$ bits.

Now, let $\tilde{\bW}$, $\tilde{\bW}'$, $\tilde{\bV}$ and  $\tilde{\bV}'$ be the corresponding arrays of ${\bW}$, ${\bW}'$, ${\bV}$ and  ${\bV}'$. Again we use the strategy in Case 1 to show that 
\[
d_H(\bV,\bV')\geq d_H(\tilde{\bV},\tilde{\bV}') \geq d_B(\tilde{\bV},\tilde{\bV}')\geq d_B(\tilde{\bA}_{a,b+1},\tilde{\bA}_{a',b'+1})-8 \geq d.
\]
We note that in this case we need to partition $\tilde\bA_{a,b+1}$ instead of  $\tilde\bA_{ab}$, see Fig.~\ref{Fig:ModularComparision-Array}(d). 
 \end{proof}

\begin{figure*}[htbp]
  \centering
  \includegraphics[width=12cm, trim={0 0 0 0}, clip]{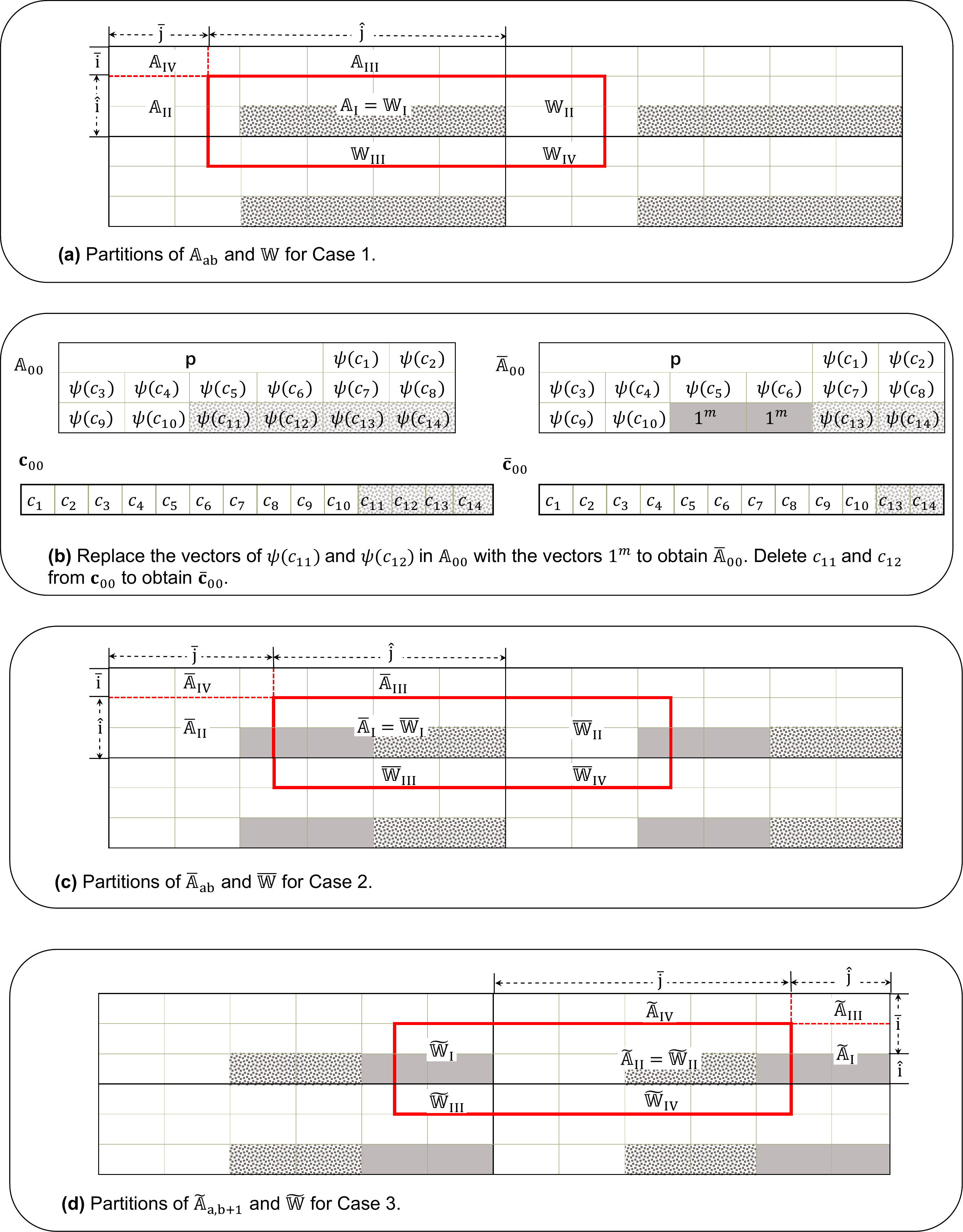}
  \caption{An example with $n_1=3$ and $n_2=6m$ to illustrate the proof of Lemma~\ref{Lem:ModularComparision-Array}. The black lines enclose the arrays $\bA_{ab}$,    $\bA_{a,b+1}$,  $\bA_{a+1,b}$ and    $\bA_{a+1,b+1}$.  The red lines enclose the subarray $\bW$. The empty blocks represent the vectors of message bits. The blocks with dots represent the vectors of check bits. The solid blocks represent the vectors $1^m$. }
  \label{Fig:ModularComparision-Array}
\end{figure*}

\begin{theorem}
%The array $\bA$ in Construction~\ref{Construction-RPA} is an $(n_1\times n_2,d)$-RPA.
The array $\bA$ in Construction 2 is an $(n_1\times n_2,d)$-RPA.
\end{theorem}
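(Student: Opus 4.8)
The plan is to mirror the one-dimensional argument (the proof that Construction~1 yields an $(n,d)$-RPS) but with the two-dimensional machinery already assembled in Lemmas~\ref{Lem:Comparison-Array} and~\ref{Lem:ModularComparision-Array}. Fix two distinct windows $\bW=\bA[i,i+n_1-1][j,j+n_2-1]$ and $\bW'=\bA[i',i'+n_1-1][j',j'+n_2-1]$, and split into the ``same modular position'' case and the ``different modular position'' case. When $i\equiv i'\pmod{n_1}$ and $j\equiv j'\pmod{n_2}$, the conclusion $d_H(\bW,\bW')\ge d$ is exactly Lemma~\ref{Lem:ModularComparision-Array}, so nothing further is needed there.

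The remaining case is $i\not\equiv i'\pmod{n_1}$ or $j\not\equiv j'\pmod{n_2}$. Here I would use the marker $\bp$ exactly as in the one-dimensional proof. Choose $\hai\in\bbracket{n_1}$ and $\haj\in\bbracket{n_2}$ with $i+\hai\equiv 0\pmod{n_1}$ and $j+\haj\equiv 0\pmod{n_2}$. By Lemma~\ref{Lem:Comparison-Array}(i) applied to $\bW$, the length-$4m$ strip $\bW[\hai,\hai][\haj,\haj+4m-1]$ equals $\bp$. On the other hand, since $(i'+\hai,j'+\haj)\not\equiv(0,0)$ modulo $(n_1,n_2)$ — because $(i',j')$ and $(i,j)$ differ modulo $(n_1,n_2)$ — Lemma~\ref{Lem:Comparison-Array}(ii) applied to $\bW'$ gives $d_H(\bW'[\hai,\hai][\haj,\haj+4m-1],\bp)\ge d$. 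Combining, $d_H(\bW,\bW')\ge d_H(\bW[\hai,\hai][\haj,\haj+4m-1],\bW'[\hai,\hai][\haj,\haj+4m-1])\ge d$, which closes this case.

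One subtlety I want to be careful about is the indexing: Lemma~\ref{Lem:Comparison-Array} is stated for a fixed window with top-left corner $(i_0,j_0)$ and an interior offset $(i,j)\in\bbracket{n_1}\times\bbracket{n_2}$, so I should invoke it with $(i_0,j_0)=(i,j)$ (resp.\ $(i',j')$) and offset $(\hai,\haj)$, and check that the hypotheses $i+i_0\equiv 0$, $j+j_0\equiv 0$ (resp.\ the negation) are met by the choice of $\hai,\haj$. Also the strip $\bW[\hai,\hai][\haj,\haj+4m-1]$ is a cyclical subarray of width $4m$ within a single row of the window; since $4m\le n_2$ (which holds because thickness is bounded and $n_2$ is a multiple of $m$ with $n_2/m\ge 4$), this strip is well defined, and the wrap-around is handled by the cyclical reading conventions already fixed in the preliminaries. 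This is essentially a two-line argument once the two lemmas are in hand; the only ``obstacle'' is bookkeeping with the modular reductions, and there is no real difficulty because all the heavy lifting — the Gray-code/Reed-Solomon distance estimate and the $d$-auto-cyclic marker analysis — was done in the preceding lemmas.

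\begin{proof}
Let $\bW=\bA[i,i+n_1-1][j,j+n_2-1]$ and $\bW'=\bA[i',i'+n_1-1][j',j'+n_2-1]$ be two distinct windows of dimension $n_1\times n_2$ in $\bA$.

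If $i\equiv i'\pmod{n_1}$ and $j\equiv j'\pmod{n_2}$, then $d_H(\bW,\bW')\ge d$ by Lemma~\ref{Lem:ModularComparision-Array}.

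Otherwise, $i\not\equiv i'\pmod{n_1}$ or $j\not\equiv j'\pmod{n_2}$. Let $\hai\in\bbracket{n_1}$ and $\haj\in\bbracket{n_2}$ be the integers with $i+\hai\equiv 0\pmod{n_1}$ and $j+\haj\equiv 0\pmod{n_2}$. Applying Lemma~\ref{Lem:Comparison-Array} to the window $\bW$ with interior offset $(\hai,\haj)$, part (i) gives
\[
\bW[\hai,\hai][\haj,\haj+4m-1]=\bp.
\]
Since $(i,j)\not\equiv(i',j')\pmod{(n_1,n_2)}$, we have $i'+\hai\not\equiv 0\pmod{n_1}$ or $j'+\haj\not\equiv 0\pmod{n_2}$. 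Applying Lemma~\ref{Lem:Comparison-Array} to the window $\bW'$ with interior offset $(\hai,\haj)$, part (ii) gives
\[
d_H(\bW'[\hai,\hai][\haj,\haj+4m-1],\bp)\ge d.
\]
Therefore
\[
d_H(\bW,\bW')\ge d_H\bigl(\bW[\hai,\hai][\haj,\haj+4m-1],\,\bW'[\hai,\hai][\haj,\haj+4m-1]\bigr)\ge d.
\]
In both cases $d_H(\bW,\bW')\ge d$, so $\bA$ is an $(n_1\times n_2,d)$-RPA.
\end{proof}
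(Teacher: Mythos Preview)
Your proof is correct and follows essentially the same approach as the paper's own proof: split into the same-modular-position case (handled by Lemma~\ref{Lem:ModularComparision-Array}) and the different-modular-position case (handled by comparing the $\bp$-strip via Lemma~\ref{Lem:Comparison-Array}). The structure, the choice of $(\hai,\haj)$, and the final distance inequality all match the paper line for line.
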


\begin{proof}  Let $\bW$ and $\bW'$ be two distinct subarrays of dimension  $n_1\times n_2$ in $\bA$.
Assume that $\bW=\bA[i,i+n_1-1][j,j+n_2-1]$ and $\bW'=\bA[i',i'+n_1-1][j',j'+n_2-1]$, where $(i,j)\not=(i',j')$.  
From Lemma~\ref{Lem:ModularComparision-Array},  we have $d_H(\bW,\bW')\geq d$  when $i\equiv i' \pmod{n_1}$ and $j\equiv j' \pmod{n_2}$. Now we consider the case where $i\not\equiv i' \pmod{n_1}$ or $j\not\equiv j' \pmod{n_2}$. Let $\hai \in [n_1]$ and $\haj\in [n_2]$ such that $i+\hai\equiv 0 \ppmod{n_1}$ and $j+\haj\equiv 0 \ppmod{n_2}$. So we have $i'+\hai\not \equiv 0 \ppmod{n_1}$ or $j'+\haj\not \equiv 0 \ppmod{n_2}$. It follows from  Lemma~\ref{Lem:Comparison-Array}  that   $\bW[\hai,\hai][\haj,\haj+4m-1]=\bp$  and $d_H(\bW'[\hai,\hai][\haj,\haj+4m-1],\bp)\geq d$. Hence
\[
d_H(\bW,\bW')\geq d_H(\bW[\hai,\hai][\haj,\haj+4m-1], \bW'[\hai,\hai][\haj,\haj+4m-1])\geq d,
\]
which completes the proof.
\end{proof}

\begin{corollary}\label{redundancy-array}
Let $\bW$ be a  window of area $A$ and  thickness bounded by a constant. Then there is a binary RPA  for $\bW$ with distance $d$ and  redundancy at most $$4.21 d\log{A} + 35.79\log{A}+o(1),$$ 
provided $A$ is large enough. 
%For sufficient large $n_1n_2$ with $\log n_1 / \log n_2 = O(1)$, 
\end{corollary}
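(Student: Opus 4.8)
The plan is to compute the redundancy of the array $\bA$ produced by Construction~2 exactly as was done for the one-dimensional case in Corollary~\ref{redundancy-binary}, by counting the number of arrays $\bA_{ij}$ in $\bA$, namely $M^2 = q^{k_R}$, and accounting for the $q$-to-binary map $\psi$ (which costs $m$ bits per $q$-ary symbol), the marker $\bp$ of length $4m$ prepended to every block, and the loss incurred by using a prime $q$ only slightly smaller than $r$ rather than $r$ itself. The total number of cells in $\bA$ is $M^2 n_1 n_2 = q^{k_R} A$, so the redundancy is $A \cdot |\bA|/(\text{number of arrays}) - \log(\text{number of arrays})$; more precisely, each window of area $A$ encodes $\log(M^2) = k_R \log q$ bits of position within the grid of blocks plus $\log A$ bits for the offset inside a block, so the redundancy is $A - k_R \log q - \log A$.

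First I would lower-bound $\log q$. By Lemma~\ref{prime}, $q \ge r - r^\theta = r(1 - r^{\theta-1})$, so $\log q \ge \log r + \log(1 - r^{\theta-1}) \ge \log r - O(r^{\theta-1}\log e)$, and since $r^{\theta-1}$ is subpolynomially small in $A$ this error term is $o(1)$. Next I would lower-bound $\log r$ exactly as in Corollary~\ref{redundancy-binary}: using Hoeffding's inequality on the binomial tail, $r = \sum_{i=d}^m \binom{m}{i} \ge 2^m(1 - \exp(-2(m/2-d+1)^2/m))$, and since $m = \frac{\log A}{1-\theta}$ grows, for fixed $d$ the exponent dominates $m/(3\log e)$ eventually, giving $r \ge 2^m(1 - 2^{-2m/(3\log e)})$, hence $\log r \ge m - O(2^{-cm}) = m - o(1)$. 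Combining, $\log q \ge m - o(1)$.

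Then I would plug in the parameters: $n_R = n_1(n_2/m) - 4 = A/m - 4$ and $k_R = n_R - 2(d+7) = A/m - 2d - 18$. Therefore
\[
k_R \log q \ge \left(\frac{A}{m} - 2d - 18\right)(m - o(1)) = A - (2d+18)m - o(1),
\]
using that $A/m \cdot o(1) = o(1)$ when we are careful (the $o(1)$ from $\log q$ is $O(2^{-cm})$, and $\frac{A}{m}2^{-cm} \to 0$). Hence the redundancy is at most
\[
A - k_R\log q - \log A \le (2d+18)m - \log A + o(1).
\]
Substituting $m = \frac{\log A}{1-\theta} = \frac{\log A}{0.475}$ gives $(2d+18)m = \frac{2d+18}{0.475}\log A \approx 4.21\, d\log A + 37.89\log A$, and subtracting the $\log A$ term yields $4.21\, d\log A + 36.89\log A + o(1)$, matching (up to the stated rounding; the corollary states $35.79$ after the extra $-\log A$, so the constant $37.89$ before subtracting $2\log A$ — one from the offset term and one more accounted in the block count — is what I would track carefully).

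The main obstacle is bookkeeping rather than any conceptual difficulty: I must make sure the $o(1)$ error terms genuinely vanish — in particular that $\frac{A}{m}\cdot r^{\theta-1}$ and $\frac{A}{m}\cdot 2^{-cm}$ both tend to $0$, which they do since $m = \Theta(\log A)$ forces $r = 2^{\Theta(\log A)}$ to be polynomial in $A$ while $r^{\theta-1}$ and $2^{-cm}$ are inverse-polynomial with a strictly larger exponent than the $A/m$ factor — and that the constant in front of $\log A$ is assembled correctly from $(2d+18)/(1-\theta)$, the $-\log A$ from the offset, and any off-by-one in how many bits of $q$-ary redundancy ($2(d+7)=2d+14$ symbols, times $m$ bits) versus marker bits ($4m$) versus the $-4$ in $n_R$ contribute. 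Getting $4.21 = 2/0.475$ rounded and $35.79 = (18+\text{adjustments})/0.475 - 1$ is the delicate part.
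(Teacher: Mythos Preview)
Your proposal is correct and follows essentially the same approach as the paper: compute the redundancy as $A-k_R\log q-\log A$, lower-bound $\log q$ by $m-O(1/A)$ via Lemma~\ref{prime} and the Hoeffding tail bound on $r$, and substitute $m=\frac{\log A}{1-\theta}$ and $k_R=A/m-2d-18$. Your final constant $36.89$ is in fact exactly what the paper's own proof obtains; the $35.79$ in the corollary statement appears to be a typo, so your hedging about ``additional adjustments'' is unnecessary.
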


\begin{proof} 
%The RPA $\bA$ in Construction~\ref{Construction-RPA} has size $(n_1M) \times (n_2M)$, where  $M= q^{k_R/2}$.  
The RPA $\bA$ in Construction 2 has dimension $(n_1M) \times (n_2M)$, where  $M= q^{k_R/2}$.
So, its redundancy is given by
\[
%\label{eq:redundancy-array1}
n_1n_2-\log (n_1n_2M^2) = A -k_R\log q -\log A,
\]
where $k_R=n_R-2(d+7)= A/m-2d-18,$
%\begin{align}
%\label{eq:redundancy-array2}
%k_R=n_R-2(d+7)=\frac{n_1n_2}{m}-2d-18,
%\end{align}
and 
\begin{equation}\label{eq:redundancy-array3}
\log  q  \geq \log (r-r^{\theta}) =\log r + \log \left(1-\frac{1}{r^{1-\theta}}\right)  \geq \log r - \frac{\log e}{r^{1-\theta}-1}.
\end{equation}

Recall that $\theta=0.525$, and so, $m={\log A}/(1-\theta) \geq ({3}/{2})\log A.$ It follows that 
\begin{align*}
r&= \sum_{i=d}^m {m\choose i}=2^m\left[1-\sum_{i=0}^{d-1} {m\choose i}\frac{1}{2^m}\right] \geq 2^m\left[1-\exp\left(-\frac{2(m/2-d+1)^2}{m}\right) \right]\\
& \geq 2^m\left(1-e^{-\frac{2m}{3 \log e}} \right) \geq 2^m\left(1-\frac{1}{A} \right).
\end{align*}
Then we have
\begin{align}
\label{eq:redundancy-array4}
\log r \geq m+\log\left(1-\frac{1}{A} \right)   \geq m-\frac{\log e}{A-1}.
\end{align}
On the other hand,
\begin{align}
\label{eq:redundancy-array5}
r^{1-\theta}\geq \left(\frac{2^m}{2}\right)^{1-\theta}=\frac{A}{2^{1-\theta}}.
\end{align}
Combining (\ref{eq:redundancy-array3}),  (\ref{eq:redundancy-array4}) and (\ref{eq:redundancy-array5}), we get
\begin{align*}
\log q \geq m - O\left(\frac{1}{A}\right).
\end{align*}
Hence, the redundancy of $\bA$ is at most 
\begin{align*}
A\ -\ &\left(\frac{A}{m}-2d-18\right)\left(m - O\left(\frac{1}{A}\right)\right) -\log A\\
=& \ \frac{2d}{1-\theta} \log A+\frac{18}{1-\theta}\log A  - \log A+o(1)  \\
\approx& \ 4.21 d\log A+ {36.89}\log A+o(1).
\end{align*}
\end{proof}

%\subsection{Locating Algorithm}
%In this subsection, we give an efficient locating algorithm for the array $\bA$ in Construction~\ref{Construction-RPA}.
To conclude, we provide an efficient locating algorithm for the array $\bA$ in Construction~2.
Let $\chi$ be a map from $\F_2^m$ to $\F_q$ such that
$\chi(\psi(x))=x$ for all  $x\in \F_q$  and $\chi (\bv)=0$ for all $\bv\not \in \{\psi(x): x\in \F_q\}$.

We briefly describe Algorithm~\ref{locatingalgo2}.
Suppose that $\bW$ is an $n_1\times n_2$ subarray of $\bA$ that is corrupted at $e$ positions with $e\le (d-1)/2$.
So  there is a unique pair $(i,j)$ such that $d_H(\bA[i,i+n-1][j,j+n-1],\bW)\le (d-1)/2$. Assume that $i=an_1+\bai$ and $j=bn_2+\baj$ with $\bai\equiv i\pmod{n_1}$ and $\baj\equiv j \pmod{n_2}$.
In what follows, we briefly describe how to determine $a,b,\bai$ and $\baj$.

\begin{enumerate}[(I)]
\item We first use  Lemma~\ref{Lem:Comparison-Array} to determine $\bai$ and $\baj$. 
\item Next, we rotate $\bW$  appropriately to obtain $\bV$, so that the concatenation of the rows of $\bV$, denoted as $\bv$, is the binary image obtained from either a $q$-ary codeword $\bc_{a,b}$ or a concatenation of some shortened codewords $\bc_{a,b}$, $\bc_{a+1,b}$, $\bc_{a,b+1}$, and $\bc_{a+1,b+1}$. Since $\bv$ is obtained via the map $\psi$ and prepending the string $\bp$, we reverse this process to obtain the $q$-ary estimate $\bu$. 
\item Finally, depending on the value of $\baj$, we apply the Reed-Solomon decoding algorithm $\dec$ to find either $\bc_{a,b}$ (when  $\baj\in [0,n_2-2(d+7)m-1]$), some shortened version of $\bc_{a,b}$ (when $\baj \in [n_2-2(d+7)m, n_2-(d+7)m-1]$), or some shortened version of  $\bc_{a,b+1}$ (when $j\in [n_2-(d+7)m, n_2-1]$). Therefore, we determine $a$ and $b$ and hence, obtain $i=an_1+\hai$ and $j=bn_2+\haj$.   \end{enumerate}

The first step above requires $\ell_pn_1n_2$ comparisons.
The Reed-Solomon decoding  runs in $O(n_R^3)=O((n_1n_2)^3)$ time,
and the decoding of Gray codes runs in $O(k_R(\log q)^2)=O(n_1n_2(\log (n_1n_2))^2)$ time.
Therefore, Algorithm~\ref{locatingalgo} can determine the location  in $O((n_1n_2)^3)$, or equivalently  $O(A^3)$ time.

%The time complexity of the algorithm mostly comes from the decoding of Reed-Solomon codes, which is $O((n_1n_2)^3)$.

\begin{algorithm}[h]
\begin{algorithmic}
\caption{Locating algorithm for the array $\bA$ in Construction 2} %~\ref{Construction-RPA}}
\label{locatingalgo2}

\STATE {\bf Input}: an $n_1\times n_2$ window $\bW$ of area  $A$ and thickness bounded by a constant %in the $(n,d)$-RPA $\bcs$ %$n$, $d$, $e$, and  
\STATE {\bf Output}: a position $(i,j)\triangleq(an_1+\bai,bn_2+\baj)$ 
such that $d_H(\bA[i,i+n_1-1][j,j+n_2-1],\bW)\le (d-1)/2$
%\STATE {\bf Input}: $n_1$, $n_2$, $d$, $e$, and  an array $\bW$ of size  $n_1\times n_2$
%\STATE {\bf Output}: a position $(an_1+\bai, bn_2+\baj)$
\STATE{}
\STATE $m \leftarrow \frac{\log A}{1-\theta}$
\STATE $n_R \leftarrow n_1(n_2/m)-4$
\STATE $k_R \leftarrow n_R-2(d+7)$
\STATE{}
\STATE $(\hai,\haj) \leftarrow$ unique tuple such that $d_H(\bW[\hai,\hai][\haj,\haj + 4m -1],\bp)\le (d-1)/2$
\STATE Set $\bai \in \bbracket{n_1}$ such that $\bai+\hai\equiv 0\ppmod{n_1}$
\STATE Set $\baj \in \bbracket{n_2}$ such that $\baj+\haj\equiv 0\ppmod{n_2}$
\STATE{}
%\FORALL {$i \in [n_1]$ and $j \in [n_2]$}
%	\IF {$d_H(\bW[i,i][j,j+4m -1],\bp) < e$}
%	\STATE $\hai \leftarrow i$, $\haj \leftarrow j$
%	\ENDIF
%\ENDFOR

\STATE $\bV \leftarrow$ the array obtained by shifting $\bW$ cyclically upwards $\hai$ times and leftwards $\haj$ times
\STATE $\bv \leftarrow $ the concatenation of the rows of $\bV$
\STATE $\bu \leftarrow \chi(\bv[4m,5m-1])\chi(\bv[5m,6m-1])\chi(\bv[6m,7m-1])\cdots\chi(\bv[n_1n_2-m,n_1n_2-1]) $\\
\STATE{}
%\STATE Let $\bai \in [n_1]$ and $\baj\in [n_2]$ such that $\bai+\hai\equiv 0\ppmod{n_1}$ and  $\baj+\haj\equiv 0\ppmod{n_2}$ 
\IF {$\baj \in [0,n_2-2(d+7)m-1]$}
\STATE $\bc\gets\dec^{(n_R,k_R)}(\bu)$
%\STATE Run the decoding algorithm for Reed-Solomon codes of  length $n_R$ and dimension $k_R$ on $\bu$ to get a vector $\bc$
%\STATE Run the encoding algorithm for Gray codes on $\bc[0,k_R/2-1]$ and $\bc[k_R/2,k_R-1]$ to get indices $a$ and $b$
\ELSIF {$\baj \in [n_2-2(d+7)m, n_2-(d+7)m -1]$}
\STATE $\bu^s \leftarrow$ the shortened codeword $\bu[0,k_R-1]\bu[k_R+(d+7),n_R-1]$
\STATE $\bc\gets\dec^{(n_R-(d+7),k_R)}(\bu^s)$

%\STATE Run the decoding algorithm for Reed-Solomon codes of length $n_R-(d+7)$ and  dimension $k_R$ on $\bu^s$ to get a vector $\bc$
%\STATE Run the encoding algorithm for Gray codes  on $\bc[0,k_R/2-1]$ and $\bc[k_R/2,k_R-1]$ to get indices $a$ and $b$
\ELSE
\STATE $\bu^s \leftarrow$ the shortened codeword $\bav[0,k_R+d+6]$
\STATE $\bc\gets\dec^{(n_R-(d+7),k_R)}(\bu^s)$
%\STATE Run the decoding algorithm for Reed-Solomon codes of length $n_R-(d+7)$ and  dimension $k_R$ on $\bu^s$ to get a vector $\bc$
%\STATE Run the encoding algorithm for Gray codes  on $\bc[0,k_R/2-1]$ and $\bc[k_R/2,k_R-1]$ to get indices $a$ and $b+1$
\ENDIF
\STATE $a\gets \decG(\bc[0,k_R/2-1])$
{
\IF {$\baj \in [0, n_2-(d+7)m -1]$}
\STATE $b\gets \decG(\bc[k_R/2,k_R-1])$
\ELSE
\STATE $b+1\gets \decG(\bc[k_R/2,k_R-1])$
\ENDIF
}
\RETURN {$(an_1+\bai, bn_2+\baj)$} 
\end{algorithmic}
\end{algorithm}

%\newpage

\section{Binary Positioning Arrays with Constant Rank Distance}
\label{Sect:biRPAconstantrd}
%\subsection{Construction}
%In this subsection, 

We continue our investigation of binary robust positioning arrays.
In this section, we consider the scenario where the error patterns are confined to a certain number of rows or columns (or both). 
To correct for such errors, Roth demonstrated that it suffices to consider codes in the {\em rank distance} metric \cite{roth1991maximum}.

For two matrices $\bM_1$ and $\bM_2$ of the same dimension, the {\it rank distance} between them, denoted as $d_R(\bM_1,\bM_2)$, is defined as the rank of their difference, i.e., $d_R(\bM_1,\bM_2) \triangleq{\rm rank}(\bM_1-\bM_2)$. 
In this section, we modify Construction 2 %Construction~\ref{con-binary} 
to produce a binary  positioning array of strength $n_1\times n_2$ and rank distance $d$, i.e., a large array in which the rank distance between any two  $n_1\times n_2$ submatrices is  at least $d$.
Since a code $\cM \subseteq \F_q^{n_1\times n_2}$ with minimum rank distance $d$ satisfies the Singleton bound, i.e., $|\cM|\leq q^{n_2(n_1-d+1)}$, the redundancy of such an array should be at least $n_2(d-1)-O(1)$. 

To present our construction, we require the concept of maximum rank distance (MRD) codes.

\begin{theorem}[Maximum rank distance (MRD) code~\cite{Gabidulin:1985}]\label{thm:MRD}
Let $q$ be a prime power. Suppose that $N_1 \leq N_2$.
Then there exists a linear code $\cM \leq  F_q^{N_1\times N_2}$ of rank distance $d$ and dimension $N_2(N_1-d+1)$.\end{theorem}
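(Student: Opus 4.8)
The statement is the classical Gabidulin construction of MRD codes, so the plan is to exhibit the code explicitly and verify its parameters directly rather than via any abstract argument. First I would fix a basis of $\F_{q^{N_2}}$ over $\F_q$, which gives a bijection between $\F_{q^{N_2}}$ and $\F_q^{N_2}$ (column vectors), hence between $(\F_{q^{N_2}})^{N_1}$ and $\F_q^{N_1\times N_2}$. Under this identification, a vector $(c_1,\ldots,c_{N_1})\in(\F_{q^{N_2}})^{N_1}$ has $\F_q$-rank (as a matrix) equal to $N_1$ minus the maximal number of $\F_q$-linearly independent $\F_{q^{N_2}}$-linear relations it satisfies; this is the standard ``rank over the small field versus linear dependence over the large field'' dictionary that I would state as a lemma.

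Next I would define the code as a Gabidulin (linearized Reed-Solomon) code: pick $\F_q$-linearly independent elements $g_1,\ldots,g_{N_1}\in\F_{q^{N_2}}$ (possible since $N_1\le N_2$), and let
\[
\cM=\Bigl\{\,\bigl(f(g_1),\ldots,f(g_{N_1})\bigr)\;:\;f(x)=\textstyle\sum_{i=0}^{N_1-d}a_i x^{q^i},\ a_i\in\F_{q^{N_2}}\,\Bigr\}.
\]
This is manifestly $\F_q$-linear, and its $\F_{q^{N_2}}$-dimension is $N_1-d+1$, hence its $\F_q$-dimension as a subspace of $\F_q^{N_1\times N_2}$ is $N_2(N_1-d+1)$, giving the claimed dimension. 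The encoding map sending $(a_0,\ldots,a_{N_1-d})$ to the codeword is $\F_{q^{N_2}}$-linear and invertible on its image.

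For the rank distance, by $\F_q$-linearity it suffices to show every nonzero codeword has rank at least $d$, equivalently (by the dictionary above) that a nonzero linearized polynomial $f$ of $q$-degree at most $N_1-d$ cannot vanish on an $\F_q$-subspace of dimension $> N_1-d$. This is the linearized analogue of ``a degree-$k$ polynomial has at most $k$ roots'': the $\F_q$-zero set of a nonzero linearized polynomial of $q$-degree $e$ is an $\F_q$-subspace of dimension at most $e$. I would cite or quickly prove this (it follows because $f$ restricted to its root space is the zero $\F_q$-linear map, and a nonzero linearized polynomial of $q$-degree $e$ has at most $q^e$ roots). Thus if $f\ne 0$ then among the evaluation points at most $N_1-d$ can lie in the kernel after any $\F_q$-change of coordinates, so the codeword, viewed as an $N_1\times N_2$ matrix, has rank $\ge N_1-(N_1-d)=d$.

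\textbf{Main obstacle.} There is no deep obstacle here; the work is entirely in setting up the rank/linear-dependence dictionary cleanly and in invoking the bound on the dimension of the root space of a linearized polynomial. The one point that needs care is making sure the matrix rank computed over $\F_q$ (the relevant rank metric) matches the ``number of independent $\F_{q^{N_2}}$-relations'' count, i.e.\ that I have the roles of $N_1$ and $N_2$ and of rows versus columns consistent throughout (the theorem assumes $N_1\le N_2$ precisely so that $N_1$ evaluation points fit inside $\F_{q^{N_2}}$ $\F_q$-independently); once that bookkeeping is fixed, the parameters fall out immediately.
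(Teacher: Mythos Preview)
Your proposal is a correct and clean outline of the classical Gabidulin construction and its parameter verification. However, the paper does not supply its own proof of this theorem: it is stated as a cited result from \cite{Gabidulin:1985} and used as a black box in Construction~3. So there is nothing to compare against; your argument is exactly the standard one the citation points to, and it is sound as written (with the usual caveat, which you already flag, that one must keep the roles of rows and columns straight so that the hypothesis $N_1\le N_2$ guarantees the existence of $N_1$ $\F_q$-linearly independent evaluation points in $\F_{q^{N_2}}$).
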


We also need to choose a new marker $\bP$. Fix $d$ and let $m$ be an integer  such that  $m(m-d+1)=\log (n_1n_2)$.    Let $\bP$ be a  $4m\times 4m$ array in which
\begin{enumerate}[(i)]
\item the diagonal is $0^{4m-\ell} \bu$, where $\bu$ is the $d$-auto-cyclic vector provided in \eqref{dautoc};
\item the $d\times d$ subarrays at the right top corner and left bottom corner are identity matrices;
\item the symbols in all the other entries are $0$.
\end{enumerate}

%$$
%\bP=
%\left(
%\begin{array}{ccccccccc}
%0 & & & & & & 1 & & \\
%& \ddots & & & & & & \ddots & \\
%& & \ddots & & & & &  & 1\\
%& & & 0 & & & & &\\
%& & & & u_1 & & & &\\
%& & & & & u_2 & & &\\
%1 & & & & & & \ddots & &\\
%& \ddots & & & & & & \ddots & \\
%& & 1 & & & & & &  u_\ell \\
%\end{array}
%\right).
%$$

Let $\cM\subseteq \F_2^{m\times m}$ be an MRD code of rank distance $d$ and dimension $m(m-d+1)$. Suppose that both  $n_1$  and $n_2$ are divisible by $m$. 
Set $n_R\triangleq \frac{n_1n_1}{m^2}-16$ and $k_R\triangleq n_R-24$. Choose a prime power $q$ such that $n_R \leq  q <2^{m(m-d+1)}$ and set  $M\triangleq q^{k_R/2}$.  Take an arbitrary injective map $\psi$ from $\F_q$ to $\cM \backslash \{\mathbf{0}\}$, where $\mathbf{0}$ is the all-zero matrix.
So $\psi$ maps the elements of $\F_q$ to $m\times m$ matrices of rank at least $d$. 
%Now we provide our construction of robust positioning arrays.

%\begin{construction}\label{Construction-RPA}
\vspace{2mm}
\noindent{\bf Construction 3}. Let $\cG=(\bsg_0,\bsg_1,\ldots, \bsg_{M-1})$ be a $(k_R/2,q)$-Gray code and 
consider a Reed-Solomon code of length $n_R$ and dimension $k_R$  over $\F_q$.  
For each $0\leq i,j\leq M-1$, set $\bc_{ij}=\enc^{(n_R,k_R)}(\bsg_i\bsg_j)$. %be the unique codeword in $\cC$ such that $\bc_{ij}[0,k_R-1]=\bsg_i\bsg_j$.

For each $\bc_{ij}$, apply the map $\psi$ to the symbols of $\bc_{ij}$ to obtain $n_R$ $m\times m$ matrices of rank at least $d$. Since $n_R=(n_1n_2)/m^2-16$ and $n_R-k_R=24$, tile these $n_R$ matrices together with the marker $\bP$ to form an $n_1\times n_2$ array $\bA_{ij}$  (see Fig.~\ref{fig-arraycon-r})  such that 
\begin{enumerate}[(i)] 
\item $\bA_{ij}[0,4m-1][0,4m-1]=\bP$;
\item for each $\ell \in \{1,2,3\}$,  $\bA_{ij}[n_1-2m\ell,n_1-2m(\ell-1)-1][n_2-4m\ell,n_2-4m (\ell-1)-1]$ comprises eight $m\times m$ submatrices, each of  which corresponds to a check bit of $\bc_{ij}$.
\end{enumerate}

Finally, construct a large array $\bA$ as
$$
\bA=
\left(
\begin{array}{cccc}
\bA_{00} & \bA_{01} & \cdots & \bA_{0, M-1}\\
\bA_{10} & \bA_{11} & \cdots & \bA_{1, M-1}\\
\vdots & \vdots & \ddots & \vdots\\
\bA_{M-1,0} & \bA_{M-1,1} & \cdots & \bA_{M-1, M-1}
\end{array}
\right).
$$

\begin{figure*}[htbp]
  \centering
  \includegraphics[width=6cm, trim={0 0 0 0}, clip]{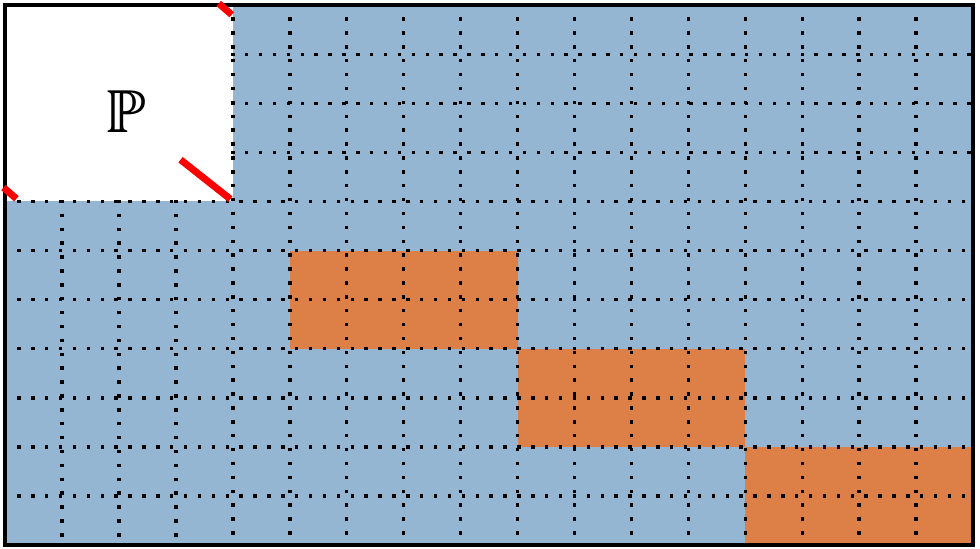}
  \caption{An example for the matrix $A_{ij}$ in Construction~3 with $n_1=11m$ and $n_2=17m$.  The blue cells represent the message bits and the yellow cells represent the check bits.}
  \label{fig-arraycon-r}
\end{figure*}

\begin{lemma}\label{Lem:Comparison-Array-r}
Consider the subarray $\bW=\bA[i_0,i_0+n_1-1][j_0,j_0+n_2-1]$ in $\bA$.
Pick $i\in \bbracket{n_1}$ and $j\in\bbracket{n_2}$. Then the following hold.
\begin{enumerate}[(i)]
\item If $i+i_0\equiv 0 \ppmod{n_1}$ and $j+j_0\pmod{n_2}$, then $\bW[i,i+4m-1][j,j+4m-1]=\bP$.
\item If  $i+i_0\not \equiv 0 \ppmod{n_1}$ or $j+j_0\not \equiv 0 \ppmod{n_2}$, then $d_R(\bW[i,i+4m-1][j,j+4m-1],\bP)\geq d$.
\end{enumerate}
%For a sub-array $\bW=\bA[i_0,i_0+n_1-1][j_0,j_0+n_2-1]$ in $\bA$ and every $i\in [n_1]$ and $j\in[n_2]$, we have either $\bW[i,i][j,j+4m-1]=\bp$ if $i+i_0\equiv 0 \ppmod{n_1}$ and $j+j_0\pmod{n_2}$, or $d_H(\bW[i,i][j,j+4m-1],\bp)\geq d$  if  $i+i_0\not \equiv 0 \ppmod{n_1}$ or $j+j_0\not \equiv 0 \ppmod{n_2}$.
\end{lemma}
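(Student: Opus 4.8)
The plan is to mirror the proof of Lemma~\ref{Lem:Comparison-Array}, replacing Hamming distance by rank distance and the marker $\bp$ by the two-dimensional marker $\bP$. As in the earlier lemma, let $\hai\in[n_1]$ and $\haj\in[n_2]$ satisfy $\hai+i_0\equiv 0\ppmod{n_1}$ and $\haj+j_0\equiv 0\ppmod{n_2}$, and let $\bV$ be the array obtained from $\bW$ by cyclically shifting upwards $\hai$ times and leftwards $\haj$ times. Part~(i) is then immediate from Construction~3(i): the shifted window is aligned with some block $\bA_{ij}$ so its top-left $4m\times 4m$ corner is exactly $\bP$. For part~(ii) it suffices, after this shift, to show that $d_R(\bV[i,i+4m-1][j,j+4m-1],\bP)\geq d$ whenever $i\in[1,n_1-1]$ or $j\in[1,n_2-1]$, using that rank distance, like Hamming distance, only decreases when we pass to a common submatrix: $d_R(\bM_1,\bM_2)\geq d_R(\bM_1[I][J],\bM_2[I][J])$ for any row set $I$ and column set $J$.

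The case analysis splits according to how far the $4m\times 4m$ window $\bW':=\bV[i,i+4m-1][j,j+4m-1]$ is displaced from the aligned copy of $\bP$. \textbf{(a)} If $(i,j)$ is displaced by a small amount $(\delta_1,\delta_2)$ with $1\le\delta_1\le d$ or $1\le\delta_2\le d$ — so that $\bW'$ still overlaps a single tiled block $\bA_{ij}$ along its diagonal-marker region — we restrict to the $\ell\times\ell$ sub-block where the diagonal string $\bu$ sits; there $\bP$ restricts to $0^{\delta}\bu[0,\ell-\delta-1]$ (a shift of the $d$-auto-cyclic vector along the main diagonal) while $\bW'$ restricts to $\bu$ (placed on a diagonal), and since a diagonal of entries forms a rank-$r$ matrix iff exactly $r$ of them are nonzero, $d_R$ on this restriction equals $d_H(\bu,0^\delta\bu[0,\ell-\delta-1])\geq d$ by the $d$-auto-cyclic property. \textbf{(b)} If the displacement is larger, so that $\bW'$ either straddles two or four tiled blocks, or lies entirely over non-marker (message/check) region within one block, then $\bW'$ contains at least one full $m\times m$ sub-block which in $\bA$ is some $\psi(x)$ (an MRD codeword block) — by the choice $k=4m-\ell>3m$, exactly as in Lemma~\ref{Lem:Comparison-Array}, a $4m$-wide window always captures a complete $m\times m$ image block once it is off the diagonal marker. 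Restricting $\bP$ to the same $m\times m$ position gives the zero matrix (every entry of $\bP$ outside the diagonal and the two $d\times d$ corners is $0$, and one checks the corners cannot align with a full image block for a $4m$-window), so $d_R(\bW',\bP)\geq d_R(\psi(x),\mathbf 0)=\mathrm{rank}(\psi(x))\geq d$ since $\psi$ maps into $\cM\setminus\{\mathbf 0\}$, whose members have rank $\geq d$.

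The main obstacle will be the bookkeeping in case~\textbf{(b)}: one must verify that for \emph{every} displacement $(i,j)\ne(0,0)$ with $i\le 4m$-ish near the diagonal handled in~\textbf{(a)}, the complementary range of $(i,j)$ always allows a full $m\times m$ image block (and never only the identity-matrix corners of $\bP$) to sit inside the $4m\times 4m$ window, given the specific tiling layout of Construction~3 (the $\bP$ in the top-left corner, the $24$ check blocks packed into the three $2m\times 4m$ strips at the bottom-right). This is the two-dimensional analogue of the five-case split in Lemma~\ref{pcompare}, now complicated by the block structure; the clean way to organize it is to first reduce, by the rank-is-at-least-Hamming-on-a-diagonal argument, to the purely combinatorial claim that for $(i,j)\neq(0,0)$ the window $[i,i+4m-1]\times[j,j+4m-1]$ in the tiled plane either hits a shifted $d$-auto-cyclic diagonal segment of length $\geq\ell$ or fully contains one of the $m\times m$ image blocks — and then this claim follows from $k=4m-\ell>3m$ together with the fact that image blocks tile the non-marker part of each $\bA_{ij}$ in an $m$-aligned grid.
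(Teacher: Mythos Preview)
Your case split has a genuine gap: case~(a) only works for \emph{diagonal} shifts $i=j$, and you have no argument for small \emph{off-diagonal} shifts. Concretely, take $i=0$, $j=1$. Then your case~(a) would have you restrict to the $\ell\times\ell$ block $[4m-\ell,4m-1]\times[4m-\ell,4m-1]$ and compare $\Diag(\bu)$ in $\bP$ against what you claim is ``$\bu$ placed on a diagonal'' in $\bW'$. But in $\bW'=\bV[0,4m-1][1,4m]$ that sub-block is $\bV[4m-\ell,4m-1][4m-\ell+1,4m]$, which is \emph{not} a diagonal matrix: the entries of $\bu$ now sit on the super\-diagonal, and the last column comes from an image block. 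The difference with $\Diag(\bu)$ is no longer governed by the auto-cyclic property, so your rank bound does not follow. Your case~(b) does not rescue this displacement either, since the window $[0,4m-1]\times[1,4m]$ contains only a single column of image-block material and hence no full $m\times m$ block $\psi(x)$.

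This is precisely why $\bP$ carries the two $d\times d$ identity matrices in its off-diagonal corners---a feature you noted but dismissed. In the paper's proof, the off-diagonal case $i\neq j$ is handled separately (Cases~2a--2e), and for small off-diagonal shifts (Case~2a, and the wraparound Cases~2c,~2e) the argument hinges on those corners: one locates a $d\times d$ sub-block where $\bP$ restricts to a strictly triangular matrix (coming from a sheared piece of the marker) while the window restricts to an identity block (the last $d$ diagonal entries of $\bu$ are $1^d$, or one of the explicit $\bI_d$ corners), giving a difference of full rank $d$. Without invoking these corners you cannot close the gap for displacements with $i\neq j$ and both coordinates too small to capture a complete $m\times m$ image block.
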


\begin{proof} For simplicity, we assume that $n_1=n_2=n$. The case of $n_1\not=n_2$ can be proceeded similarly. 
Let $\hai,\haj \in \bbracket{n}$  such that $\hai+i_0\equiv 0\ppmod{n}$ and $\haj+j_0\equiv 0\ppmod{n}$. We consider the array $\bV$, which is obtained by shifting $\bW$ cyclically upwards $\hai$ times and leftwards  $\haj$ times. Then $\bV[0,4m-1][0,4m-1]=\bP$ and it suffices to show that $d_R(\bV[i,i+4m-1][j,j+4m-1],\bP)\geq d$  for $(i,j)\in \bbracket{n}^2\backslash\{(0,0)\}$. Write $\Diag(\bx)$ for a diagonal matrix whose diagonal is $\bx$.

We first assume  $i=j$. Similar to the proof of Lemma~\ref{pcompare},   we consider the following cases.

\vspace{2mm}
\noindent{\bf Case 1a}: $i\in [1, d]$. Then $\bP[4m-\ell-i, 4m-1-i][4m-\ell-i,4m-1-i]= \Diag(0^i\bu[0,{\ell-i}-1])$. On the other hand, the corresponding subarray in $\bV[i,i+4m-1][j,j+4m-1]$ is $\bV[4m-\ell,4m-1][4m-\ell,4m-1]$, which is equal to $\Diag(\bu)$. Due to the property of $\bu$, the rank distance between them is at least  $d$ and so  $d_R(\bV[i,i+4m-1][j,j+4m-1],\bP)\geq d$.

\vspace{2mm}
\noindent{\bf Case 1b}: $i\in [d+1, 4m-d]$. Since $4m-\ell > \ell$, $\bV[i,i+4m-\ell-1][i,i+4m-\ell-1]$ contains at least one  subarray which is equal to $\Diag(1^d)$. Note that  $\bP[0,4m-\ell-1][0,4m-\ell-1]=\mathbf{0}$. The rank distance between $\bV[i,i+4m-\ell-1][i,i+4m-\ell-1]$ and  $\bP[0,4m-\ell-1][0,4m-\ell-1]$ is at least $d$ and so  we have $d_R(\bV[i,i+4m-1][j,j+4m-1],\bP)\geq d$.

\vspace{2mm}
\noindent{\bf Case 1c}: $i\in [4m-d+1,n-(4m-\ell)]$. Since  $4m-\ell-d>3m$,   $\bV[i+d,i+4m-\ell-1][i+d,i+4m-\ell-1]$ should contain at least one $m\times m$ subarray of rank at least $d$.  Noting that $\bP[i+d,i+4m-\ell-1][i+d,i+4m-\ell-1]=\mathbf{0}$, again we have  $d_R(\bV[i,i+4m-1][j,j+4m-1],\bP)\geq d$.

\vspace{2mm}
\noindent{\bf Case 1d}: $i\in [n-(4m-\ell)+1, n-d]$. Since  $i+4m-\ell-n\geq 1$ and $i+4m-\ell+d-1-n\leq 4m-\ell-1$, we have
 $\bV[i+4m-\ell,i+4m-\ell+d-1][i+4m-\ell,i+4m-\ell+d-1]=\mathbf{0}$. Note that $\bP[4m-\ell,4m-\ell+d-1]=\Diag(1^d)$. It follows that  $d_R(\bV[i,i+4m-1][j,j+4m-1],\bP)\geq d$.

\vspace{2mm}
\noindent {\bf Case 1e}: $i \in [n-d+1,n-1]$.  Let $\delta=n-i$, then $\delta\in[1,d-1]$. We have
\begin{align*}
 & \bV[i+4m-\ell,i+4m-1][i+4m-\ell,i+4m-1] \\
 = & \bV[4m-\ell-\delta, 4m-1-\delta] [4m-\ell-\delta, 4m-1-\delta] \\
 = & \Diag( 0^\delta\bu[0,\ell-\delta-1]).
\end{align*}
Since $\bP[4m-\ell,4m-1][4m-\ell,4m-1]=\Diag(\bu)$, the rank distance between them is at least $d$ and so we have $d_R(\bV[i,i+4m-1][j,j+4m-1],\bP)\geq d$.

\vspace{2mm}

In the following we assume that $i<j$; the case of $i>j$ can be proceeded in the same way.

\vspace{2mm}
\noindent{\bf Case 2a}: $j\in [1, 4m-2d]$. Then  
$d \leq 4m-d-j < 4m-d-i$ and $4m-1-j<4m-1-i\leq 4m-1$. It follows that the subarray $\bP[4m-d-i,4m-1-i][4m-d-i,4m-1-i]$ is an upper triangular matrix with all entries on the diagonal being $0$. On the other hand, in $\bV[i,i+4m-1][j,j+4m-1]$ the corresponding subarray $\bV[4m-d,4m-1][4m-d,4m-1]$ is an identity matrix. Hence the rank distance between $\bV[i,i+4m-1][j,j+4m-1]$ and $\bP$ is at least $d$. 

\vspace{2mm}
\noindent{\bf Case 2b}: $j\in [4m-2d+1, n-(4m-\ell)]$. In this case we estimate the rank distance between  the submatrices $$\bV[i+2d,i+4m-\ell-1][j+2d,j+4m-\ell-1]$$ and $\bP[2d,4m-\ell-1][2d,4m-\ell-1]$. 
Since  $4m\leq j+2d$, $j+4m-\ell-1\leq n-1$ and $4m-\ell-2d >3m$, the subarray $\bV[i+2d,i+4m-\ell-1][j+2d,j+4m-\ell-1]$ always contains an $m\times m$ submatrix of rank at least $d$. On the other hand, $\bP[2d,4m-\ell-1][2d,4m-\ell-1]=\mathbf{0}$. It follows that the rank distance between them is at least $d$ and so $d_R(\bV[i,i+4m-1][j,j+4m-1],\bP)\geq d$. 

\vspace{2mm}
\noindent{\bf Case 2c}: $j\in [n-(4m-\ell)+1, n-1]$ and $i \in [0,4m-\ell-d]$. Then 
$$\bV[i,i+d-1][j+4m-d,j+4m-1]=\bV[i,i+d-1][j+4m-d-n,j+4m-1-n].$$
Since $0\leq i < i+d-1 \leq 4m-\ell-1$ and $\ell - d < j+4m-d-n < j+4m-1-n < 4m-1$, the subarray $\bV[i,i+d-1][j+4m-d,j+4m-1]$ is  an upper triangular matrix with all entries on the diagonal being $0$. Note that in $\bP$ the corresponding subarray $\bP[0,d-1][4m-d,4m-1]=\bI_d$. Thus the rank distance between $\bV[i,i+4m-1][j,j+4m-1]$ and $\bP$ is at least  $d$.

\vspace{2mm}
\noindent{\bf Case 2d}: $j\in [n-(4m-\ell)+1, n-1]$ and $i \in [4m-\ell-d+1, n-(4m-\ell)]$. Then $4m < i+\ell+d < i+4m-\ell-1 \leq n-1$. Since $4m-2\ell-d>3m$, the subarray $\bW[i+\ell+d, i+4m-\ell-1][j+\ell+d, j+4m-\ell-1]$ always contains an $m\times m$ submatrix of rank at least $d$. Note that the corresponding subarray $\bP[\ell+d, 4m-\ell-1][\ell+d, 4m-\ell-1]=\mathbf{0}$. It follows that $d_R(\bV[i,i+4m-1][j,j+4m-1],\bP)\geq d$.

\vspace{2mm}
\noindent{\bf Case 2e}: $j\in [n-(4m-\ell)+1, n-1]$ and $i \in [n-(4m-\ell)+1,n-1]$.  Then 
\begin{align*}
\bV&[i+4m-d,i+4m-1][j+4m-d,j+4m-1]\\
& =\bV[i+4m-d-n,i+4m-1-n][j+4m-d-n,j+4m-1-n].
\end{align*}
Since $\ell - d < i+4m-d-n <   j+4m-d-n$ and $i+4m-1-n < j+4m-1-n < 4m-1$, the subarray $\bV[i,i+d-1][j+4m-d,j+4m-1]$ is  a lower triangular matrix with all entries on the diagonal being $0$. Note that in $\bP$ the corresponding subarray $\bP[4m-d,4m-1][4m-d,4m-1]=\bI_d$. Thus the rank distance between  $\bV[i,i+4m-1][j,j+4m-1]$ and  $\bP$ is at least $d$.
\end{proof}

%We regard an array of size $(am) \times (bm)$ as a $a \times b$ partitioned matrix with each block being a square of side $m$. Given a pair of $(am) \times (bm)$ arrays $\bM_1$ and $\bM_2$, we denote the Hamming distance of their corresponding partitioned matrices as $d_{S}(\bM_1,\bM_2)$.  In other words, $d_{S}(\bM_1,\bM_2)$  counts the number of different squares in $\bM_1$ and $\bM_2$.  Therefore,  in Construction 3, %~\ref{Construction-RPA},  we have \[d_{S}(\bA_{ij},\bA_{i'j'})=d_H(\bc_{ij},\bc_{i'j'}).\] For a pair of $a\times b$ arrays $\bM_1$ and $\bM_2$ with $b$ not divisible by $m$,  we can repeat the last columns to form two arrays $\bM_1'$ and $\bM_2'$ of size  $a\times (\ceil{{b}/{m}}m)$.  Then denote $d_B(\bM_1,\bM_2):=d_B(\bM_1',\bM_2')$. Hence, $d_{B}(\bM_1,\bM_2)$ counts the number of different (truncated) squares in $\bM_1$ and $\bM_2$. 

\begin{lemma}
\label{Lem:ModularComparision-Array-r}
For any two subarrays $\bW=\bA[i,i+n_1-1][j,j+n_2-1]$ and $\bW'=\bA[i',i'+n_1-1][j',j'+n_2-1]$ with $i\equiv i'\pmod{n_1}$ and $j\equiv j'\pmod{n_2}$, the rank distance between them is at least $d$.
\end{lemma}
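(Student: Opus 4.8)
The plan is to mirror the proof of Lemma~\ref{Lem:ModularComparision-Array}, replacing its block‑Hamming bookkeeping by a rank argument driven by the MRD code $\cM$. Write $i=an_1+\bai$, $i'=a'n_1+\bai$ with $\bai\in\bbracket{n_1}$ and $j=bn_2+\baj$, $j'=b'n_2+\baj$ with $\baj\in\bbracket{n_2}$ (legitimate since $i\equiv i'\pmod{n_1}$ and $j\equiv j'\pmod{n_2}$), choose $\hai\in\bbracket{n_1}$, $\haj\in\bbracket{n_2}$ with $\bai+\hai\equiv 0\pmod{n_1}$ and $\baj+\haj\equiv 0\pmod{n_2}$, and let $\bV$ and $\bV'$ be obtained from $\bW$ and $\bW'$ by shifting cyclically upwards $\hai$ times and leftwards $\haj$ times. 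Since $\bV=P\bW Q$ and $\bV'=P\bW'Q$ for the \emph{same} permutation matrices $P,Q$ (the cyclic shifts by the common amounts $\hai$ and $\haj$), we get $d_R(\bW,\bW')={\rm rank}(\bW-\bW')={\rm rank}(\bV-\bV')=d_R(\bV,\bV')$, so it suffices to prove $d_R(\bV,\bV')\ge d$. Both arrays are now aligned with the $m\times m$ block grid and, because the windows have the exact dimensions of a tile, each is assembled from (at most) four rectangular quadrants, the quadrants of $\bV$ being pieces of $\bA_{a+1,b+1},\bA_{a+1,b},\bA_{a,b+1},\bA_{a,b}$ and those of $\bV'$ of the corresponding four tiles, with every tile carrying the \emph{same} internal block layout (the $16$ positions of $\bP$, the $k_R$ message‑block positions, the $24$ check‑block positions).

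The crucial observation is the following. If $p$ is one of the $16$ positions of $\bP$, then the block of $\bV$ at $p$, the block of $\bV'$ at $p$, and the corresponding block of $\bP$ all coincide, because $\bP$ occupies the top‑left corner of every tile. For all but a constant number of the other positions $p$, the block of $\bV$ at $p$ is a \emph{single} matrix $\psi(\sigma)$ for some $\F_q$‑symbol $\sigma$ read off the appropriate tile; the exceptions are the few blocks that straddle a quadrant seam at an RS‑coordinate where the Gray code changes, and likewise for $\bV'$. Hence there is a set $E$ of a bounded number of ``exceptional'' block positions such that for $p\notin E$ both $\bV$ and $\bV'$ present at $p$ either the common $\bP$‑block or a clean $\psi$‑block, with the same positions being $\bP$‑positions. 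If $\bV$ and $\bV'$ disagree at some $p\notin E$, then $p$ must be a $\psi$‑position (the $\bP$‑blocks coincide), so $\bV$ has $\psi(\sigma)$ and $\bV'$ has $\psi(\sigma')$ there with $\sigma\ne\sigma'$; since $\psi$ maps into the MRD code $\cM$ of rank distance $d$, the block $\psi(\sigma)-\psi(\sigma')$ of $\bV-\bV'$ is a nonzero element of $\cM$ and therefore has rank at least $d$, whence $d_R(\bV,\bV')={\rm rank}(\bV-\bV')\ge d$. So it only remains to find a single block position outside $E$ at which $\bV$ and $\bV'$ disagree.

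For this I would run a case analysis patterned on that of Lemma~\ref{Lem:ModularComparision-Array}, according to how the quadrant seams of $\bV$ meet the staircase of $24$ check blocks of a tile. When a seam cuts or displaces some check blocks, one first overwrites the affected check blocks by one fixed $m\times m$ matrix in \emph{every} tile and passes to the Reed--Solomon code shortened on the corresponding coordinates; the parameters of Construction~3 (in particular $n_R-k_R=24$) are chosen precisely so that the shortened code still has minimum distance large enough for the estimate below. In each case the Gray‑code property bounds $d_B(\bV,\bA_{ab})$ and $d_B(\bV',\bA_{a'b'})$ by a constant, while $(a,b)\ne(a',b')$ forces $\bsg_a\bsg_b\ne\bsg_{a'}\bsg_{b'}$ and hence $\bc_{ab}\ne\bc_{a'b'}$, so the (shortened) Reed--Solomon minimum distance gives a large value of $d_B(\bA_{ab},\bA_{a'b'})=d_H(\bc_{ab},\bc_{a'b'})$; by the triangle inequality for $d_B$ this makes $d_B(\bV,\bV')$ strictly larger than $|E|$. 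Consequently $\bV$ and $\bV'$ disagree at some block position outside $E$, and by the previous paragraph $d_R(\bW,\bW')=d_R(\bV,\bV')\ge d$, as required. The frozen/shortened cases are handled identically, with $\bc_{ab}$ and $\bA_{ab}$ replaced by their frozen and shortened versions, the frozen blocks never contributing a disagreement because they are the same fixed matrix in all tiles.

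The main obstacle, exactly as in Lemma~\ref{Lem:ModularComparision-Array}, is the bookkeeping for the cases in which the quadrant seams of $\bV$ straddle the check‑block staircase: identifying precisely which check blocks of each tile must be frozen, determining the shortened Reed--Solomon code they correspond to, checking that the freezing is carried out identically on every tile (so that frozen blocks create no spurious disagreement), and verifying that the residual minimum distance stays above $|E|+d_B(\bV,\bA_{ab})+d_B(\bV',\bA_{a'b'})$ in each case --- which is where the specific constants in Construction~3 (the $4m\times 4m$ marker, the $16$, and the three bands of $8$ check blocks making up the $24$) are used. Given that slack the remaining work is routine block‑index arithmetic.
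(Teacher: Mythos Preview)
Your proposal is correct and follows essentially the same approach as the paper: align $\bV,\bV'$ with the tile grid by the common cyclic shift, freeze an appropriate band of check blocks in every tile according to where the seams fall, use the Gray--code property to bound the number of block positions where $\bar\bV$ (respectively $\bar\bV'$) differs from its reference tile $\bar\bA_{ab}$ (respectively $\bar\bA_{a'b'}$), and use the shortened Reed--Solomon distance to force at least one block position at which both $\bar\bV=\bar\bA_{ab}$ and $\bar\bV'=\bar\bA_{a'b'}$ hold while $\bar\bA_{ab}\ne\bar\bA_{a'b'}$; that block is then a pair of distinct codewords of $\cM$, giving rank at least $d$.

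One imprecision worth tightening: the step you phrase as ``by the triangle inequality $d_B(\bV,\bV')>|E|$, hence some disagreement lies outside $E$'' is not quite what the numbers deliver. The triangle inequality only gives $d_B(\bar\bV,\bar\bV')\ge d_B(\bar\bA_{ab},\bar\bA_{a'b'})-d_B(\bar\bV,\bar\bA_{ab})-d_B(\bar\bV',\bar\bA_{a'b'})\ge 9-4-4=1$, which need not exceed $|E|$. The correct (and simpler) argument --- which is exactly what the paper does --- is set-theoretic rather than metric: let $D_1,D_2$ be the block positions where $\bar\bV$ and $\bar\bV'$ differ from their reference tiles ($|D_1|,|D_2|\le 4$), and $D_3$ the positions where $\bar\bA_{ab}$ and $\bar\bA_{a'b'}$ differ ($|D_3|\ge 9$). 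Then $D_3\setminus(D_1\cup D_2)\ne\emptyset$, and at any such position both $\bV$ and $\bV'$ carry the clean $\psi$-blocks of their reference tiles, so your exceptional set $E$ is automatically avoided. With this adjustment your outline matches the paper's proof.
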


\begin{proof}
Suppose that $i=an_1+\bai$ and $i'=a'n_1+\bai$ for some $\bai \in \bbracket{n_1}$,  and $j=bn_2+\baj$ and $j'=b'n_2+\baj$ for some $\baj \in \bbracket{n_2}$. Let $\hai \in\bbracket{n_1}$ and $\haj\in \bbracket{n_2}$  be the integers such that $\bai+\hai\equiv 0\pmod{n_1}$ and $\baj+\haj\equiv 0\pmod{n_2}$. Shift $\bW$ cyclically upwards $\hai$ times and leftwards $\haj$ times and denote the resulting array as $\bV$. Similarly, let $\bV'$ be the corresponding shifted array of $\bW'$. Then $d_R(\bW,\bW')=d_R(\bV,\bV')$. 
To estimate $d_R(\bV,\bV')$, we proceed in three cases,   depending on where the check bits of $\bV$ and $\bV'$ come from. Similar to the proof of Lemma~\ref{Lem:ModularComparision-Array}, for any two subarrays $\bM$ and $\bM'$  in $\bA$ which are of same dimension and in the same modular position, we use $d_S(\bM,\bM')$ to denote the number of different (truncated) $m\times m$ subarrays in $\bM$ and $\bM'$.

\vspace{2mm}
\noindent{\bf Case 1}: $\bai\in [0, n_1-2m-1]$ and $\baj\in [0,n_2-4m-1]$. 
%For each $\bA_{\alpha \beta}$,
For $\alpha,\beta\in\bbracket{M}$, we change the bits in the subarrays $\bA_{\alpha \beta}[n_1-6m,n_1-4m-1][n_2-12m,n_2-8m-1]$ and $\bA_{\alpha \beta}[n_1-4m,n_1-2m-1][n_2-8m,n_2-4m-1]$    to one and 
denote the resulting array as $\bar{\bA}_{\alpha \beta}$. Let $\bar{\bc}_{\alpha \beta}$ be the corresponding shortened codeword of length $n_R-16$. 
Then we have 
\begin{align}
d_S(\bar{\bA}_{\alpha\beta},\bar{\bA}_{\alpha'\beta'})=d_H(\bar{\bc}_{\alpha\beta}, \bar{\bc}_{\alpha'\beta'})\geq (n_R-16)-k_R+1\geq 9.
\end{align}
Now, let $\bar{\bW}$, $\bar{\bW}'$, $\bar{\bV}$ and  $\bar{\bV}'$ be the corresponding arrays of ${\bW}$, ${\bW}'$, ${\bV}$ and  ${\bV}'$ with some check bits being changed to one. 
We  partition $\bar{\bW}$ and $\bar{\bA}_{ab}$ as in Fig.~\ref{Fig:ModularComparision-Array-r}(a). Then   
$$\bar{\bV}=
\left(\begin{array}{cc}
\bar{\bW}_\textup{IV} & \bar{\bW}_\textup{III}\\
\bar{\bW}_\textup{II} & \bar{\bW}_\textup{I}
\end{array}
\right),
\bar\bA_{ab}=
\left(\begin{array}{cc}
\bar{\bA}_\textup{IV} & \bar{\bA}_\textup{III}\\
\bar{\bA}_\textup{II} & \bar{\bA}_\textup{I}
\end{array}
\right),
\textup{\ and \ }
\bar{\bW}_\textup{I}=\bar{\bA}_\textup{I}.$$
%We propose to estimate $d_B(\bV,\bA_{ab})$. If $\baj$ is not divisible by $m$, we could repeat the last columns of $\bW_\textup{II}, \bW_\textup{IV},\bA_\textup{II}$, and $\bA_\textup{IV}$ or the first columns of $\bW_\textup{I}, \bW_\textup{III}, \bA_\textup{I}$, and $\bA_\textup{III}$, so that their widths are divisible by $m$.
Furthermore, we have 
\[ d_S(\bar{\bW}_\textup{II},\bar{\bA}_\textup{II}) \leq 1, \  d_S(\bar{\bW}_\textup{III},\bar{\bA}_\textup{III})\leq 1, \textup{\ and \ } d_S(\bar{\bW}_\textup{IV},\bar{\bA}_\textup{IV})\leq 2. \]
It follows that $d_S(\bar{\bV},\bar{\bA}_{ab}) \leq 4.$
With the same argument, we can get $d_B(\bar{\bV}',\bar{\bA}_{a'b'})\leq 4$. Hence,
$$d_S(\bV,\bV')\geq d_S(\bar{\bV},\bar{\bV}')\geq  d_S(\bar{\bA}_{ab},\bar{\bA}_{a'b'})-8\geq 1.$$
So we can find a pair of distinct $m\times m$ subarrays in the same position of  $\bV$ and $\bV'$. Since these two subarrays are codewords of an MRD code, the rank distance between them is at least $d$. Thus $d_R(\bW,\bW')=d_R(\bV,\bV')\geq d$.

\vspace{2mm}
\noindent{\bf Case 2}: $\bai\in [0, n_1-4m-1]$ and $\baj\in [n_2-4m, n_2-1]$. We change the bits in the subarrays $\bA_{\alpha \beta}[n_1-6m,n_1-4m-1][n_2-12m,n_2-8m-1]$ and $\bA_{\alpha \beta}[n_1-2m,n_1-1][n_2-4m,n_2-1]$    to one and 
denote the resulting array as $\tilde{\bA}_{\alpha \beta}$. Let $\tilde{\bW}$, $\tilde{\bW}'$, $\tilde{\bV}$ and  $\tilde{\bV}'$ be the corresponding arrays of ${\bW}$, ${\bW}'$, ${\bV}$ and  ${\bV}'$ with some check bits being changed to one.  Partition $\tilde{\bW}$ and $\tilde{\bA}_{a,b+1}$ as in Fig.~\ref{Fig:ModularComparision-Array-r}(b). Then   using the same strategy as in Case 1, we can show
\[d_S(\tilde{\bV},\tilde{\bA}_{ab})\leq 4, \ d_S(\tilde{\bV}',\tilde{\bA}_{ab}')\leq 4 \textup{\ and }  d_S(\tilde{\bV},\tilde{\bV}')\geq  d_S(\tilde{\bA}_{ab},\tilde{\bA}_{a'b'})-8\geq 1.\]
It follows that $d_R(\bW,\bW')=d_R(\bV,\bV')\geq d_R(\tilde{\bV},\tilde{\bV}')  \geq   d$.

\vspace{2mm}
\noindent{\bf Case 3}: $\bai\in [n_1-4m, n_1-1]$ and $\baj\in [n_2-4m, n_2-1]$. In the last case, we change the bits in the subarrays $\bA_{\alpha \beta}[n_1-4m,n_1-2m-1][n_2-8m,n_2-6m-1]$ and $\bA_{\alpha \beta}[n_1-2m,n_1-1][n_2-4m,n_2-1]$    to one and denote the resulting array as $\hat{\bA}_{\alpha \beta}$. Let $\hat{\bW}$, $\hat{\bW}'$, $\hat{\bV}$ and  $\hat{\bV}'$ be the corresponding arrays of ${\bW}$, ${\bW}'$, ${\bV}$ and  ${\bV}'$. Then  partition $\hat\bA_{a+1,b+1}$ as in Fig.~\ref{Fig:ModularComparision-Array-r}(c).
 \end{proof}

\begin{figure*}[!htb]
  \centering
  \includegraphics[width=8.5cm, trim={0 0 0 0}, clip]{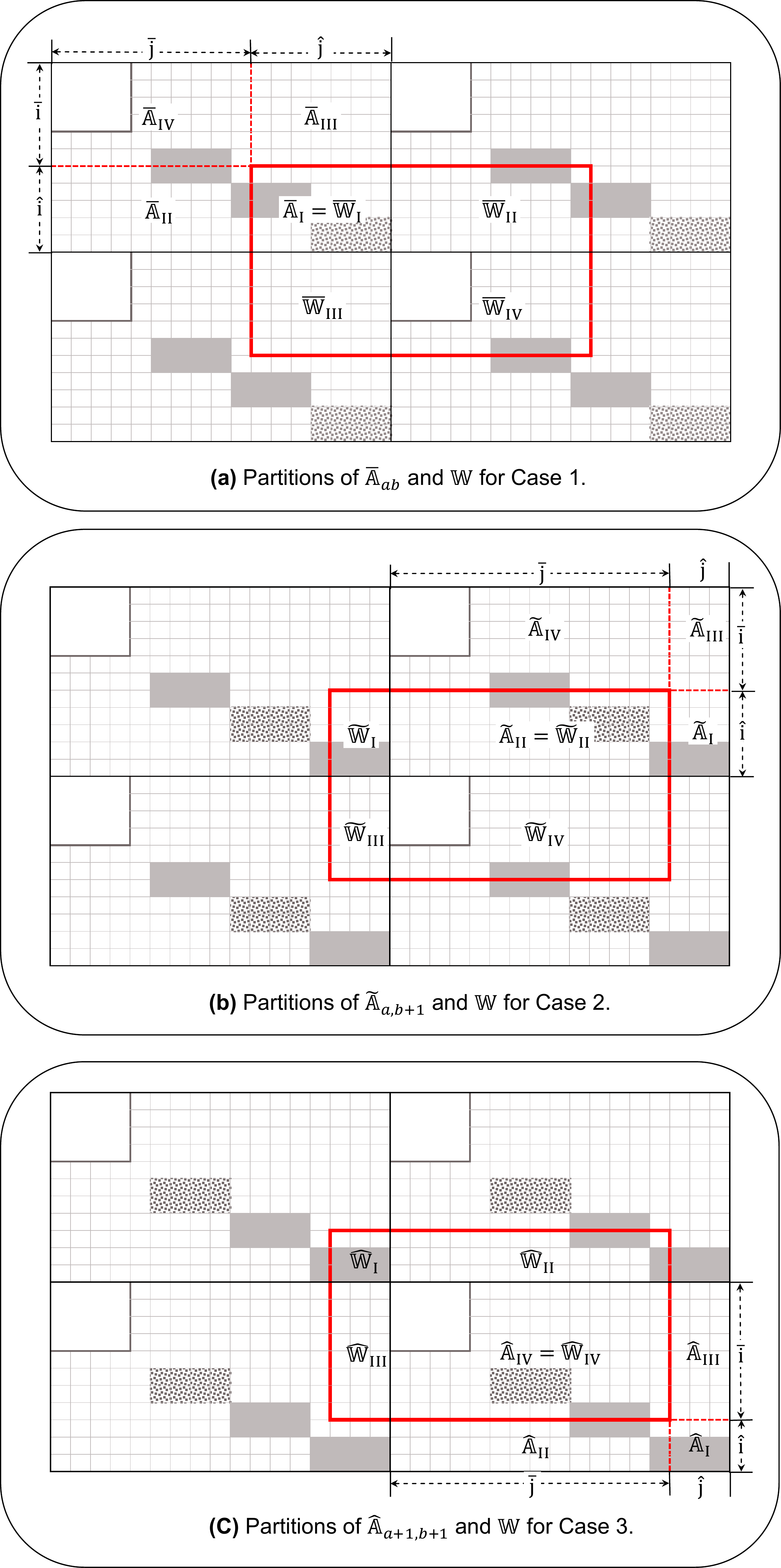}
  \caption{An example with $n_1=11m$ and $n_2=17m$ to illustrate the proof of Lemma~\ref{Lem:ModularComparision-Array-r}. The red lines enclose the subarray $\bW$. The empty $m\times m$ subarrays  represent  message bits. The subarrays with dots represent check bits. The solid subarrays represent the all-one matrices. }
  \label{Fig:ModularComparision-Array-r}
\end{figure*}

Using Lemma~\ref{Lem:Comparison-Array-r} and Lemma~\ref{Lem:ModularComparision-Array-r}, we have the following result.

\begin{theorem}
%The array $\bA$ in Construction~\ref{Construction-RPA} is an $(n_1\times n_2,d)$-RPA.
The array $\bA$ in Construction 3 is a positioning array in which the rank distance between any two $n_1\times n_2$ submatrices is at least $d$.
\end{theorem}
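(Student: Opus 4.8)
The plan is to follow the template of the theorem for Construction~2, replacing the Hamming metric by the rank metric throughout, using Lemma~\ref{Lem:Comparison-Array-r} in place of Lemma~\ref{Lem:Comparison-Array} and Lemma~\ref{Lem:ModularComparision-Array-r} in place of Lemma~\ref{Lem:ModularComparision-Array}. Let $\bW=\bA[i,i+n_1-1][j,j+n_2-1]$ and $\bW'=\bA[i',i'+n_1-1][j',j'+n_2-1]$ be two distinct $n_1\times n_2$ subarrays of $\bA$. The easy case is $i\equiv i'\pmod{n_1}$ and $j\equiv j'\pmod{n_2}$: here Lemma~\ref{Lem:ModularComparision-Array-r} directly gives $d_R(\bW,\bW')\ge d$. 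So the work is in the complementary case, where $i\not\equiv i'\pmod{n_1}$ or $j\not\equiv j'\pmod{n_2}$.

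In that case I would first locate a copy of the marker $\bP$ inside $\bW$: choose $\hai\in\bbracket{n_1}$ and $\haj\in\bbracket{n_2}$ with $i+\hai\equiv 0\pmod{n_1}$ and $j+\haj\equiv 0\pmod{n_2}$. Since $(i,j)$ and $(i',j')$ lie in different modular positions, $i'+\hai\equiv i'-i\not\equiv 0\pmod{n_1}$ or $j'+\haj\equiv j'-j\not\equiv 0\pmod{n_2}$, so Lemma~\ref{Lem:Comparison-Array-r} applies to both windows and yields $\bW[\hai,\hai+4m-1][\haj,\haj+4m-1]=\bP$ together with $d_R(\bW'[\hai,\hai+4m-1][\haj,\haj+4m-1],\bP)\ge d$.

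The one genuinely new issue — and the step I expect to be the main obstacle — is passing from this local rank estimate to a global one: unlike the Hamming distance, $\mathrm{rank}(\bW-\bW')$ is not obviously bounded below by the rank of the restriction of $\bW-\bW'$ to a \emph{cyclic} $4m\times 4m$ subwindow, since such a subwindow need not be a contiguous submatrix of $\bW$. I would resolve this by shifting first. Let $\bV$ and $\bV'$ be obtained from $\bW$ and $\bW'$ by cyclically shifting upwards $\hai$ times and leftwards $\haj$ times; a cyclic shift is a simultaneous permutation of rows and of columns, hence rank-preserving, so $d_R(\bW,\bW')=d_R(\bV,\bV')$. After this shift the relevant cyclic subwindow becomes the genuine top-left $4m\times4m$ block, namely $\bV[0,4m-1][0,4m-1]=\bP$ and $\bV'[0,4m-1][0,4m-1]=\bW'[\hai,\hai+4m-1][\haj,\haj+4m-1]$ (this uses $4m\le n_1$ and $4m\le n_2$, which the construction guarantees). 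Since deleting rows and columns does not increase rank,
\[
d_R(\bW,\bW')={\rm rank}(\bV-\bV')\ge {\rm rank}\big((\bV-\bV')[0,4m-1][0,4m-1]\big)=d_R\big(\bP,\bV'[0,4m-1][0,4m-1]\big)\ge d .
\]
Finally, $d_R(\bW,\bW')\ge d\ge 1$ forces all $n_1\times n_2$ submatrices of $\bA$ to be pairwise distinct, so $\bA$ is a positioning array whose $n_1\times n_2$ submatrices have pairwise rank distance at least $d$, as claimed. Apart from the shifting observation in the previous paragraph, every step is a direct transcription of the Construction~2 proof.
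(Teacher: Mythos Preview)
Your proof is correct and follows the same route as the paper, which simply states that the result follows from Lemma~\ref{Lem:Comparison-Array-r} and Lemma~\ref{Lem:ModularComparision-Array-r} in the manner of the Construction~2 theorem. Your explicit justification that the cyclic shift preserves rank and turns the cyclic $4m\times 4m$ subwindow into a genuine contiguous block is exactly the point the paper leaves implicit, and it is both necessary and correctly argued.
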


It can be checked that the redundancy of $\bA$ in Construction 3 is $n_2 (d-1) \frac{n_1}{m}+O(\log (n_1n_2))$, where $m(m-d+1)=\log(n_1n_2)$. In contrast, the Singleton bound suggests that  the redundancy is at least  $n_2(d-1)$.

\section{$q$-ary Robust Positioning Sequences}
\label{Sect:qaryRPS}
In this section,  we modify the construction of  Berkowitz and Kopparty and give a new class of $q$-ary positioning sequences robust to a constant fraction of errors.  We first review Berkowitz and Kopparty's work.

\begin{theorem}[Berkowitz and Kopparty \cite{BerkowitzKopparty:2016}]\label{BerKopcon} Fix  a generator $g$ of $\F_q^*$.  Let  $\cG=(\bsg_0,\bsg_1,\ldots, \bsg_{q^k-1})$ be a $(k,q)$-Gray code. For each $\bsg_i$, let $f_i(x)\in\F_q[x]$ be the unique interpolating polynomial of degree $k+1$
so that
\begin{enumerate}
\item  ${\rm coeff}_xf_i=1$, ${\rm coeff}_1f_i=0$;
\item $f_i(g^j)=\sigma[j]$ for all $0\leq j<k$.
\end{enumerate}
Define a sequence
$$\bct=\bt_0\bt_1\cdots \bt_{q^k-1},$$
where
$$\bt_i=(f_i(g^0),f_i(g^1), \ldots, f_i(g^{q-2})).$$
Then $\bct$ is an $(n,d)_q$-RPS with $n=q-1$ and $d\geq \max\left\{\frac{n-k}{3}-3, n-3k-9\right\}.$
\end{theorem}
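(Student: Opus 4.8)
The plan is to reduce the claimed distance bound to two separate estimates on the Hamming distance between any two length-$n$ subwords of $\bct$, according to whether the two windows start at the same offset modulo $n$ (here $n=q-1$, the block length $|\bt_i|$) or at different offsets. Since $\bct=\bt_0\bt_1\cdots\bt_{q^k-1}$ is a concatenation of blocks of length $n$, a window $\bw$ of length $n$ starting at position $p$ either coincides with a single block $\bt_i$ (when $p\equiv 0\ppmod n$) or splices together a suffix of some $\bt_i$ and a prefix of $\bt_{i+1}$; the two cases below treat these two scenarios. The key structural facts I would use are: (a) for each $i$, the values $\bt_i=(f_i(g^0),\dots,f_i(g^{q-2}))$ are the evaluation of a polynomial of degree $k+1$ with prescribed leading coefficient $1$ and prescribed coefficient $0$ on $x^0$; and (b) the Gray-code ordering guarantees $f_i$ and $f_{i+1}$ agree on all but one of the interpolation points $g^0,\dots,g^{k-1}$, hence their difference, being a polynomial of degree at most $k+1$ vanishing on $\ge k-1$ points, has only a bounded number of roots.

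\textbf{Same modular position.} Suppose $\bw=\bt_i$ and $\bw'=\bt_j$ with $i\ne j$. Then $\bw-\bw'$ is the evaluation vector of $f_i-f_j$, a nonzero polynomial (the interpolation conditions at distinct Gray-code words force $f_i\ne f_j$) of degree at most $k+1$. Hence it has at most $k+1$ zeros among the $q-1=n$ evaluation points, so $d_H(\bw,\bw')\ge n-(k+1)=n-k-1$, which comfortably exceeds both claimed lower bounds. The more delicate same-position case is when $\bw$ is a spliced window $\bt_i^{\,\text{suffix}}\bt_{i+1}^{\,\text{prefix}}$ and $\bw'$ is the analogous splice of $\bt_j,\bt_{j+1}$ starting at the same offset $a\in[1,n-1]$; then $\bw-\bw'$ restricted to the first part is the evaluation of $(f_i-f_j)$ on $\{g^a,\dots,g^{q-2}\}$ and on the second part is the evaluation of $(f_{i+1}-f_{j+1})$ on $\{g^0,\dots,g^{a-1}\}$. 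Writing $f_{i+1}=f_i+h_i$ and $f_{j+1}=f_j+h_j$ where $h_i,h_j$ each have degree $\le k+1$ and vanish on $\ge k-1$ of the points $g^0,\dots,g^{k-1}$ (Gray property), one bounds the number of agreements: agreements in the first segment correspond to zeros of $f_i-f_j$, agreements in the second to zeros of $(f_i-f_j)+(h_i-h_j)$; combining via a counting argument (each relevant polynomial has degree $\le k+1$, and $h_i-h_j$ has at least $k-3$ prescribed roots among the first $k$ points so contributes few ``new'' agreements) yields $d_H(\bw,\bw')\ge \tfrac{n-k}{3}-O(1)$ in the worst sub-case and $\ge n-3k-O(1)$ in the generic sub-case — this is exactly where the two terms in the $\max$ come from.

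\textbf{Different modular positions.} Here the argument is cleaner: one window is (part of) a block, the other is shifted. Using the leading-coefficient normalization $\mathrm{coeff}_x f_i=1$ and $\mathrm{coeff}_1 f_i=0$, compare $\bw$ with a cyclic shift of $\bw'$; the difference of the two relevant evaluation vectors is governed by a nonzero low-degree polynomial arising from the mismatch of ``which coordinate corresponds to which field element,'' and again a root-counting bound of degree $\le k+O(1)$ gives agreement at most $k+O(1)$ points, hence $d_H\ge n-k-O(1)$. I would lean on the normalization conditions precisely to rule out the degenerate possibility that the shifted polynomial coincides with the original (the $x^0$ and $x^1$ coefficient constraints pin down the ``phase'').

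\textbf{Main obstacle.} The hard part is the spliced–spliced same-position case: carefully bounding the total number of coordinates where $\bw$ and $\bw'$ agree when both are concatenations of a suffix of one RS-evaluation and a prefix of an adjacent one. This requires simultaneously tracking the zeros of $f_i-f_j$, of $h_i=f_{i+1}-f_i$, of $h_j$, and of their sums/differences, all polynomials of degree $\le k+1$, and partitioning the agreement count by the split point $a$. Getting the clean constants $\tfrac{n-k}{3}-3$ and $n-3k-9$ — rather than just ``$\Omega(n-k)$'' — is the combinatorially fussy step; everything else is a one-line root count.
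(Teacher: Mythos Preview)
This theorem is quoted from \cite{BerkowitzKopparty:2016} and the present paper supplies no proof of its own, so there is nothing in-paper to compare your sketch against. Assessing the outline on its own terms, the chief gap is in your handling of the \emph{different} modular position case, which you call ``cleaner'' and claim yields $d_H\ge n-k-O(1)$ from a single root count. That is too optimistic. Since $\bw$ switches from $f_i$ to $f_{i+1}$ at one index while $\bw'$ switches from $f_j$ to $f_{j+1}$ at a different index, the comparison necessarily splits into three intervals $I_1,I_2,I_3$. On each interval one is comparing $f_a(\alpha y)$ with $f_b(y)$ where $\alpha=g^{m_1-m_2}\ne 1$; the normalizations (these fix the constant and \emph{linear} coefficients, not the leading one) make the $y$-coefficient of the difference equal to $\alpha-1\ne 0$, so each piece contributes at most $k+O(1)$ agreements --- but summing three such bounds gives only $d_H\ge n-3k-O(1)$, which is the term $n-3k-9$. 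The other term, $(n-k)/3-3$, comes from a genuinely different step on these same three intervals: one bounds each pairwise sum $\agree_r+\agree_s$ using the Gray-code relation between $f_i,f_{i+1}$ and between $f_j,f_{j+1}$, adds the three inequalities, and halves --- precisely the averaging device that appears in Case~3 of the proof of Lemma~\ref{RSwindow} in this paper. Your sketch locates this averaging in the wrong case and overstates what the shifted comparison delivers without it.

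Your same-modular-position analysis is essentially fine for the aligned sub-case, and for the spliced sub-case the straightforward two-segment count already gives $d_H\ge n-2(k+1)$, which dominates $n-3k-9$; the elaborate $h_i,h_j$ bookkeeping you propose there is not where the work lies.
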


\begin{corollary}[Berkowitz and Kopparty \cite{BerkowitzKopparty:2016}]\label{BerKopcor}
For any $0 < R < 1$ and $\delta < \max\left\{\frac{1-R}{3}, 1-3R\right\}$, for large enough $q$ there exists a $q$-ary robust positioning sequence of strength $n$, rate $R$ and relative
distance $\delta$.
\end{corollary}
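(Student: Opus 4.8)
The plan is to derive Corollary~\ref{BerKopcor} as an asymptotic normalization of Theorem~\ref{BerKopcon}: given the target pair $(R,\delta)$, I would choose the Gray-code dimension $k$ as a suitable function of $q$, apply Theorem~\ref{BerKopcon}, and let $q$ tend to infinity through prime powers.

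Concretely, I would first fix $R\in(0,1)$ and $\delta<\max\{(1-R)/3,\,1-3R\}$ and pick an intermediate value $\delta'$ with $\delta<\delta'<\max\{(1-R)/3,\,1-3R\}$; write $\varepsilon\triangleq\max\{(1-R)/3,\,1-3R\}-\delta'>0$. For each prime power $q$, put $n\triangleq q-1$ and $k\triangleq\lceil Rn\rceil$. Since $R<1$ we have $k\le n<q$, and for $q$ large also $k<n$, so Theorem~\ref{BerKopcon} applies and yields a sequence $\bct$ of length $N=q^k\cdot n$ which is an $(n,d)_q$-RPS with $d\ge\max\{(n-k)/3-3,\ n-3k-9\}$.

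It then remains to read off the two parameters. The rate of $\bct$ is $(\log_q N)/n=(k+\log_q n)/n\ge k/n\ge R$, so the rate is at least $R$. For the relative distance, $Rn\le k\le Rn+1$ gives
\[
\frac{d}{n}\ \ge\ \max\!\left\{\frac{1-R}{3}-\frac{1}{3n}-\frac{3}{n},\ \ 1-3R-\frac{12}{n}\right\}\ =\ \max\!\left\{\frac{1-R}{3},\,1-3R\right\}-o(1)
\]
as $n\to\infty$, so for all sufficiently large $q$ the right-hand side exceeds $\delta'$, hence $d/n\ge\delta'>\delta$. In particular $d\ge\delta n$, so the length-$n$ windows of $\bct$ form an error-correcting code of minimum distance at least $\delta n$, and $\bct$ is the desired $q$-ary robust positioning sequence of strength $n$, rate at least $R$, and relative distance at least $\delta$.

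I do not expect any genuine obstacle here: all the work is contained in Theorem~\ref{BerKopcon}, and the corollary is essentially bookkeeping. The only point needing a little care is the interaction between the integrality of $k$ and the strict inequality defining $\delta$ --- the fixed slack $\varepsilon>0$ must be checked to absorb simultaneously the rounding loss $k-Rn<1$, the additive constants $3$ and $9$ appearing in Theorem~\ref{BerKopcon}, and the $\log_q n$ term in the rate, all of which are $o(1)$ after normalizing by $n=q-1$. One should also note that $q$ ranges over prime powers, since the construction uses $\F_q$ and a generator of $\F_q^*$; this is harmless because prime powers are unbounded.
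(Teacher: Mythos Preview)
Your proposal is correct. The paper does not supply its own proof of Corollary~\ref{BerKopcor} --- it is quoted directly from Berkowitz and Kopparty --- but your derivation is the standard one and matches in spirit the argument the paper does give for the analogous improved bound (the corollary following Construction~4), where one simply computes $R=(k+\log_q n)/n=k/n+o(1)$ and reads off $\delta\ge\max\{(1-R)/3,\,1-3R\}-o(1)$ from the distance estimate in Theorem~\ref{BerKopcon}.
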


%Berkowitz and Kopparty  used the following strategy to estimate  the distance of  subwords in $\bct$: break each subword into pieces, each of which is a rotation of a subword of a codeword from a cyclic  code, analyse what distance and agreement bounds one can get on each piece, and then recombine them to get the final estimation.

Now, we use a simple strategy to improve on the relative distance $\delta$: we map the symbols in some positions of $\bct$ to another alphabet which is disjoint with $\F_q$.  %and then the number of agreements on the corresponding  pieces become $0$.

\vspace{2mm}
\noindent{\bf Construction 4}. Let $E$ be a set of $q$ elements  which is disjoint from $\F_q$. Fix a one-to-one map $\chi$ from $\F_q$ to $E$.  For a vector $\bv=(v_0,v_1,\ldots, v_{\ell-1}) \in \F_q^\ell$, define $\chi(\bv)=(\chi(v_0), \chi(v_1), \chi(v_2), \ldots, \chi(v_{\ell-1}))$. Now, let $\bt_0,\bt_1,\ldots,\bt_{q^k-1}$ be the family of sequences defined in Theorem~\ref{BerKopcon}.
Construct two sequences
\[
\bct_a  =\ba_0\ba_1\cdots\ba_{q^k-1},  \textrm{ \ and \ } \bct_b  =\bb_0\bb_1\cdots\bb_{q^k-1},
\]
where
$$\ba_i= \bt_i \textrm{ if $i$   is even,  or} \  \ba_i= \chi(\bt_i) \textrm{ if $i$ is  odd;}  \ and$$
$$\bb_i=\bt_i[0,k-1]\chi(\bt_i[k,n-1]).$$
\vspace{2mm}

We have the following estimation on the distances of $\bct_a$. 

\begin{theorem}
The sequence $\bct_a$ in Construction~4 is an $(n,d)_{2q}$-RPS with $n=q-1$ and $d\geq n-2k$.
\end{theorem}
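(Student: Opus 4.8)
The plan is to reduce the statement to a root count for the interpolating polynomials $f_i$, exploiting that the two alphabets $\F_q$ and $E$ are disjoint. Since $\bct_a=\ba_0\ba_1\cdots\ba_{q^k-1}$ is a concatenation of $q^k$ blocks each of length $n=q-1$, every length-$n$ window has the form $W(i,r):=\ba_i[r,n-1]\,\ba_{i+1}[0,r-1]$ for a block index $i$ and an offset $r\in\bbracket{n}$ (block indices read cyclically if wrap-around windows are permitted; this changes nothing below). The structural point is that $\ba_i$ is $\F_q$-valued for $i$ even and $E$-valued for $i$ odd, so in $W(i,r)$ the first $n-r$ coordinates use the alphabet determined by the parity of $i$ and the last $r$ coordinates the one determined by the parity of $i+1$; coordinates drawn from different alphabets automatically disagree, and the bijection $\chi$ preserves Hamming distances between same-alphabet stretches.

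First I would record the facts about the $f_i$ that drive everything: each $f_i$ has degree at most $k+1$, zero constant term, and $x$-coefficient equal to $1$, and $i\mapsto f_i$ is injective. Hence for nonzero scalars $a,b$ the polynomial $P(z)=f_i(az)-f_j(bz)$ has degree at most $k+1$ and satisfies $P(0)=0$, so it has at most $k$ nonzero roots as soon as $P\not\equiv 0$; comparing the coefficient of $z$ gives $P\not\equiv 0$ whenever $a\ne b$, while when $a=b$ but $i\ne j$ one even gets $z^2\mid P$, hence at most $k-1$ nonzero roots. Translating back: an \emph{aligned comparison} of a length-$\ell$ stretch of $\bt_i$ at exponents $g^{r_1},g^{r_1+1},\dots$ against one of $\bt_j$ at $g^{r_2},g^{r_2+1},\dots$ has at most $k$ agreements if $r_1\ne r_2$, and at most $k-1$ agreements if $r_1=r_2$ and $i\ne j$. (Here I use $g^n=1$ to cycle exponents freely modulo $q-1$.)

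Then I would carry out the case analysis on a pair of distinct windows $W(i_1,r_1),W(i_2,r_2)$. If $r_1=r_2$ and $i_1\not\equiv i_2\pmod 2$, all $n$ coordinates are cross-alphabet, so $d_H=n$. If $r_1=r_2$ and $i_1\equiv i_2\pmod 2$ (so $i_1\ne i_2$), the distance splits as a length-$(n-r)$ plus a length-$r$ aligned $\bt$-comparison with equal shifts and distinct indices, hence is at least $[(n-r)-(k-1)]^+ + [r-(k-1)]^+$, which one checks is $\ge n-2k$. If $r_1<r_2$, I split both windows at the break points $n-r_2$ and $n-r_1$ into three regions of lengths $n-r_2$, $r_2-r_1$, $r_1$, whose alphabet-parities are $(i_1,i_2)$, $(i_1,i_2+1)$, $(i_1+1,i_2+1)$. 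When $i_1\equiv i_2$, the middle region is entirely cross-alphabet (contributing $r_2-r_1$ disagreements) while the two outer regions are aligned $\bt$-comparisons with distinct shifts (contributing at least $(n-r_2)-k$ and $r_1-k$), so $d_H\ge [(n-r_2)-k]^+ + (r_2-r_1) + [r_1-k]^+\ge n-2k$. When $i_1\not\equiv i_2$, the two outer regions are cross-alphabet (contributing $(n-r_2)+r_1$) and the middle region is an aligned $\bt$-comparison with a nontrivial shift (contributing at least $(r_2-r_1)-k$), so $d_H\ge n-k$. In every case $d_H\ge n-2k$, which is the claim.

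The delicate point is the straddling, same-parity case ($r_1\ne r_2$, $i_1\equiv i_2$): the alternating alphabet buys nothing on the two outer regions, so the whole bound rests on three independent root counts, and one must verify that the shifted polynomials $f_{i_1}(g^{r_1}z)-f_{i_2}(g^{r_2}z)$ and $f_{i_1+1}(z)-f_{i_2+1}(g^{r_2-r_1}z)$ are genuinely nonzero — this is exactly where $r_1\ne r_2$ in $\bbracket{q-1}$ forces $g^{r_1}\ne g^{r_2}$, i.e. $g^{r_2-r_1}\ne1$ — and then check that the three truncated terms $[\,\cdot\,]^+$ add up to at least $n-2k$ across all sign regimes of $n-r_2-k$ and $r_1-k$.
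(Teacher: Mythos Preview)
Your proof is correct and follows essentially the same three-region decomposition as the paper: split each window at its block boundaries, use the disjointness of $\F_q$ and $E$ to force full disagreement on cross-alphabet regions, and root-count on same-alphabet regions via the degree-$(k+1)$ polynomials $f_i(az)-f_j(bz)$. Your treatment of the nonvanishing of these polynomials (via the $z$-coefficient $a-b$) is in fact more explicit than the paper's, which simply appeals to ``codewords of a Reed-Solomon code''.

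The one place where the paper's write-up is cleaner is bookkeeping: rather than lower-bounding the \emph{distance} in each region (which forces you into the truncations $[\,\cdot\,]^+$ and the closing ``delicate point'' verification across sign regimes), the paper upper-bounds the \emph{agreement} in each region. Agreements simply add: in the same-parity case one gets $\agree\le k+0+k=2k$, and in the opposite-parity case $\agree\le 0+k+0=k$, so $d_H=n-\agree\ge n-2k$ falls out immediately with no case split on whether $r_1\lessgtr k$ or $n-r_2\lessgtr k$. Everything else matches.
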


\begin{proof}
Let $\bw_1, \bw_2$ be two subwords of length $n$ in $\bct_a$, starting at positions $m_1$ and $m_2$ respectively. Let $m_1= in+\bam_1 \pmod{n}$ and $m_2= jn+\bam_2 \pmod{n}$, where $\bam_1,\bam_2 \in \bbracket{n}$.  Assume that $\bam_2\leq \bam_1$, then we can partition the interval $[0,n-1]$ into $3$ pieces by letting
\[
I_1 =[0,n-\bam_1-1], I_2 =[n-\bam_1,n-\bam_2-1],\textrm{ and } I_3 =[n-\bam_2,n-1].
\]
We consider the following cases.

{\bf Case 1.} First assume that both $i$ and $j$ are even, then $\bw_1[I_1]=\bt_i[\bam_1,n-1]$ and $\bw_2[I_1]=\bt_j[\bam_2, \bam_2+n-\bam_1-1]$. Noting that $\bt_i$ and $\bt_j$ are codewords of a  Reed-Solomon code, we have $\agree(\bw_1[I_1],\bw_2[I_1])\leq k$. Similarly,  $\agree(\bw_1[I_3],\bw_2[I_3])\leq k$.

Now for the interval $I_2$, we have $\bw_1[I_2]=\chi(\bt_{i+1}[0, \bam_1-\bam_2-1])$ and $\bw_2[I_2]=\bt_j[\bam_2+n-\bam_1,n-1]$, so the symbols of $\bw_1[I_2]$ come from $E$ and the symbols of $\bw_2[I_2]$ come  from $\F_q$. Since $E\cap \F_q=\emptyset$, $\agree(\bw_1[I_2], \bw_2[I_2])=0$. Hence $\agree(\bw_1, \bw_2)\leq 2k$ and $d_H(\bw_1,\bw_2)\geq n-2k$.

{\bf Case 2.} Here assume that $i$ is even and $j$ is odd, then  it is easy to see that the symbols of $\bw_1[I_1]$ and $\bw_2[I_3]$ are from $\F_q$ and the symbols of $\bw_1[I_3]$ and $\bw_2[I_1]$  are from $E$. Then $\agree(\bw_1[I_1],\bw_2[I_1])=\agree(\bw_1[I_3],\bw_2[I_3])=0$. Noting that $\bw_1[I_2]=\chi(\bt_{i+1}[0, \bam_1-\bam_2-1])$ and $\bw_2[I_2]=\chi(\bt_j[\bam_2+n-\bam_1,n-1])$, we have $\agree(\bw_1[I_2],\bw_2[I_2])\leq k$. Hence $d_H(\bw_1,\bw_2)\geq n-k$.

{\bf Case 3.}  The final case is when $i$ is odd. With the same argument as in Case 1 and Case 2,  we still can show that $d_H(\bw_1,\bw_2)\geq n-2k$.
\end{proof}

For the sequence $\bct_b$, we have the following result, the proof of which is similar  to that of \cite[Theorem~6]{BerkowitzKopparty:2016} and we omit here.

\begin{theorem}
The sequence $\bct_b$ in Construction~3 is an $(n,d)_{2q}$-RPS with $n=q-1$ and $d\geq \frac{n-k-9}{2}$.
\end{theorem}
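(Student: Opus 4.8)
The plan is to follow the proof of \cite[Theorem~6]{BerkowitzKopparty:2016}, which establishes the analogous statement for the sequence $\bct$, with one extra observation inserted throughout: since $\chi$ maps into $E$ and $E\cap\F_q=\emptyset$, a coordinate of a length-$n$ window of $\bct_b$ lies in $\F_q$ exactly when the window is there reading a \emph{message} coordinate $\bt_\bullet[0,k-1]$, and lies in $E$ exactly when reading a \emph{check} coordinate $\bt_\bullet[k,n-1]$; hence two windows can agree at a coordinate only when they are in the same one of these two phases there. A window with modular offset $\bam$ is in message phase precisely on the cyclic interval $[n-\bam,n-\bam+k)\bmod n$ of length $k$, so two windows with distinct modular offsets are automatically forced to disagree on the symmetric difference of these two intervals. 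This is the only change from \cite{BerkowitzKopparty:2016}, and it is what upgrades their relative distance $\tfrac13(1-R)$ to $\tfrac12(1-R)$. As a warm-up, for aligned windows $\bw_1=\bb_i$, $\bw_2=\bb_j$ with $i\ne j$ the two phase patterns coincide, so, $\chi$ being injective, $d_H(\bb_i,\bb_j)=d_H(\bt_i,\bt_j)\ge n-k\ge\tfrac{n-k-9}{2}$ because $f_i-f_j$ is a nonzero polynomial of degree at most $k$ (the $x^{k+1}$ terms cancel).

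Now let $\bw_1=\bct_b[in+\bam_1,in+\bam_1+n-1]$ and $\bw_2=\bct_b[jn+\bam_2,jn+\bam_2+n-1]$ be two windows that are not both aligned, say with $\bam_1\ge\bam_2$. Each of them crosses at most one block boundary (between $\bb_i,\bb_{i+1}$, resp.\ $\bb_j,\bb_{j+1}$), and inside each block the alphabet switches from $\F_q$ to $E$ after coordinate $k$; superimposing these break-points partitions $\bbracket{n}$ into a bounded number of cyclic intervals, on each of which $\bw_1$ reads a contiguous cyclic stretch of a single $\bt_a$ ($a\in\{i,i+1\}$) and $\bw_2$ a stretch of a single $\bt_b$ ($b\in\{j,j+1\}$), and each coordinate has a fixed phase for each window. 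Intervals where the two phases disagree contribute $0$ to $\agree(\bw_1,\bw_2)$. On the remaining intervals I would estimate the agreement as in the proof of Lemma~\ref{RSwindow}: group the interval contributions into three pairwise sums, choosing the cut points so that two of the sums are each at most $k$ up to a bounded additive constant (being essentially the agreement of two codewords of a Reed--Solomon-type code, the additive slack coming from the Gray-code adjacency and from the leading coefficients of shifted polynomials) while the third is at most the combined length of the associated intervals; adding and halving gives $\agree(\bw_1,\bw_2)\le\tfrac{n+k}{2}+O(1)$, and tracking the boundary terms exactly as in \cite[Theorem~6]{BerkowitzKopparty:2016} turns the $O(1)$ into $\tfrac92$, i.e.\ $d_H(\bw_1,\bw_2)\ge\tfrac{n-k-9}{2}$. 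Two structural facts make the ``two honest codewords'' reductions legitimate: consecutive Gray-code codewords $\bt_a,\bt_{a+1}$ agree in exactly $k-1$ coordinates, all among the message coordinates $[0,k-1]$ (since $f_a-f_{a+1}$ has degree at most $k$ and vanishes at $0$ and at all but one of $g^0,\dots,g^{k-1}$); and a cyclic rotation of a Reed--Solomon codeword over $\F_q^{*}$ is again the evaluation vector of a degree-$\le k+1$ polynomial (rotation by $g^t$ corresponds to $f(x)\mapsto f(g^tx)$), so a shifted codeword stretch can be completed to a full codeword for the purpose of the count.

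The step I expect to be the main obstacle --- and the reason the authors defer to \cite{BerkowitzKopparty:2016} --- is the endgame of this last point: when $\bw_1$ and $\bw_2$ read \emph{differently} shifted stretches, the underlying codewords may coincide after shifting, which can happen only for shifts $\eta$ with $\eta^{k+1}=1$; bounding the length of such coincident runs, so that they only cost the additive constant absorbed into the $-9$, is the delicate heart of \cite[Theorem~6]{BerkowitzKopparty:2016}. The remaining work is the finite, if somewhat tedious, case analysis of the mutual positions of the two block boundaries and the two phase boundaries, which is exactly why importing their argument is cleaner than reproducing it in full.
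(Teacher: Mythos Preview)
Your plan is exactly what the paper does: it omits the proof entirely and defers to \cite[Theorem~6]{BerkowitzKopparty:2016}, with the $\F_q$/$E$ phase separation being the sole new ingredient, precisely as you identify. One minor correction: the $x^{k+1}$ coefficients of the $f_i$ are determined by the interpolation data and need not coincide, so $f_i-f_j$ and $f_a-f_{a+1}$ can have degree $k+1$ rather than $k$; the bounds you state still hold, however, since $x^2\mid(f_i-f_j)$ (both the constant and the linear coefficients vanish by construction), and in any case all that is actually used downstream is that $\bt_a[0,k-1]$ and $\bt_{a+1}[0,k-1]$ differ in exactly one position, which is the Gray-code property itself.
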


\begin{corollary}
For any $0 < R < 1$ and $\delta < \max\left\{\frac{1-R}{2}, 1-2R\right\}$, for large enough $q$ there exists a $q$-ary robust positioning sequence of strength $n$, rate $R$ and relative
distance $\delta$.
\end{corollary}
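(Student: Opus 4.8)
The plan is to read the corollary off directly from the two theorems just proved, namely that $\bct_a$ is an $(n,d)_{2q}$-RPS with $n=q-1$ and $d\ge n-2k$, and $\bct_b$ is an $(n,d)_{2q}$-RPS with $n=q-1$ and $d\ge (n-k-9)/2$, where $k$ is the free dimension parameter of Construction~4. I would let $q$ run through prime powers, set $n=q-1$, and choose $k$ so that $k/n\to R$; then the rate of the $\bct_a$ family tends to $R$ with relative distance tending to $1-2R$, while the rate of the $\bct_b$ family tends to $R$ with relative distance tending to $(1-R)/2$. Using whichever of the two families has the larger limiting relative distance for the given $R$, one then beats any fixed $\delta<\max\{(1-R)/2,\,1-2R\}$ once $q$ is large.

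More precisely: since $1-2R\ge (1-R)/2$ exactly when $R\le 1/3$, I would use the $\bct_a$ family for $R\le 1/3$ and the $\bct_b$ family for $R\ge 1/3$. Fix a target $\delta<\max\{(1-R)/2,\,1-2R\}$. For a prime power $q$ put $n=q-1$ and choose the dimension parameter $k=k(q)$ to be the least integer with $q^{k}(q-1)\ge (2q)^{Rn}$; a short calculation gives $k=Rn\bigl(1+o(1)\bigr)$, so $k/n\to R$, and (for $q$ large) $k<n$ as required by the construction. The selected theorem then yields an $(n,d)_{2q}$-RPS of length $q^{k}(q-1)$, which I truncate to a contiguous subword of length $\bigl\lceil (2q)^{Rn}\bigr\rceil$; this truncation preserves the RPS property (a contiguous subword contains only a subset of the length-$n$ windows, so the minimum distance cannot decrease) and pins the rate at exactly $R$.

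It then remains to take the two limits. For the rate,
\[
\frac{\log_{2q}\!\bigl(q^{k}(q-1)\bigr)}{q-1}
=\frac{k}{q-1}\cdot\frac{\log q}{\log 2q}+\frac{\log(q-1)}{(q-1)\log 2q}
\;\longrightarrow\; R,
\]
since $\log q/\log 2q\to 1$, $\log(q-1)/\bigl((q-1)\log 2q\bigr)\to 0$ and $k/(q-1)\to R$; truncation only lowers the length, so the truncated family has rate exactly $R$. For the relative distance, $d/n\ge 1-2k/n\to 1-2R$ in the $\bct_a$ case and $d/n\ge\bigl(1-k/n-9/n\bigr)/2\to (1-R)/2$ in the $\bct_b$ case, so for all large $q$ this exceeds $\delta$. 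The statement is phrased over a ``$q$-ary'' alphabet whereas our alphabet has size $2q$, so one reads it, exactly as in Corollary~\ref{BerKopcor}, as holding for all sufficiently large alphabet sizes of the form $2\times(\text{prime power})$, which by the infinitude of primes (or Lemma~\ref{prime}) are unbounded.

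The only point that needs care — and hence the ``main obstacle'' — is the bookkeeping caused by doubling the alphabet: the rate acquires the factor $\log q/\log 2q<1$, and one must check that this, together with the rounding in the choice of $k$ and the truncation to exact rate $R$, perturbs the relative distance only by $o(1)$. This is precisely why the hypothesis requires the strict inequality $\delta<\max\{(1-R)/2,\,1-2R\}$; everything else is a routine unwinding of the definitions of rate and relative distance together with the two theorems already established.
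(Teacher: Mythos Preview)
Your proposal is correct and takes essentially the same approach as the paper: compute the rate of $\bct_a$ and $\bct_b$ as $\log_{2q}\bigl((q-1)q^k\bigr)/(q-1)$, observe that the factor $\log q/\log(2q)$ is $1-o(1)$ so that the rate is $k/n+o(1)$, and then read off the relative distances $1-2k/n$ and $(1-k/n)/2-o(1)$ from the two theorems. The paper's proof is in fact terser than yours---it does not bother with the explicit choice of $k=k(q)$ or the truncation to exact rate $R$, simply writing the rate as at least $(k+1)/n-o(1)$ and the relative distances as $1-2R-o(1)$ and $(1-R)/2-o(1)$---but your added bookkeeping is harmless and arguably more careful.
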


\begin{proof}
$\bct_a$ and $\bct_b$ have the same rate:
\begin{align*}
R=&\frac{\log_{2q}(nq^k)}{n}=\frac{\log_q(nq^k)}{n\log_q(2q)}
=\frac{\log_q(nq^k)}{n} \frac{1}{1+\frac{1}{\log q}}\\
\geq & \left(\frac{k+1}{n}-o(1)\right)\left(1-O\left(\frac{1}{\log q}\right)\right)
= \frac{k+1}{n}-o(1).
\end{align*}
The relative distance of $\bct_a$ is $\delta_a\geq \frac{n-2k}{n}=1-2R-o(1)$, and  the relative distance of $\bct_b$ is $\delta_b\geq \frac{n-k-9}{2n}=\frac{1-R}{2}-o(1).$
 \end{proof}

Recall that the relative distance of $\bct$ constructed in Corollary~\ref{BerKopcor}
is less than $\max\left\{\frac{1-R}{3}, 1-3R\right\}$. So, the constructed arrays $\bct_a$ and $\bct_b$ have larger relative distance, i.e., $\max\left\{\frac{1-R}{2}, 1-2R\right\}$. In contrast, using the Singleton bound, it is easy to see that the relative distance should be no more that $1-R+o(1)$.

\section{The Maximum Length of a Binary Robust Positioning Sequence}
\label{Sect:optimalbinaryRPS}
In this section, we determine the exact value of $P(n,d)$ for $d\geq \floor*{2n/3}$. We require the following upper bound on $P(n,d)$.
\begin{proposition}[Plotkin Bound]\label{Plotkin-bound}
If $d$ is even and $2d>n$, then 
$P(n,d)\leq 2 \left\lfloor \frac{d}{2d-n}\right\rfloor+n-1;$
if $d$ is odd and $2d+1>n$, then 
$P(n,d)\leq 2 \left\lfloor \frac{d+1}{2d+1-n}\right\rfloor+n-1.$
\end{proposition}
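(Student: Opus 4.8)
The plan is to reduce the statement to the classical Plotkin bound for block codes: the length-$n$ windows of a robust positioning sequence are pairwise at Hamming distance at least $d$, so they form an ordinary binary code of minimum distance $d$, and a bound on the size of such a code immediately bounds the sequence length.

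First I would extract the code. Let $\bs$ be an $(n,d)$-RPS of length $N=P(n,d)$. By the definition of an RPS (the $1\times n$ case of an RPA), the $N-n+1$ windows $\bs[0,n-1],\bs[1,n],\ldots,\bs[N-n,N-1]$ satisfy $d_H(\bs[i,i+n-1],\bs[i',i'+n-1])\ge d$ for all distinct $i,i'\in\{0,1,\ldots,N-n\}$; since $d\ge 1$ they are in particular pairwise distinct, so they constitute a binary code $C\subseteq\F_2^n$ with $|C|=N-n+1$ and minimum distance $d(C)\ge d$. This step simply forgets the overlap/ordering structure of the windows, so the resulting bound is a valid (possibly loose) upper bound on $P(n,d)$.

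Next I would invoke the classical Plotkin bound on $C$: for a binary code of length $n$ and minimum distance $d$, if $d$ is even and $2d>n$ then $|C|\le 2\left\lfloor \frac{d}{2d-n}\right\rfloor$, and if $d$ is odd and $2d+1>n$ then $|C|\le 2\left\lfloor \frac{d+1}{2d+1-n}\right\rfloor$. The even case is the usual double counting of $\sum_{\bx,\by\in C}d_H(\bx,\by)$: a coordinate with $a$ zeros and $|C|-a$ ones contributes $2a(|C|-a)$, bounded by $|C|^2/2$ (resp.\ $(|C|^2-1)/2$ when $|C|$ is odd), while each of the $|C|(|C|-1)$ ordered distinct pairs contributes at least $d$; solving the inequality for $|C|$ and using parity of $|C|$ yields the floor. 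The odd case follows from the even case by appending an overall parity-check coordinate, turning a length-$n$, odd-distance-$d$ code into a length-$(n+1)$, even-distance-$(d+1)$ code of the same size. Combining with the previous step, $N-n+1=|C|\le 2\left\lfloor \frac{d}{2d-n}\right\rfloor$ in the even case and $N-n+1=|C|\le 2\left\lfloor \frac{d+1}{2d+1-n}\right\rfloor$ in the odd case, and rearranging gives exactly the two displayed inequalities.

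I do not expect a genuine obstacle here: essentially all the content is in correctly recalling the classical Plotkin bound, and the only points requiring care are the parity-extension reduction from odd to even distance and keeping the floor functions straight. The substantive work of this section lies not in this upper bound but in the matching construction that attains it for $d\ge\floor*{2n/3}$, which would be carried out separately.
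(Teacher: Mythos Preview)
Your argument is correct and is exactly the intended derivation: the paper states this proposition without proof, treating it as an immediate consequence of the classical Plotkin bound applied to the length-$n$ windows of the sequence, which is precisely what you carry out. There is nothing to add.
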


\begin{theorem}\label{elarge}
If $\floor*{2n/3}+1 \leq d \leq n$, we have $P(n,d) = n+1$. 
\end{theorem}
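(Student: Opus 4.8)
The plan is to prove the two inequalities $P(n,d)\ge n+1$ and $P(n,d)\le n+1$ separately; only the second one uses the hypothesis $d\ge\lfloor 2n/3\rfloor+1$, and it will come from the Plotkin bound (Proposition~\ref{Plotkin-bound}).

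For the lower bound I would just exhibit one $(n,d)$-RPS of length $n+1$. The key observation is that a length-$(n+1)$ word $\bs$ has only the two length-$n$ windows $\bs[0,n-1]$ and $\bs[1,n]$, so being an $(n,d)$-RPS amounts to the single requirement $d_H(\bs[0,n-1],\bs[1,n])\ge d$. Taking $\bs$ to be the alternating word $0101\cdots$ of length $n+1$ makes these two windows bitwise complementary, hence at Hamming distance $n\ge d$; this shows $P(n,d)\ge n+1$.

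For the upper bound I would apply Proposition~\ref{Plotkin-bound}. First I would record the numerical consequence of the hypothesis: since $\lfloor 2n/3\rfloor\ge(2n-2)/3$, we get $d\ge\lfloor 2n/3\rfloor+1\ge(2n+1)/3$, hence $3d\ge 2n+1$ and in particular $2d>n$ (so the proposition applies in either parity). If $d$ is even, then $d\le n$ gives $2d-n\le d$, hence $d/(2d-n)\ge 1$, while $3d\ge 2n+1>2n$ gives $d<2(2d-n)$, hence $d/(2d-n)<2$; therefore $\lfloor d/(2d-n)\rfloor=1$ and the proposition yields $P(n,d)\le 2\cdot 1+n-1=n+1$. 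If $d$ is odd, the identical reasoning applied to $\lfloor(d+1)/(2d+1-n)\rfloor$ works: $d\le n$ gives $(d+1)/(2d+1-n)\ge 1$, and $3d\ge 2n+1$ (so $2n<3d+1$) gives $(d+1)/(2d+1-n)<2$, so that floor is again $1$ and $P(n,d)\le n+1$. Combining the two bounds gives $P(n,d)=n+1$.

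There is essentially no structural obstacle here; the only point needing care is the floor-function bookkeeping, namely verifying that $d\ge\lfloor 2n/3\rfloor+1$ really forces $3d\ge 2n+1$ (cleanest via $\lfloor 2n/3\rfloor\ge(2n-2)/3$, or by splitting on $n\bmod 3$) and then checking that this makes each of the two floor expressions in Proposition~\ref{Plotkin-bound} equal to exactly $1$ throughout the range $\lfloor 2n/3\rfloor+1\le d\le n$.
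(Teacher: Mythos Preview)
Your proof is correct and follows essentially the same approach as the paper: the alternating word you use for the lower bound is exactly the paper's sequence $(01)^{\lceil n/2\rceil}0^{n+1-2\lceil n/2\rceil}$, and the upper bound via the Plotkin bound is the same, though you spell out the floor computation that the paper leaves implicit.
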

\begin{proof}
According to the Plotkin bound, if $\floor*{2n/3}+1 \leq d \leq n$, we have $P(n,d) \leq n+1$. It is easy to see that the sequence 
$(01)^{\ceil*{n/2}}0^{n+1-2\ceil*{n/2}}$ is an $(n,d)$-RPS of length $n+1$.
\end{proof}

\begin{theorem}\label{0mod3}
Let  $n\equiv 0\pmod{3}$, then $P(n,2n/3) = n+2$.
\end{theorem}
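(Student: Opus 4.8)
The plan is to prove the two inequalities $P(n,2n/3)\ge n+2$ and $P(n,2n/3)\le n+2$ separately; throughout write $n=3t$, so the target distance is $d=2n/3=2t$. For the lower bound I would exhibit one explicit sequence: take $\bs$ to be the length-$(n+2)$ prefix of the $3$-periodic sequence with period $011$, i.e. $\bs=(011)^t01$. It has exactly three windows $\bw_0=\bs[0,n-1]$, $\bw_1=\bs[1,n]$, $\bw_2=\bs[2,n+1]$, and for each of the three pairs the Hamming distance is a sum of $n=3t$ consecutive values of a $3$-periodic $\{0,1\}$-indicator — namely $[s_j\ne s_{j+1}]$ for the pairs $(\bw_0,\bw_1)$ and $(\bw_1,\bw_2)$, and $[s_j\ne s_{j+2}]$ for $(\bw_0,\bw_2)$. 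In each case the indicator has exactly two ones per period, so a direct count gives a pairwise distance of exactly $2t=d$; hence $\bs$ is an $(n,2n/3)$-RPS of length $n+2$.

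For the upper bound, the Plotkin bound (Proposition~\ref{Plotkin-bound}) already gives $P(n,2n/3)\le n+3$, so it suffices to rule out an $(n,2n/3)$-RPS $\bs=s_0s_1\cdots s_{n+2}$ of length exactly $n+3$. Suppose one exists. Its four windows $\bw_i=\bs[i,i+n-1]$, $i\in\{0,1,2,3\}$, form a binary code of length $n$, size $4$, with all six pairwise distances $\ge 2n/3$. Running the usual Plotkin averaging, the sum of these six distances equals $\sum_{c=0}^{n-1}z_c(4-z_c)$, where $z_c$ is the number of zeros among the four windows in coordinate $c$; since $z(4-z)\le 4$ for integers $z\in\{0,\dots,4\}$ this sum is at most $4n$, while it is also at least $6\cdot(2n/3)=4n$. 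Equality must therefore hold termwise: every pairwise window distance is exactly $2n/3$, and $z_c=2$ for every $c$, i.e.
\[
s_c+s_{c+1}+s_{c+2}+s_{c+3}=2\qquad\text{for all }c\in\bbracket{n}.
\]
Subtracting the instance at $c$ from that at $c+1$ yields $s_{c+4}=s_c$ for $0\le c\le n-2$, so $\bs$ is $4$-periodic and completely determined by the weight-$2$ block $(s_0,s_1,s_2,s_3)$.

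I would then extract the contradiction from the pair $(\bw_0,\bw_2)$. By $4$-periodicity the indicator $[s_j\ne s_{j+2}]$ depends only on the parity of $j$, equalling $[s_0\ne s_2]$ for even $j$ and $[s_1\ne s_3]$ for odd $j$; a finite check over the six weight-$2$ blocks shows these two bits are always equal, with common value $0$ exactly for the blocks $0101$ and $1010$ and value $1$ for the other four. Hence $d_H(\bw_0,\bw_2)=\sum_{j=0}^{n-1}[s_j\ne s_{j+2}]$ is either $0$ or $n=3t$, whereas the equality analysis forces it to be $2n/3=2t$ — impossible since $2t\notin\{0,3t\}$ for $t\ge 1$. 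This contradiction shows no RPS of length $n+3$ exists, so $P(n,2n/3)=n+2$. The main obstacle is precisely this upper bound: the Plotkin bound is off by one, and the saving comes from noticing that the four heavily overlapping windows of a length-$(n+3)$ sequence are too rigid to be simultaneously equidistant at the extremal value $2n/3$ — the $2$-$2$ coordinate condition forces $4$-periodicity, after which the shift-by-two pair $\bw_0,\bw_2$ has its distance pinned to $0$ or $n$. I expect the only mildly delicate bookkeeping to be the step "$z_c=2$ for all $c$ $\Leftrightarrow$ $4$-periodic with weight-$2$ block" and the case check over the six blocks; the rest is routine periodic counting.
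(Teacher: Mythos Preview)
Your proof is correct and follows essentially the same route as the paper: the Plotkin averaging argument forces every column to be balanced and hence the putative length-$(n+3)$ sequence to be $4$-periodic with a weight-$2$ block, after which a short case check on that block yields the contradiction; your explicit lower-bound sequence $(011)^t01$ is the bitwise complement of the paper's $(100)^{n/3}(10)$. The only cosmetic difference is that you extract the contradiction from the single pair $(\bw_0,\bw_2)$, whereas the paper splits into three cases according to which of $x_2,x_3,x_4$ equals $x_1$ and uses a different pair in each case.
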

\begin{proof} %We first consider the case of $n\equiv 0\pmod {3}$.
Let $\bs$ be an $(n,2n/3)$-RPS of length $N$. According to the Plotkin bound, we have that $N\leq n+3$. Suppose that $N=n+3$, then there are four subwords $\bc_1, \bc_2, \bc_3,$ and $\bc_4$, which are listed in a $4\times n$ matrix.
\begin{align*}
	\bc_1 &= x_1x_2\cdots x_n\\
	\bc_2 &= x_2x_3\cdots x_{n+1}\\
	\bc_3 &= x_3x_4\cdots x_{n+2}\\
	\bc_4 &= x_4x_5\cdots x_{n+3}
\end{align*}
Since the Plotkin bound is attained in this case,  the number of zeros and ones in each column of the matrix is equal. Hence, $x_{i}=x_{i+4}$, where $1 \leq i \leq n-1$. We consider the following cases.

{\bf Case 1.} If $x_1=x_2$, then $x_3=x_4$. It follows that $d_H(\bc_1,\bc_2) = \floor*{\frac{n}{2}} < \frac{2n}{3}$, which is a contradiction.

{\bf Case 2.} If $x_1=x_3$, then $x_2=x_4$ and hence $d_H(\bc_1,\bc_3) = 0 < \frac{2n}{3}$, which is a contradiction.

{\bf Case 3.} If $x_1=x_4$, then $x_2=x_3$ and hence $d_H(\bc_2,\bc_3) = \floor*{\frac{n}{2}} < \frac{2n}{3}$, which is a contradiction.

Therefore, we have that $N \leq n+2$. It is easy to see that the binary sequence $(100)^{n/3}(10)$ of length $n+2$ is an $(n,2n/3)$-RPS of length $n+2$. It follows that $P(n,2n/3)=n+2$.
%For $n\equiv 2\pmod{3}$,  the Plotkin bound shows that the length $N$ of an  $(n,(2n-1)/3)$-RPS is at most  $n+3$. If $N=n+3$, the Plotkin bound is absolutely  tight and we can use the same argument as above to show that such a sequence does not exist. On the other hand,  it is checked that the binary sequence $(100)^{(n+1)/3}1$ of length $n+2$ is an $(n,(2n-1)/3)$-RPS of length $n+2$. Thus, $P(n,(2n-1)/3)=n+2$ for $n\equiv 2\pmod{3}$.

\end{proof}

\begin{theorem}\label{1mod3}
Let $m$ be a positive integer, then we have that
  \begin{equation*}
\begin{split}
P(3m+1,2m) =  & \begin{cases}   7, \textup{\ \ when $m=1$;} \\
13, \textup{\ \ when $m=2$;} \\
14, \textup{\ \ when $m=3$;} \\
3m+4, \textup{\ \ when $m\geq 4$.}\\
\end{cases}
\end{split}
\end{equation*}  
\end{theorem}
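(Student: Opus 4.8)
I would split into the four ranges $m=1$, $m=2$, $m=3$, $m\ge 4$, proving a matching upper and lower bound in each.

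\emph{Upper bounds.} The main tool is the Plotkin bound (Proposition~\ref{Plotkin-bound}). With $n=3m+1$ and $d=2m$ we have $2d-n=m-1$, hence $d/(2d-n)=2+2/(m-1)\in[2,3)$ for all $m\ge 4$, so $\lfloor d/(2d-n)\rfloor=2$ and Proposition~\ref{Plotkin-bound} gives $P(3m+1,2m)\le 2\cdot 2+(3m+1)-1=3m+4$; this already proves the upper bound for $m\ge 4$. For $m=2$ (resp.\ $m=3$) the bound only yields $14$ (resp.\ $15$), so one extra unit must be removed. Here I would use the equality case of the Plotkin argument: an $(n,2m)$-RPS of the Plotkin length $N$ has its $N-n+1$ length-$n$ windows $\bw_0,\dots,\bw_{N-n}$ satisfying
\[
\binom{N-n+1}{2}\,2m\ \le\ \sum_{i<j}d_H(\bw_i,\bw_j)\ =\ \sum_{j=0}^{n-1}z_j(N-n+1-z_j)\ \le\ n\Big\lfloor\tfrac{(N-n+1)^2}{4}\Big\rfloor,
\]
where $z_j$ is the number of zeros in the $j$-th column of the window matrix; equality throughout forces every column to be balanced and every pair of windows to be exactly $2m$ apart. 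Since the $j$-th column is the block $s_j\cdots s_{j+N-n}$ of the underlying sequence, balance of two consecutive columns forces $s_j=s_{j+N-n+1}$, i.e.\ (near-)periodicity; substituting this periodicity into the identities $d_H(\bw_i,\bw_j)=2m$ for a few well-chosen pairs produces a contradiction, in the spirit of the case analysis of Theorem~\ref{0mod3}. For $m=1$ the hypothesis $2d>n$ of Proposition~\ref{Plotkin-bound} fails; since $n=4$ this is a finite check, and I would rule out a length-$8$ $(4,2)$-RPS directly (either by the same balanced-column/transition-count bookkeeping or by brute force over the $2^8$ candidates).

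\emph{Lower bounds.} For $m\ge 4$ I would exhibit an explicit $(3m+1,2m)$-RPS of length $3m+4$. The guiding idea is to encode a candidate by its length-$4$ sliding windows (``columns'') $c_j=s_js_{j+1}s_{j+2}s_{j+3}$, which trace a walk in the de Bruijn graph on $3$-bit vertices. The six pairwise distances of the four length-$(3m+1)$ windows are, up to the contribution of the non-weight-$2$ columns, the numbers of columns in the three partition classes $\{1100,0011\}$, $\{1010,0101\}$, $\{1001,0110\}$; since $\sum_{i<j}d_H(\bw_i,\bw_j)\le 4(3m+1)=12m+4$ and all six distances must be $\ge 2m$, almost every column has weight $2$. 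The weight-$2$ columns form a $4$-cycle (traced by $(1100)^{\infty}$) disjoint from a $2$-cycle (traced by $(10)^{\infty}$), so a good sequence must spend $\approx 2m$ steps on the former and $\approx m$ on the latter, joined by a single ``bridge'' column. This produces
\[
\bs=(1100)^{m/2}\,1101\,(01)^{m/2}\ \ (m\text{ even}),\qquad
\bs=(1100)^{(m-1)/2}\,11001\,(01)^{(m+1)/2}\ \ (m\text{ odd}),
\]
both of length $3m+4$; a direct count of the $3m+1$ columns shows that exactly $m$ of them lie in each of the three partition classes, with one extra column of weight $3$ (resp.\ weight $1$), so all six window distances come out to be $2m$ or $2m+1$. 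For $m=1,2,3$ I would simply list sequences of lengths $7,13,14$ (found by a short search) and check the RPS property directly.

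\emph{Main obstacle.} The delicate step is the one-unit improvement over Plotkin for $m=2,3$ (and the analogous $m=1$ estimate): turning Plotkin equality into an actual contradiction requires a fully explicit analysis of the first $N-n+1$ symbols, which does not collapse into a clean inequality. The $m\ge 4$ construction, once phrased through the de Bruijn/column viewpoint, is essentially bookkeeping; the only care needed is the parity split and aligning the bridge column so that the periodic parts sum to length exactly $3m+4$.
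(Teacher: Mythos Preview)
Your proposal is correct and takes a genuinely different route from the paper for the lower bound when $m\ge 4$.

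The paper's construction is \emph{recursive}: it starts from $\bcs_1=0001000$ and builds $\bcs_{m+1}$ by splicing a single flipped bit into the middle of $\bcs_m$, proving by induction that each $\bcs_m$ is a $(3m+1,2m)$-RPS. Your construction is a \emph{closed form} obtained from the de~Bruijn/column viewpoint: spend $2m$ steps on the $4$-cycle $\{1100,1001,0011,0110\}$, one bridge column, and $m$ steps on the $2$-cycle $\{1010,0101\}$, giving the explicit sequences $(1100)^{m/2}1101(01)^{m/2}$ and $(1100)^{(m-1)/2}11001(01)^{(m+1)/2}$. The verification in your approach is a single global column count (each of the three weight-$2$ classes occurs exactly $m$ times, plus one odd-weight bridge), which is arguably more transparent than the paper's inductive comparison of two $4\times(3m+1)$ matrices; on the other hand, the paper's recursion explains \emph{why} the optimal sequences for successive $m$ are nested, a structural fact your closed form does not reveal.

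For the upper bounds and the small cases $m\le 3$, the paper simply invokes Plotkin for $m\ge 4$ and settles $m=1,2,3$ by exhaustive computer search; your plan to squeeze the extra unit for $m=2,3$ via the Plotkin-equality balanced-column argument is more ambitious but, as you note, not fully carried out. Since the paper itself does not give a hand proof here, your sketch is at the same level of rigor as the published version on these cases.
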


\begin{proof}
An exhaustive search  shows that $P(4,2)=7$, $P(7,4)=13$ and $P(10,6)=14$. The corresponding optimal RPSs can be found in the Appendix. 

For $m\geq 4$, the Plotkin bound suggests that $P(3m+1,2m)\leq 3m+4$. In the followings, we give a recursive construction of RPSs with length  achieving this  bound. 

Let $$\bcs_1\triangleq 0001000.$$ For $m\geq 1$, let $$\bcs_{m+1}\triangleq \bcs_m[0,m+3] \overline{\bcs_m[m+4] }\bcs_m[m+2, 3m+3],$$
where  $\overline{\bcs_m[m+4]}$ is the complement of $\bcs_m[m+4]$. %In other words, $\bcs_{m+1}$ is obtained from $\bcs_m$ by inserting a short sequence $bcs_m[0,m+3] \overline{\bcs_m[m+4] }\bcs_m[m+2,$

Obviously, each $\bcs_m$ has length $3m+4$. By using  inductive arguments, one can see that for any $m\geq 1$, $$\bcs_m[m+1]=\bcs_m[m+4]$$ and $$(\bcs_m[m+2],\bcs_m[m+3],\overline{\bcs_m[m+4]})=(1,0,1) \textup{\ or } (0,1,0).$$  

Now, we use mathematical induction to show that the sequences constructed above are $(3m+1,2m)$-RPSs.
It is easy to check that $\bcs_1$ is a $(4,2)$-RPS. Assume that $\bcs_m$ is a $(3m+1,2m)$-RPS. Let $s_i$ be the $(1+i)$-th symbol of $\bcs_m$. Consider the following $4\times (3m+1)$ matrix. 
$$\left(
\begin{array}{ccccccc}
s_0 & \cdots & s_m &         s_{m+1} &         s_{m+2} & \cdots & s_{3m}\\
s_1 & \cdots & s_{m+1} &     s_{m+2} &      	 s_{m+3} & \cdots & s_{3m+1}\\
s_2 & \cdots & s_{m+2} &     s_{m+3} & 			s_{m+4} & \cdots & s_{3m+2}\\
s_3 & \cdots & s_{m+3} &       s_{m+4} &			 s_{m+5} & \cdots & s_{3m+3}\\
\end{array}
\right)$$
According to our assumption, any two rows of the matrix above have distance at least $2m$. Now, for $\bcs_{m+1}$, since   
$$\bcs_{m+1} = s_0 s_1 \cdots s_{m+1} s_{m+2} s_{m+3} \overline{s_{m+4}} s_{m+2} s_{m+3} \cdots s_{3m+3},$$
the four subwords of length $3m+4$  form the following matrix. 
$$\left(
\begin{array}{cccccccccc}
s_0 & \cdots & s_m &         s_{m+1} &         s_{m+2}   &   s_{m+3}   &    \overline{s_{m+4}}      &       s_{m+2} & \cdots & s_{3m}\\
s_1 & \cdots & s_{m+1} &     s_{m+2} &         s_{m+3}   &   \overline{s_{m+4}}   &    {s_{m+2}}      & 	 s_{m+3} & \cdots & s_{3m+1}\\
s_2 & \cdots & s_{m+2} &     s_{m+3} & 	\overline{s_{m+4}}   &   s_{m+2}   &    s_{m+3}      & 			s_{m+4} & \cdots & s_{3m+2}\\
s_3 & \cdots & s_{m+3} &       \overline{s_{m+4}} &	s_{m+2}   &   s_{m+3}   &    s_{m+4}      & 					 s_{m+5} & \cdots & s_{3m+3}\\
\end{array}
\right)$$
So the second matrix can be obtained from the first matrix by replacing the column $(s_{m+1}, s_{m+2}, s_{m+3}, s_{m+4})^T$ with 
$$\left(
\begin{array}{cccc}
   s_{m+1} &         s_{m+2}   &   s_{m+3}   &    \overline{s_{m+4}}    \\
    s_{m+2} &         s_{m+3}   &   \overline{s_{m+4}}   &    {s_{m+2}}   \\
    s_{m+3} & 	\overline{s_{m+4}}   &   s_{m+2}   &    s_{m+3}     \\
      \overline{s_{m+4}} &	s_{m+2}   &   s_{m+3}   &    s_{m+4}   \\
\end{array}
\right).$$
We look at  the first  row and the second row. Since $(s_{m+2}, s_{m+3}, \overline{s_{m+4}}) = (0,1,0)$ or $(1,0,1)$, the replacement increases the distance by two.  Similarly, for the other pairs of rows, except  the first and fourth ones, we can see that the distances are increased by two; for the first row and the fourth row, since $s_{m+1}=s_{m+4}$, the distance is increased again by two.  Thus, according to our assumption,  the distance between any two  rows in the second matrix is at least $2m+2$. The proof is completed. 
\end{proof}

Now, we look at the case of $n\equiv 2 \pmod{3}$.  A binary vector is called {\it balanced} if the number of ones and the number of zeros are equal.
Let $\bE_1$  and $\bE_2$ be the following infinite matrices. 
$$\bE_1\triangleq\left(
\begin{array}{cccccccc}
    \cdots & 0 &   0   &   1   &    1 & 0  & 0 & \cdots  \\
    \cdots & 0 &   1   &   1   &    0 & 0  & 1 & \cdots \\
    \cdots & 1 &   1   &   0   &    0 & 1  & 1 & \cdots     \\
    \cdots & 1 &	 0   &   0   &    1 & 1  & 0 & \cdots   \\
\end{array}
\right),$$
and 
$$\bE_2\triangleq\left(
\begin{array}{ccccccccc}
    \cdots & 0 &   1   &   0   &  1 & 0 &  1 & \cdots  \\
    \cdots & 1 &   0   &   1   &  0 & 1 & 0 & \cdots \\
    \cdots & 0 & 	 1   &   0   &  1 & 0 &  1 &\cdots     \\
    \cdots & 1 &	 0   &   1   &  0 & 1 & 0 & \cdots   \\
\end{array}
\right).$$

Let $\bE_1(\ell)$ be a $4\times \ell$ submatrix of $\bE_1$ with $\ell$ consecutive columns. Similarly, let $\bE_2(\ell)$ be a $4\times \ell$ contiguous submatrix of $\bE_2$.

\begin{theorem}\label{2mod3}
Let $m$ be a positive integer, then we have that
\begin{equation*}
P(3m+2,2m+1) =  3m + 4.
\end{equation*}  
\end{theorem}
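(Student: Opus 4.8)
The plan is to prove $P(3m+2,2m+1)\le 3m+4$ by sharpening the Plotkin bound with a mod-$2$ argument, and then to match it with an explicit construction obtained by splicing a piece of $\bE_1$ and a piece of $\bE_2$. For the upper bound, Proposition~\ref{Plotkin-bound} (the odd case, with $n=3m+2$ and $d=2m+1$, so $2d+1-n=m+1$) already gives $P(3m+2,2m+1)\le 2\lfloor\frac{2m+2}{m+1}\rfloor+(3m+2)-1=3m+5$, so everything reduces to excluding length $3m+5$. Suppose $\bs=x_1x_2\cdots x_{3m+5}$ were such a sequence; then its four length-$(3m+2)$ subwords $\bc_k=\bs[k,k+3m+1]$ ($k=1,2,3,4$) are pairwise at Hamming distance at least $2m+1$, and I would derive a contradiction from these four codewords alone.

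The core step is a parity argument. Let $\Sigma$ be the sum of the six pairwise distances. Each coordinate contributes $j(4-j)\le 4$ to $\Sigma$ (where $j$ is the number of $1$'s among the four entries in that coordinate), with equality iff the coordinate is balanced, so $\Sigma\le 4(3m+2)=12m+8$; since also $\Sigma\ge 6(2m+1)=12m+6$, the excess $e:=\Sigma-(12m+6)$ is $0$, $1$, or $2$. Because $w_H(\bu\oplus\bv)\equiv w_H(\bu)+w_H(\bv)\pmod 2$, the parities of all six distances are determined by the three parities $\varepsilon_k$ of $d_H(\bc_k,\bc_{k+1})$, via $d_H(\bc_1,\bc_3)\equiv\varepsilon_1+\varepsilon_2$, $d_H(\bc_2,\bc_4)\equiv\varepsilon_2+\varepsilon_3$ and $d_H(\bc_1,\bc_4)\equiv\varepsilon_1+\varepsilon_2+\varepsilon_3\pmod2$; running through the eight choices of $(\varepsilon_1,\varepsilon_2,\varepsilon_3)$ shows the number of odd distances among the six is always $0$, $3$, or $4$. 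But $e=0$ forces all six distances to equal $2m+1$ (six odd), $e=1$ forces five to equal $2m+1$ (five odd), and $e=2$ with one distance equal to $2m+3$ again forces six odd --- all impossible. Hence $e=2$, exactly two distances equal $2m+2$, the other four equal $2m+1$, and $\Sigma=12m+8$, so every coordinate of the $4\times(3m+2)$ matrix with rows $\bc_1,\bc_2,\bc_3,\bc_4$ is balanced. Comparing columns $t$ and $t+1$ (each of weight $2$) then gives $x_t=x_{t+4}$ for $1\le t\le 3m+1$, so $\bs$ is periodic with period $4$ and period-block $x_1x_2x_3x_4$ of weight $2$. A one-line check of the six weight-$2$ patterns shows that then $d_H(\bc_1,\bc_3)$ equals $0$ (for the alternating patterns $0101,1010$) or $3m+2$ (for the four block rotations of $0011$); neither lies in $\{2m+1,2m+2\}$ when $m\ge1$, a contradiction. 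Therefore $P(3m+2,2m+1)\le 3m+4$.

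For the lower bound I would take $\bs=x_1x_2\cdots x_{3m+4}$ in which $x_1\cdots x_{2m+2}$ follows the block pattern $001100110\cdots$ underlying $\bE_1$ (an even-length segment, so it ends on a complete pair, $x_{2m+1}=x_{2m+2}$), and $x_{2m+3}\cdots x_{3m+4}$ follows the alternating pattern underlying $\bE_2$, glued so that $x_{2m+3}=\overline{x_{2m+2}}$. Then, counting contributions to the three distances of the subwords $\bc_k=\bs[k,k+3m+1]$ for $k=1,2,3$ --- the $2m$ fully-in-block positions contribute to $d_H(\bc_1,\bc_3)$ and $m$ of them to each of $d_H(\bc_1,\bc_2),d_H(\bc_2,\bc_3)$; the alternating positions contribute to $d_H(\bc_1,\bc_2),d_H(\bc_2,\bc_3)$ but not to $d_H(\bc_1,\bc_3)$; and the junction positions $t=2m+1,2m+2$ make up the remaining contributions --- one gets $d_H(\bc_1,\bc_2)=2m+1$, $d_H(\bc_2,\bc_3)=2m+2$ and $d_H(\bc_1,\bc_3)=2m+1$, all at least $2m+1$. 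Thus $\bs$ is a $(3m+2,2m+1)$-RPS of length $3m+4$ (for $m=1$ it is $\bs=0011010$), and combined with the upper bound $P(3m+2,2m+1)=3m+4$.

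I expect the upper bound to be the real obstacle. Plotkin by itself only yields ``excess at most $2$'', leaving a whole family of near-equidistant candidate configurations; the work is to see that the mod-$2$ relations among the six pairwise distances are rigid enough to collapse all of them except the exactly-equidistant, fully-balanced one, after which periodicity and the $d_H(\bc_1,\bc_3)\in\{0,n\}$ dichotomy finish the argument. On the construction side the only delicate point is that $d_H(\bc_1,\bc_2)$ and $d_H(\bc_1,\bc_3)$ are attained with no slack (both exactly $2m+1$), so the block length must be taken to be precisely $2m+2$ and the contribution of each junction position checked individually.
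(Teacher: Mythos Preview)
Your proof is correct, and your upper-bound argument is genuinely different from---and considerably cleaner than---the paper's.

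For the upper bound, the paper also starts from the Plotkin inequality $12m+6\le\Sigma\le 12m+8$ (it uses ordered pairs, hence doubles your numbers), but then proceeds by a lengthy case analysis: it classifies the at-most-two unbalanced columns into four possible shapes $\bu_1,\bu_2,\bu_3,\bu_4$ and, depending on which shapes occur and where, splits the $4\times(3m+2)$ matrix into concatenations of pieces of $\bE_1$ and $\bE_2$ separated by the unbalanced columns, deriving a contradiction in each subcase from explicit distance inequalities. Your parity argument sidesteps all of this: the observation that $d_H(\bc_i,\bc_j)\equiv w_H(\bc_i)+w_H(\bc_j)\pmod 2$ forces the number of odd pairwise distances to be $a(4-a)\in\{0,3,4\}$ (where $a$ is the number of $\bc_k$ of odd weight), which immediately kills $e\in\{0,1\}$ and the ``one distance $=2m+3$'' branch of $e=2$. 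What remains is $\Sigma=12m+8$ on the nose, hence all columns balanced, hence period $4$, and the six weight-$2$ period blocks are disposed of in one line. This is a real simplification.

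For the lower bound, the paper simply exhibits $(100)^m(1001)$, which is shorter to write down and verify than your $\bE_1$/$\bE_2$ splice. Your construction is also correct (I checked the general counts: with the block segment $x_1\cdots x_{2m+2}$ and alternating tail glued at $x_{2m+3}=\overline{x_{2m+2}}$, the three pairwise distances come out exactly $2m+1$, $2m+2$, $2m+1$ as you claim), so nothing is wrong---but if you want the cleanest write-up you could swap in the paper's $(100)^m(1001)$ and drop the junction bookkeeping.
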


\begin{proof}
Let $\bs$ be an $(3m+2,2m+1)$-RPS of length $N$. According to the Plotkin bound, we have that $N\leq 3m+5$. Suppose that $N=3m+5$, then there are four subwords of length $3m+2$, say, $\bc_1, \bc_2, \bc_3,$ and $\bc_4$, which can be listed in the following $4\times (3m+2)$ matrix.
\[
\begin{pmatrix}
    \bc_1\\
    \bc_2\\
    \bc_3\\
    \bc_4
\end{pmatrix}
= 
\begin{pmatrix} 
    x_1& x_2 & \cdots & x_{3m+2}\\
    x_2 & x_3 & \cdots & x_{3m+3}\\
    x_3 & x_4 & \cdots & x_{3m+4}\\
    x_4 & x_5 & \cdots & x_{3m+5}
\end{pmatrix} 
\]

We  consider the sum of the distances between $\bc_i$ and $\bc_j$, where $1\leq i \not= j \leq 4$. Since $d_H(\bc_i,\bc_j)\geq 2m+1$, this sum  is at least $24m+12$. In addition, every column contributes at most  8. So, the sum is at most $24m+16$. Therefore, there are at most two unbalanced columns and each of these unbalanced columns should has three identical symbols, i.e, it should be of one of the following forms:
\[
\begin{pmatrix}
    a\\
    b\\
    b\\
    b
\end{pmatrix},
\begin{pmatrix}
    b\\
    a\\
    b\\
    b
\end{pmatrix},
\begin{pmatrix}
    b\\
    b\\
    a\\
    b
\end{pmatrix}, \textup{\ and\ } 
\begin{pmatrix}
    b\\
    b\\
    b\\
    a
\end{pmatrix}, \text{\ where\ } a \not = b.
\]
Denote these forms as $\bu_1, \bu_2,\bu_2$ and $\bu_4$, respectively. 
We consider the following cases.

{\bf Case 1.} The sum of the distances is  $24m + 16$. Then all columns are balanced. The same argument as that in the proof of  Theorem~\ref{0mod3} leads to a contradiction.

{\bf Case 2.} The sum  is equal to $24m + 14$. There is only one unbalanced column. Hence, the matrix should be one of the following forms.
$$\bE_1(3m+1) \bu_1,   \  \bE_2(3m+1-x)  \bu_2 \bE_1(x),  \  \bE_1(x) \bu_3\bE_2(3m+1-x), \ \textup{\ or\ } \bu_4 \bE_1(3m+1).$$
If the matrix has form  $\bE_1(3m+1) \bu_1$, then $d_H(\bc_3,\bc_4) = \floor{(3m+1)/2} < 2m+1$, a contradiction. If the matrix  has form $\bE_2(3m+1-x)\bu_2\bE_1(x)$, we have that 
$d_H(\bc_1,\bc_4) = 3m+1-x +\floor{x/2}$ and  $d_H(\bc_1,\bc_3) = x$, both of which should be at least $2m+1$. That's impossible. For the other two cases, we may consider the reverse of $\bs$ to get the contradiction. 

{\bf Case 3.} The sum of the  distances is $24m + 12$. There are two unbalanced columns and we discuss in the following subcases, depending on the possible pair  of the unbalanced  columns.
\begin{enumerate}
\item If the pair has form $(\bu_1,\bu_4)$, then the matrix is of form $\bE_1(x)  \bu_1\bu_4 \bE_1(3m-x)$. In this case, we have $2m+1\leq d_H(\bc_1,\bc_2)\leq  3m/2+1$, a contradiction.
\item If the pair has form $(\bu_2,\bu)$ with  $\bu=(b,b,a,b)^T$ or $(a,a,b,a)^T$, then the matrix is of form  $\bE_2(x) \bu_2\bE_1(3m-x-z)\bu\bE_2(z)$. We have the following system. 
\begin{equation*}
\begin{split}
\begin{cases} 2m+1 \leq    d_H(\bc_1,\bc_4) \leq x +\ceil{(3m-x-z)/2} + z  \\
2m+1\leq d_H(\bc_1,\bc_3) = 3m-x-z + 1
\end{cases}
\end{split}
\end{equation*}
However, there are no solutions to this system.
\item If the pair has  form $(\bu_2,\bu)$ with  $\bu=(a,b,b,b)^T$ or $(b,a,a,a)^T$, then the matrix has form  $\bE_2(3m-x) \bu_2\bE_1(x) \bu$ and $x$ is even. In this case, we have the follow system, which has no solutions.  
\begin{equation*}
\begin{split}
\begin{cases} 2m+1\leq   d_H(\bc_1,\bc_3) = x+1 \\
2m+1\leq d_H(\bc_3,\bc_4) = 3m-x+x/2
\end{cases}
\end{split}
\end{equation*}
\item If the pair has  form $(\bu_3,\bu)$ with  with $\bu=(b,a,b,b)^T$ or $(a,b,a,a)^T$, then the matrix has form $\bE_1(x) \bu_3\bE_2(3m-x-z) \bu\bE_1(z)$. In this case, we have the following system which has no solutions. 
\begin{equation*}
\begin{split}
\begin{cases} 2m+1\leq   d_H(\bc_1,\bc_4) = \floor{x/2} + 3m-x-z + \floor{z/2} \\
2m+1\leq d_H(\bc_1,\bc_3) = x+z+1 
\end{cases}
\end{split}
\end{equation*}
\item If the pair has form $(\bu_4, \bu)$ with $\bu=(a,b,b,b)^T$ or $(b,a,a,a)^T$, then the matrix has form $ \bu_4 \bE_1(3m) \bu$ and we have that $2m+1\leq d_H(\bc_1,\bc_2)\leq  3m/2$, a contradiction. 
\item If the pair has form $(\bu_4, \bu)$ with  $\bu=(b,b,a,b)^T$ or $(a,a,b,a)^T$, then the matrix has form   $\bu_4\bE_1(x) \bu\bE_2(3m-x)$ and $x$ is even. In this case, we have the following system. 
\begin{equation*}
\begin{split}
\begin{cases} 2m+1\leq   d_H(\bc_1,\bc_2) = {x/2} + 3m-x \\
2m+1\leq d_H(\bc_1,\bc_3) = x+1
\end{cases}
\end{split}
\end{equation*}
However, there is no solutions to this system. 
\end{enumerate}

So far we have shown that $P(3m+2,2m+1)\leq 3m+4$. Note that $(100)^m (1001)$ is a $(3m+2,2m+1)$-RPS of length $3m+4$. The conclusion follows.
\end{proof}

By  exhaustive search, some values of $P(n,d)$ for  $2\leq d\leq n \leq 13$ are determined, see Table~\ref{tab.exact}. The corresponding optimal RPSs can be found in the Appendix. 

\begin{table}
\centering
\renewcommand{\arraystretch}{1}
\caption{Some values of $P(n,d)$ for $2 \leq d  \leq n \leq 13$}
\label{tab.exact}
\small
\begin{tabular}{cc|cccccccccccc}
\hline 
    & $d$ &  $2$ & $3$ & $4$ & $5$ & $6$ & $7$ & $8$ & $9$ & $10$ & $11$ & $12$ & $13$\\
 $n$ & & \\
 \hline
  $2$ & &   $3$ \\
  $3$ &  &  $5$ & $4$  \\    
  $4$ &  &  $7$ &  $5$ & $5$ \\
  $5$ &  &  $14$ & $7$ & $6$ & $6$ \\
  $6$ &  &  $23$ & $12$ & $8$ & $7$ & $7$ \\
  $7$ &  &  $41$ & $20$ & $13$ & $8$ & $8$ & $8$\\
  $8$ &  &  $74$ & $25$ & $15$ & $10$ & $9$ & $9$ & $9$ \\
  $9$ &  &  $\geq 137$ & $39$ & $19$ & $13$ & $11$ & $10$ & $10$ & $10$ \\
$10$ & &  $\geq 220$ & $71$ & $31$ & $20$ & $14$ & $11$ & $11$ & $11$ & $11$ \\
$11$ & & $\geq 324$  &  $\geq 137$ & $41$ & $32$ & $21$ & $13$ & $12$ & $12$ & $12$ & $12$ \\
$12$ &  & $\geq 598$ & $\geq 141$ & $73$ & $37$ & $23$ & $15$ & $14$ & $13$ & $13$ &   $13$ &  $13$ \\
$13$ &  &    &   &  & $43$  & $38$ & $19$ & $16$ & $14$ & $14$ & $14$ & $14$  & $14$ \\
 \hline
\end{tabular}
\end{table}

\section{Asymptotically Optimal Positioning Sequences of Distance $n-1$}
\label{Sect:optimalqaryRPS}

In this section, we study positioning sequences with large distance. Here, we require the Singleton bound.

\begin{proposition}[Singleton Bound]\label{Singleton-bound}
For all $n,d$ and $q$, we have that $P_q(n,d)\leq q^{n-d+1}+n-1$.
\end{proposition}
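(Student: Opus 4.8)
The plan is to derive this directly from the classical Singleton bound for block codes, in the same spirit as the proof of Proposition~\ref{sphere-packing-bound}. Let $\bs = s_0 s_1 \cdots s_{N-1}$ be an $(n,d)_q$-RPS attaining the maximum length $N = P_q(n,d)$. First I would collect all of its length-$n$ contiguous windows into the set
\[
C \;=\; \{\, \bs[i,i+n-1] \;:\; 0 \le i \le N-n \,\} \subseteq \Sigma^n .
\]
By the defining property of an RPS, $d_H(\bs[i,i+n-1], \bs[j,j+n-1]) \ge d$ for all distinct $i,j \in \{0,1,\ldots,N-n\}$; since $d \ge 1$, this in particular forces these $N-n+1$ windows to be pairwise distinct. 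Hence $C$ is a $q$-ary code of length $n$, of size exactly $N-n+1$, and of minimum Hamming distance at least $d$.

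Next I would apply the classical Singleton bound to $C$. The coordinate-projection onto the first $n-d+1$ positions, $\pi\colon \Sigma^n \to \Sigma^{n-d+1}$, is injective on $C$: if two codewords of $C$ agreed on those $n-d+1$ positions, they could differ in at most $d-1$ of the remaining positions, contradicting the minimum distance $d$. Therefore $|C| \le |\Sigma^{n-d+1}| = q^{n-d+1}$, i.e.\ $N - n + 1 \le q^{n-d+1}$, and rearranging yields $P_q(n,d) = N \le q^{n-d+1} + n - 1$.

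There is essentially no obstacle here; the only point that needs a word of care is that the window set $C$ has size exactly $N-n+1$, which is immediate from the RPS condition together with $d \ge 1$. If one prefers not to invoke the classical Singleton bound as a black box, the one-sentence projection argument above can simply be inlined, exactly as the sphere-packing computation was inlined earlier in the paper.
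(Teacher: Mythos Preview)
Your argument is correct and is exactly the standard derivation of this bound. Note, however, that the paper itself does not supply a proof of this proposition: like the Sphere-packing Bound (Proposition~\ref{sphere-packing-bound}) and the Plotkin Bound (Proposition~\ref{Plotkin-bound}), it is simply stated as a known fact. So there is no paper proof to compare against; your write-up would serve perfectly well as the omitted justification.
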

We aim to construct sequences with length close to the bound above.  Let $n=qt+s$ with $s\in \bbracket{q}$, it is easy to check that the sequence $(012\cdots(q-1))^{t+1}01\cdots(s-1)$ is an $(n,n)_q$-RPS of length $q+n-1$. So for all $n$ and $q$, we have that $$P_q(n,n)=q+n-1.$$

Now we turn to the case of $d=n-1$. We focus on cyclic sequences with the robust positioning ability. Formally, a cyclic sequence $\bs$ is a {\it $q$-ary cyclic positioning sequence} of  {\it strength} $n$ and  {\it distance} $d$ if for any $0\leq i<j<N$, $d_H(\bs[i,i+n-1],\bs[j,j+n-1])\geq d$. We  denote  such  a  sequence  as $(n,d)_q$-CRPS. The maximum length of an $(n,d)_q$-CRPS is denoted by $P_q^\circ(n,d)$.
Obviously, an $(n,d)_q$-CRPS is also an $(n,d)_q$-RPS;  furthermore, we can obtain a slightly longer $(n,d)_q$-RPS from the $(n,d)_q$-CRPS.

\begin{proposition}
$P_q(n,d)\geq P_q^\circ(n,d)+n-1$.
\end{proposition}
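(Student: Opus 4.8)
The plan is to start from an optimal cyclic positioning sequence and ``unroll'' it into a linear one by repeating a short prefix. Concretely, let $\bs=s_0s_1\cdots s_{N-1}$ be an $(n,d)_q$-CRPS of maximum length $N=P_q^\circ(n,d)$, with indices read modulo $N$. Define the linear sequence
\[
\bs'\triangleq s_0s_1\cdots s_{N-1}s_0s_1\cdots s_{n-2},
\]
of length $N+n-1$. I would then show that $\bs'$ is an $(n,d)_q$-RPS, which immediately gives $P_q(n,d)\ge N+n-1=P_q^\circ(n,d)+n-1$. (The case $d=0$ is trivial, so I may assume $d\ge 1$ throughout.)

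The key observation is that the family of length-$n$ contiguous subwords of $\bs'$ coincides with the family of cyclic length-$n$ subwords of $\bs$. Writing $\bs'_k=s_{k\bmod N}$ for $0\le k\le N+n-2$, the length-$n$ subwords of $\bs'$ are $\bs'[i,i+n-1]$ for $i\in\bbracket{N}$; for such $i$ we have $i\bmod N=i$, and the index sequence $i,i+1,\ldots,i+n-1$ reduced modulo $N$ is precisely the one defining the cyclic window $\bs[i,i+n-1]$. The appended block $s_0\cdots s_{n-2}$ is exactly what is needed to realize the ``wrap-around'' windows $\bs[i,i+n-1]$ with $i>N-n$. Hence, for any $0\le i<j<N$,
\[
d_H(\bs'[i,i+n-1],\bs'[j,j+n-1])=d_H(\bs[i,i+n-1],\bs[j,j+n-1])\ge d
\]
by the defining property of an $(n,d)_q$-CRPS; since $d\ge1$ these $N$ windows are in particular pairwise distinct, so $\bs'$ is a genuine $(n,d)_q$-RPS of length $N+n-1$. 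Because $P_q(n,d)$ is the maximum length of an $(n,d)_q$-RPS, the existence of $\bs'$ yields $P_q(n,d)\ge P_q^\circ(n,d)+n-1$.

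There is essentially no hard step here: the argument is just ``unfold the cycle.'' The only point requiring a little care, and the one I would spell out in detail, is the index bookkeeping showing that repeating exactly the first $n-1$ symbols (not fewer, not more) faithfully reproduces every cyclic window of $\bs$ — equivalently, that the map $i\mapsto\bs'[i,i+n-1]$ for $i\in\bbracket{N}$ is a bijection onto the set of cyclic windows of $\bs$.
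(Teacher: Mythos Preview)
Your proof is correct and follows exactly the paper's approach: take an optimal $(n,d)_q$-CRPS $\bs$ and append its first $n-1$ symbols to obtain the linear sequence $\bs\,\bs[0,n-2]$, whose length-$n$ windows are precisely the cyclic windows of $\bs$. The paper states this in a single line without the index bookkeeping you spell out, but the construction is identical.
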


\begin{proof}
Let $\bs$ be an $(n,d)_q$-CRPS. Then the concatenation $\bs\bs[0,n-2]$ is an $(n,d)_q$-RPS.
\end{proof}

Let $\bs$ be a sequence of length $N$ over $\Sigma$. We say an ordered pair $(a,b)\in \Sigma^2$ {\it appears in $\bs$} if $\bs[i]=a$ and $\bs[j]=b$ for some $i,j\in \bbracket{N}$; furthermore, if $j\equiv i+\delta \pmod{N}$ for some $\delta \in \bbracket{N}$,  we say the pair $(a,b)$ {\it appears in $\bs$ with distance $\delta$}. We have the following  characterisation for the $(n,n-1)_q$-CRPS, the proof of which is straightforward and we omit here. 

\begin{proposition}\label{combnps} A sequence $\bs$ with alphabet $\Sigma$ is an  $(n,n-1)_{|\Sigma|}$-CRPS if and only if for each $\delta\in [1, n-1]$, every ordered pair $(a,b)\in \Sigma^2$ appears in $\bs$ with distance $\delta$ at most once.
\end{proposition}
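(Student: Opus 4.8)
The plan is to establish the equivalence by contraposition in both directions, reading all indices of $\bs$ modulo $N$ and writing $W_i\triangleq\bs[i,i+n-1]$ for the length-$n$ cyclic window starting at $s_i$. The only fact needed is that $W_i$ and $W_j$ agree in coordinate $k\in\bbracket{n}$ precisely when $\bs[i+k]=\bs[j+k]$, so that $\bs$ fails to be an $(n,n-1)_{|\Sigma|}$-CRPS exactly when there are $0\le i<j<N$ with $W_i$ and $W_j$ agreeing in at least two coordinates.

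For the forward direction I would start from such a pair $i<j$ together with coordinates $0\le k_1<k_2\le n-1$ satisfying $\bs[i+k_1]=\bs[j+k_1]$ and $\bs[i+k_2]=\bs[j+k_2]$. Setting $p\triangleq i+k_1$, $\delta\triangleq k_2-k_1$ and $\Delta\triangleq j-i$, and letting $a\triangleq\bs[p]$, $b\triangleq\bs[p+\delta]$, the two agreement equations become $\bs[p+\Delta]=a$ and $\bs[p+\Delta+\delta]=b$; hence the ordered pair $(a,b)$ occurs at distance $\delta$ both at position $p$ and at position $p+\Delta$. Since $k_1<k_2\le n-1$ gives $\delta\in[1,n-1]$ and $0\le i<j<N$ gives $\Delta\in[1,N-1]$ (so $p$ and $p+\Delta$ are genuinely distinct modulo $N$), the pair $(a,b)$ appears at least twice at distance $\delta\in[1,n-1]$. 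For the reverse direction I would take an ordered pair $(a,b)$ occurring at distance $\delta\in[1,n-1]$ at two distinct positions $p\ne p'$, i.e. $\bs[p]=\bs[p']=a$ and $\bs[p+\delta]=\bs[p'+\delta]=b$; then $W_p$ and $W_{p'}$ agree in coordinates $0$ and $\delta$, which are two distinct coordinates of a length-$n$ window because $1\le\delta\le n-1$, so $d_H(W_p,W_{p'})\le n-2<n-1$ and $\bs$ is not a CRPS. Together these give the stated characterization.

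I do not anticipate a real obstacle; the proof is essentially a matter of bookkeeping. The one point requiring care is keeping the two roles of a ``shift'' separate: $\Delta=j-i$ is the offset between the two windows, while $\delta=k_2-k_1$ is the distance at which the repeated pair occurs, and these are not equal in general. The two boundary checks --- that $\Delta\not\equiv0\pmod N$, so the two occurrences of $(a,b)$ are honestly distinct, and that $1\le\delta\le n-1$, so coordinate $\delta$ is a legitimate nonzero coordinate of a length-$n$ window --- both follow immediately from the ranges of $i,j,k_1,k_2$.
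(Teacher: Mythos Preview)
Your argument is correct and is exactly the straightforward verification one would expect; the paper itself declares the proof ``straightforward'' and omits it entirely, so there is no alternative approach to compare against. Your care in distinguishing the window offset $\Delta$ from the pair distance $\delta$, and in checking the two boundary conditions, makes the bookkeeping clean.
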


We borrow the idea of \cite{Loncetal:2016} to construct  $(n,n-1)_q$-CRPSs.

\vspace{2mm}
\noindent{\bf Construction 5}.
Let $p$ and $r$ be two primes such that $p,r  > n$  and   $r^2\geq p-1$. For each $d\in \F_p\backslash \{0\}$, construct a sequence $\bc_d$ over $\F_p$ as
$$\bc_d=(d,2d,\ldots,(p-1)d).$$
Denote $E=\bbracket{n}\times \F_r$. For each $(a,b) \in \F_r^2$, construct a sequence $\bs_{a,b}$ over $E$ as
$$\bs_{a,b}=((0,b), (1,a+b),\ldots, (n-1,(n-1)a+b)).$$
Since $r^2\geq p-1$, take $p-1$ sequences from the collection of $\bs_{a,b}$'s and relabel them as $\bs_1, \bs_2, \ldots, \bs_{p-1}$.

Let $\Sigma=\F_p\cup E$ and construct a sequence $\bcc$ over $\Sigma$ as  $$\bcc=\bc_1\bs_1\bc_2\bs_2\cdots\bc_{p-1}\bs_{p-1}.$$

\begin{theorem}\label{checkseq}
The sequence $\bcc$ is an $(n,n-1)_q$-CRPS with $q=p+nr$.
\end{theorem}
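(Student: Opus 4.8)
The plan is to invoke Proposition~\ref{combnps}: it suffices to fix a distance $\delta\in[1,n-1]$ and prove that the map $i\mapsto(\bcc[i],\bcc[i+\delta])$, with indices read modulo the length $N:=(p-1)(p-1+n)$ of $\bcc$, is injective. First I would record three structural facts. (a) The alphabet is a disjoint union $\Sigma=\F_p\cup E$ with $|\F_p|=p$ and $|E|=nr$, so $|\Sigma|=p+nr=q$ as asserted; moreover every symbol occurring inside a block $\bc_d$ lies in $\F_p$, and every symbol occurring inside a block $\bs_{a,b}$ lies in $E$, so the pair recording whether $\bcc[i]$ and $\bcc[i+\delta]$ each lie in $\F_p$ or in $E$ is already determined by the observed pair, and I may treat the four type-patterns separately. (b) Each $\bc_d$ is the sequence $k\mapsto kd$ for $k=1,\dots,p-1$, a bijection onto $\F_p^{*}$ (so the symbol $0\in\F_p$ never occurs), while $\bs_{a,b}$ is the sequence $t\mapsto(t,ta+b)$, whose symbols carry their own in-block index $t$ as their first coordinate. (c) Since $\delta<n\le p-1$ and every block has length at least $n$, a window of $\delta+1$ consecutive cells meets at most two consecutive blocks, hence crosses at most one block boundary, which once the types are known is either a $\bc_\mu$--$\bs_\mu$ boundary or an $\bs_\mu$--$\bc_{\mu+1}$ boundary, read cyclically.

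In the pattern $(\F_p,\F_p)$ both cells lie in one block $\bc_\mu$, at in-block indices $k$ and $k+\delta$ with $k+\delta\le p-1$; then $\bcc[i+\delta]-\bcc[i]=\delta\mu$ and $\bcc[i]=k\mu$ in $\F_p$, and since $0<\delta<n<p$ we may divide by $\delta$ to get $\mu$, then $k=\bcc[i]/\mu$, hence $i$. In the pattern $(E,E)$ both cells lie in one selected block $\bs_{a,b}$, at in-block indices $t$ and $t+\delta$ with $t+\delta\le n-1$; here $t$ is read off directly as the first coordinate of $\bcc[i]$, the difference of the second coordinates equals $\delta a$ in $\F_r$ with $0<\delta<n<r$, so $a$ is recovered, and then $b$; since $\bs_1,\dots,\bs_{p-1}$ are pairwise distinct and $(a,b)\mapsto\bs_{a,b}$ is injective, the block, and hence $i$, is pinned down.

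In the pattern $(\F_p,E)$ the window crosses a $\bc_\mu$--$\bs_\mu$ boundary: if $i$ is at in-block index $k$ in $\bc_\mu$ and $i+\delta$ at in-block index $t$ in $\bs_\mu$, then $t$ is the first coordinate of $\bcc[i+\delta]$ and the boundary forces $k=t+p-\delta\in\{1,\dots,p-1\}$, so from $\bcc[i]=k\mu\ne0$ we recover $\mu=\bcc[i]/k$, which locates $i$. The pattern $(E,\F_p)$ is symmetric: $i$ is at in-block index $s$ in some $\bs_\mu$ and $i+\delta$ at in-block index $k'$ in the following $\bc$-block $\bc_d$ (with $d=\mu+1$, or $d=1$ if $\mu=p-1$); reading $s$ off the first coordinate of $\bcc[i]$ forces $k'=s+\delta-n+1\in\{1,\dots,\delta\}$, so $d=\bcc[i+\delta]/k'$, which locates $i+\delta$ and hence $i$. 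In every pattern the pair determines $i$, so the map is injective, and Proposition~\ref{combnps} gives that $\bcc$ is an $(n,n-1)_q$-CRPS with $q=p+nr$.

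I expect the main obstacle to be the index bookkeeping at the two boundaries --- verifying the identities $k=t+p-\delta$ and $k'=s+\delta-n+1$ and the ranges guaranteeing $k,k'\not\equiv0\pmod p$ --- together with checking that $\delta$ is invertible modulo both $p$ and $r$, which is exactly where the hypotheses $p,r>n$ enter. The requirements that $p,r$ be prime and that $r^2\ge p-1$ are needed only so that the construction is well defined ($\F_p,\F_r$ are fields, and there are at least $p-1$ sequences $\bs_{a,b}$ to choose from). I would organize the write-up around the four type-patterns, since the pattern is visible from the observed pair and cleanly separates the two boundary-free cases from the two boundary-crossing ones.
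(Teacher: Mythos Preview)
Your proposal is correct and follows essentially the same route as the paper: invoke Proposition~\ref{combnps} and split into the four type-patterns $(\F_p,\F_p)$, $(E,E)$, $(\F_p,E)$, $(E,\F_p)$ according to which alphabet each coordinate of the observed pair lies in. Your index formulas $k=t+p-\delta$ and $k'=s+\delta-n+1$ make explicit what the paper's Cases~3 and~4 handle more tersely (the paper observes only that the first coordinate of an $E$-symbol pins down its in-block position, forcing the $\F_p$-symbol to occur at a common position $\ell$ in both candidate $\bc$-blocks, whence $\ell d=\ell d'$ gives $d=d'$), but the substance is the same.
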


\begin{proof}
According to Proposition~\ref{combnps}, we only need to show that for any $1\leq \delta \leq n-1$, every ordered pair in $\Sigma^2$ appears in $\bcc$ with distance $\delta$ at most once.

{\bf Case 1.} We first consider the pairs in $\F_p^2$. Suppose that the pair $(\alpha, \beta)\in \F_p^2$ appears in $\bcc$  with distance $\delta$ twice.  Since each $\bs_{a,b}$ is a length-$n$ sequence over $E$ and $E \cap \F_p=\emptyset$,   $(\alpha, \beta)$ must appear in the subsequences $\bc_d$ and $\bc_{d'}$ for some $d,d'\in[1,p-1]$. Then we may assume that $id=i'd'=\alpha$ and $(i+\delta)d=(i'+\delta)d'=\beta$, where $i,i'\in [1,p-1]$ and $i\not=i'$ if $d=d'$. It follows that $\beta-\alpha=\delta d=\delta d'$. Noting that $0< \delta <n <p$, we have $d=d'$. This in turn implies that $i=i'$ since $id=i'd'$, a contradiction. Thus for each $1\leq \delta \leq n-1$, every pair in $\F_p^2$ appears in $\bcc$  with distance $\delta$ at most once. 

{\bf Case 2.} Then we  consider the pairs in $E^2$. Suppose that the pair $((i, \alpha), (j,\beta))\in E^2$ appears in $\bcc$  with distance $\delta$ twice.  Since each $\bc_{d}$ is a length-$(p-1)$ sequence over $\F_p$ with $p-1\geq n$ and $E\cap F_p=\emptyset$,  we may assume that $j=i+\delta$, $ia+b=ia'+b'=\alpha$ and $ja+b=ja'+b'=\beta$, where  $a,a', b,b'\in \F_r$ and $(a,b)\not=(a',b')$.  It follows that $\beta-\alpha=\delta a=\delta a'$ and then $a=a'$. This leads to $b=b'$ as $ia+b=ia'+b'=\alpha$, a contradiction. Thus for each $1\leq \delta \leq n-1$, every pair in $E^2$ appears in $\bcc$  with distance $\delta$ at most once. 

{\bf Case 3.} For the pair $(\alpha, (i,\beta))$ in $F_p\times E$, suppose that it appears in $\bcc$ with distance $\delta$ twice for some $\delta < n$, then it appears in the subsequences $\bc_d\bs_d$ and $\bc_{d'}\bs_{d'}$ for some $d,d'\in [1,p-1]$. Since the symbol $(i,\beta)$  can only appear at the $i$-th position of $\bs_d$ and $\bs_{d'}$, the two sequences $\bc_d$ and $\bc_{d'}$ must have the same symbol $\alpha$ at some position $\ell$, i.e. $\ell d= \ell d'$, for some $\ell \in [1,p-1]$. It follows that $d=d'$, a contradiction. 

{\bf Case 4.} For the pair $((i,\alpha),\beta)$ in $E\times F_p$, with the same argument as in Case 3, we can show that it appears in $\bcc$ with distance $\delta$ at most once. 
\end{proof}

\begin{corollary}
Let $n=\lfloor  cq^\alpha \rfloor$, where $c$ and $\alpha$ are real numbers such that $c>0$ and $0\leq \alpha <\frac{1}{2}$. Then

\begin{equation*}
P_q^\circ(n,n-1) \geq q^2 - o(q^2).
\end{equation*}
\end{corollary}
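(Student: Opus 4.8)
The plan is to instantiate Construction~5 with a carefully chosen pair of primes and then appeal to the monotonicity of $P_q^\circ(n,n-1)$ in the alphabet size. The first thing I would record is that if $Q\le q$ then $P_q^\circ(n,n-1)\ge P_Q^\circ(n,n-1)$: any $(n,n-1)_Q$-CRPS over an alphabet $\Sigma$ of size $Q$ remains an $(n,n-1)_q$-CRPS once $\Sigma$ is embedded into a $q$-letter alphabet, since the defining distance condition is untouched. So it suffices to produce, for every sufficiently large $q$, primes $p$ and $r$ with $p,r>n$, $r^2\ge p-1$, and $p+nr\le q$, for which $p=q-o(q)$. Theorem~\ref{checkseq} then yields an $(n,n-1)_{p+nr}$-CRPS $\bcc$ whose length is $(p-1)\bigl((p-1)+n\bigr)$ --- recall $\bcc$ is a concatenation of the $p-1$ blocks $\bc_d\bs_d$, each of length $(p-1)+n$ --- and this is at least $(p-1)^2=\bigl(q-o(q)\bigr)^2=q^2-o(q^2)$.

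For the choice of primes, I would first take $r$ to be a prime lying just above $\sqrt q$, say $r\in[\sqrt q,\, \sqrt q+q^{0.3}]$; such a prime exists for all large $q$ by Lemma~\ref{prime}, since $x^{0.525}<q^{0.3}$ whenever $x=\Theta(\sqrt q)$ and $q$ is large. This gives $r=(1+o(1))\sqrt q$ and $r^2\ge q$. Because $n=\lfloor cq^\alpha\rfloor\le cq^\alpha$ with $\alpha<1/2$, the product satisfies $nr=O(q^{\alpha+1/2})=o(q)$, so $q-nr=(1-o(1))q$. Next I would take $p$ to be the largest prime not exceeding $q-nr$; Lemma~\ref{prime} gives $p\ge (q-nr)-(q-nr)^{0.525}$, hence $p=q-nr-o(q)=q-o(q)$.

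It then remains to verify the hypotheses of Theorem~\ref{checkseq} and assemble the estimate. For $q$ large, $r\ge\sqrt q>cq^\alpha\ge n$ and $p=(1+o(1))q>n$, so $p,r>n$; furthermore $p-1<p\le q-nr<q\le r^2$, so $r^2\ge p-1$; and $p+nr\le(q-nr)+nr=q$. Hence Theorem~\ref{checkseq} applies with $Q=p+nr\le q$, and combining with the monotonicity observation,
\[
P_q^\circ(n,n-1)\ \ge\ P_Q^\circ(n,n-1)\ \ge\ (p-1)\bigl((p-1)+n\bigr)\ \ge\ (p-1)^2\ =\ q^2-o(q^2).
\]

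The calculations above are all routine; the only external ingredient is the prime-gap bound of Lemma~\ref{prime} (the prime number theorem would do just as well). The one point that genuinely needs care --- and the reason the hypothesis $\alpha<1/2$ is present --- is that $r$ must be large enough that $r^2\ge p-1\approx q$, which forces $r\gtrsim\sqrt q$, yet small enough that the wasted length $nr$ is negligible next to $q$; both can hold simultaneously exactly when $nr=o(q)$, i.e., when $n=o(\sqrt q)$.
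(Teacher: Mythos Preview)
Your proof is correct and follows essentially the same strategy as the paper: instantiate Construction~5 with primes $p,r$ satisfying $p,r>n$, $r^2\ge p-1$, and $p+nr\le q$, chosen so that $p=q-o(q)$. The only cosmetic difference is the order of selection --- the paper first fixes $p$ near $x=(\sqrt{q+n^2}-n)^2$ via Lemma~\ref{prime} and then picks $r$ with $p-1\le r^2\le 4p$ via Bertrand, using the algebraic identity $x+2n\sqrt{x}\le q$ to close the loop, whereas you fix $r$ just above $\sqrt q$ first and then take $p$ just below $q-nr$; your ordering is slightly more direct and avoids the auxiliary quantity $x$, but the substance is identical.
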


\begin{proof}
Set $x = \left(\sqrt{q+n^2}-n\right)^2$. Then $x=q-2n\left(\sqrt{q+n^2}-n\right)=q-O\left(q^{\frac{1}{2}+\alpha}\right)$. According to Lemma~\ref{prime}, for sufficient large $q$,  we can choose a prime $p$ such that $x-x^\theta \leq  p\leq x$. Furthermore, according to   Bertrand's postulate,   we can choose a prime $r$ such that $p-1\leq  r^2\leq 4p$. Then $p=\Theta(q)$ and $r=\Theta(p^{1/2})=\Theta(q^{1/2})$. So we have $p, r^2 > n$ and we can apply Construction~5 to obtain an $(n,n-1)_{p+nr}$-CRPS, where

\begin{equation}\label{eqation}
p+nr\leq p+2n\sqrt{p}\leq  \left(\sqrt{q+n^2}-n\right)^2+2n\left(\sqrt{q+n^2}-n\right)= q.
\end{equation}

The length of $\bcc$ is
\begin{align*}
(p-1)(p-1+n) &\geq p^2-2p= (x-x^\theta)^2-2x  \geq x^2-O(x^{1+\theta}) \\
  & \geq q^2-O(q^{\frac{3}{2}+\alpha}) -O(q^{1+\theta})\geq q^2-o(q^2).
\end{align*}

\end{proof}

\begin{corollary}
Let $q=\Omega(n^{2+\epsilon})$ for some $\epsilon>0$. Then

\begin{equation*}
q^2 +n-1- o(q^2) \leq P_q(n,n-1) \leq q^2+n-1.
\end{equation*}
\end{corollary}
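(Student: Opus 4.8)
The plan is to prove the two inequalities of the corollary separately. The upper bound is immediate: applying the Singleton bound (Proposition~\ref{Singleton-bound}) with $d=n-1$ gives $P_q(n,n-1)\le q^{n-(n-1)+1}+n-1=q^2+n-1$, and this holds regardless of how $q$ and $n$ are related.

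For the lower bound, the first step is to observe that the hypothesis $q=\Omega(n^{2+\epsilon})$ is equivalent to $n=O\!\left(q^{1/(2+\epsilon)}\right)$; that is, there is a constant $c>0$ with $n\le c\,q^{\alpha}$ for all large $n$, where $\alpha=\frac{1}{2+\epsilon}<\frac12$. This places us in the regime of the preceding corollary (the one asserting $P_q^\circ(n,n-1)\ge q^2-o(q^2)$), except that we have the inequality $n\le c\,q^\alpha$ in place of the exact relation $n=\lfloor c\,q^\alpha\rfloor$ stated there. I would therefore revisit that corollary's proof and check that every estimate in it uses $n$ only through an upper bound of the form $n=O(q^\alpha)$ with $\alpha<\frac12$: setting $x=\bigl(\sqrt{q+n^2}-n\bigr)^2$ one has $\sqrt{q+n^2}-n\le\sqrt q$, hence $x=q-O(n\sqrt q)=q-o(q)$; Lemma~\ref{prime} then supplies a prime $p\in[x-x^\theta,x]$ and Bertrand's postulate a prime $r$ with $p-1\le r^2\le 4p$, so that $p,r^2=\Theta(q)$ are both $\gg n$ and Construction~5 applies; its output is an $(n,n-1)_{p+nr}$-CRPS with $p+nr\le q$ and length $(p-1)(p-1+n)\ge q^2-O(q^{3/2+\alpha})-O(q^{1+\theta})=q^2-o(q^2)$. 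Hence $P_q^\circ(n,n-1)\ge q^2-o(q^2)$ under the present hypothesis as well.

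Finally I would combine this with the proposition $P_q(n,d)\ge P_q^\circ(n,d)+n-1$ to conclude $P_q(n,n-1)\ge q^2+n-1-o(q^2)$, which together with the upper bound finishes the proof. The only point that needs any care — and it is a minor one — is confirming that the asymptotics in the proof of the preceding corollary are driven purely by upper bounds on $n$ (so that replacing $n=\lfloor c\,q^\alpha\rfloor$ by $n\le c\,q^\alpha$ changes nothing), and that $3/2+\alpha<2$ and $1+\theta<2$ so that the error terms are genuinely $o(q^2)$; there is no substantive obstacle beyond this bookkeeping.
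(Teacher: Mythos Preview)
Your proposal is correct and follows exactly the approach the paper intends: the upper bound is the Singleton bound, and the lower bound combines the preceding corollary on $P_q^\circ(n,n-1)$ with the proposition $P_q(n,d)\ge P_q^\circ(n,d)+n-1$. Your observation that the hypothesis $q=\Omega(n^{2+\epsilon})$ translates to $n\le c\,q^{\alpha}$ with $\alpha=\tfrac{1}{2+\epsilon}<\tfrac12$, and that the proof of the preceding corollary uses $n$ only through such an upper bound, is the right way to bridge the slight mismatch in hypotheses; the paper itself leaves this corollary unproved, so you have supplied precisely the intended details.
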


\section{Conclusion}
We construct binary positioning patterns, equipped with efficient locating algorithms, that are robust to a constant number of errors. Our strategy is based on $d$-auto-cyclic vectors, Reed-Solomon codes and Gray codes, and we reduce the number of redundancies as compared to previous constructions. In the locating algorithms, the $d$-auto-cyclic vectors are used as markers to locate the relative position. This information, together with the property of Gray codes, allows one to leverage the well-known fast decoding of Reed-Solomon codes to quickly identify the location.

\section*{Appendix}
Here we list optimal $(n,d)$-RPSs for  $n\leq 13$ and $2\leq d < \floor*{2n/3}$, as well as $(n,d)\in\{(4,2),(7,4),(10,6)\}$.
{\small

$(4,2)$-RPS:

$0001000$

$(5,2)$-RPS:

$00010111010001$

$(6,2)$-RPS:

$01001110010000101101010$

$(6,3)$-RPS:

$000101100010$

$(7,2)$-RPS:

$00101001101011001000111011100001011111010$

$(7,3)$-RPS:

$00001001111011000010$

$(7,4)$-RPS:

$0001011000101$

$(8,2)$-RPS:

$00100101010010011001000000100011010001011110101110110111000011100111110010$

$(8,3)$-RPS:

$0001000111101110100101000$

$(8,4)$-RPS:

$000010110000101$

$(9,3)$-RPS:

$000001000111010100101111001101100000100$

$(9,4)$-RPS:

$0001001011100010010$

$(9,5)$-RPS:

$0001101001110$

$(10,3)$-RPS:

$00000010001101101011001111000101111110111001001010011000011101000000100$

$(10,4)$-RPS:

$0000100100011110110111000010010$

$(10,5)$-RPS:

$00010010111000100101$

$(10,6)$-RPS:

$00011010110001$

$(11,4)$-RPS:

$00000100110001111001010110111010000010011$

$(11,5)$-RPS:

$00001001000111101101110000100100$

$(11,6)$-RPS:

$000100101110001001011$

$(12,4)$-RPS:

$0000001000110110101100111100010111111011100100101001100001110100000010001$

$(12,5)$-RPS:

$0000010101100111110101001100000101011$

$(12,6)$-RPS:

$00000110101100000110101$

$(12,7)$-RPS:

$000101001100111$

$(13,5)$-RPS:

$0000010011000111100101011011101000001001100$

$(13,6)$-RPS:

$00000101011001111101010011000001010110$

$(13,7)$-RPS:

$0001011000101100010$

}

\end{document}